\title{Abstract Voronoi-like Graphs: Extending Delaunay's Theorem and
  Applications}
\author{Evanthia Papadopoulou}{Faculty of Informatics, Universit\`a
  della Svizzera italiana, {Lugano,
    Switzerland}}{evanthia.papadopoulou@usi.ch}{https://orcid.org/0000-0003-0144-7384}{}
\authorrunning{E. Papadopoulou} 
\newcommand{\deleted}[1]{}
\newcommand{\etal}{{~et~al.~}}
\newcommand{\J}{\mathcal{J}} 		
\newcommand{\V}{\mathcal{V}}
\newcommand{\VR}{\mbox{VR}}
\newcommand{\arr}{\mathcal{A}}
\newcommand{\vld}{\mathcal{V}_l}
\newcommand{\freg}{\mbox{FVR}}
\newcommand{\cdt}{\text{CDT}}
\begin{document}

\maketitle

\begin{abstract}
  Any system of bisectors (in the sense of abstract Voronoi diagrams)
  defines an arrangement of simple curves in the plane.  We define
  \emph{Voronoi-like graphs} on such an arrangement, which are graphs
  whose vertices are \emph{locally Voronoi}.  A vertex $v$ is called
  locally Voronoi, if $v$ and its incident edges appear in the Voronoi
  diagram of three sites.
  In a so-called admissible bisector system, where Voronoi regions are
  connected and cover the plane, we prove that any Voronoi-like graph
  is indeed an abstract Voronoi diagram.
  The result can be seen as an abstract dual version of Delaunay’s
  theorem on (locally) empty circles.

  Further, we define Voronoi-like cycles in an admissible bisector
  system, and show that the Voronoi-like graph induced by such a cycle
  $C$ is a unique tree (or a forest, if $C$ is unbounded). In the
  special case where $C$ is the boundary of an abstract Voronoi region,
  the induced Voronoi-like graph can be computed in expected linear time
  following the technique of [Junginger and Papadopoulou SOCG'18].
  Otherwise,  within the same time, 
  the algorithm constructs the Voronoi-like graph of a 
  cycle $C'$ on the same set (or subset) of sites,
  which may equal $C$ or be enclosed by $C$.
  Overall, the technique  computes abstract 
  Voronoi (or Voronoi-like) trees and forests
  in linear expected time, 
  given the order of their leaves along a Voronoi-like cycle.
  We show a
  direct application in updating a constraint Delaunay triangulation in
  linear expected time, after the insertion of a new segment constraint,
  simplifying upon the result of [Shewchuk and Brown CGTA 2015].


  \keywords{Voronoi-like graph, \and abstract Voronoi diagram, \and
    Delaunay's theorem, \and Voronoi tree \and linear-time randomized
    algorithm, \and constraint Delaunay triangulation}

\end{abstract}


\section{Introduction}
\label{sec:intro}

Delaunay’s theorem~\cite{Del34} is a well-known 
cornerstone 
in Computational Geometry:
given a set of points, a triangulation 
is \emph{globally Delaunay} if and only if it is 
\emph{locally Delaunay}.
A triangulation edge is called \emph{locally Delaunay} if it is incident to
only one triangle, or it is incident to two triangles, and
appears in the Delaunay triangulation of the four related
vertices.
The Voronoi diagram  and the Delaunay triangulation of a point set
are dual to each other.
These two 
highly influential and  versatile structures are often 
used and computed interchangeably; see the book of
Aurenhammer\etal\cite{aurenbook} for  
extensive information. 

Let us pose the following question:
how does Delaunay's
theorem extend to Voronoi diagrams of generalized (not necessarily
point) sites?
We are interested in simple geometric objects such as line
segments,  polygons, disks, or point clusters, as they often appear
in application areas, and answering 
this question is intimately related to
efficient construction algorithms for Voronoi diagrams (or
their duals) 
on these objects.
Here we consider this question in the framework of abstract Voronoi
diagrams~\cite{K89} so that we can simultaneously answer it for various
concrete and fundamental cases under their umbrella. 

Although Voronoi diagrams and Delaunay triangulations of
point sites have been 
widely
used in many fields of science, being available in most
software libraries of commonly used programming
languages,  practice has  not been the same for their counterparts
of simple 
geometric
objects. 
In fact it is surprising that certain related questions may have remained open or
non-optimally solved. 
Edelsbrunner and Seidel~\cite{ES86} defined
Voronoi diagrams as lower envelopes of distance functions in a space one dimension
higher, making a powerful link 
to arrangements, which made their rich combinatorial and algorithmic results
applicable, e.g., ~\cite{sharir95}.
However, there are different levels of difficulty concerning
arrangements of planes versus more general surfaces,
which play a role, especially in practice.

In this paper we define Voronoi-like graphs based on 
local information, inspired by Delaunay's theorem.
Following the framework of abstract Voronoi diagrams (AVDs)~\cite{K89},
let $S$ be a set of $n$  abstract sites (a set of indices)
        and $\J$ be their  underlying system of bisectors, which satisfies
        some simple combinatorial properties (see Sections~\ref{sec:prels},~\ref{sec:vld}).
Consider a graph $G$ on the arrangement of the bisector system 
possibly truncated within a simply connected domain $D$.
The vertices of $G$ are vertices of the bisector arrangement,
its leaves lie on
the boundary $\partial D$,
and the edges are  maximal bisector arcs 
connecting pairs of vertices.
A vertex $v$ in $G$ 
is called \emph{locally Voronoi}, if $v$ and its incident edges within
a small neighborhood around $v$ appear in the Voronoi diagram
of the three sites defining $v$ (Def.~\ref{def:local-VV}), see Figure~\ref{fig:local-VV}.
%
%
The graph $G$ is called \emph{Voronoi-like}, if its vertices (other than its leaves on $\partial
D$) are locally Voronoi vertices (Def.~\ref{def:vld}), see Figure~\ref{fig:graph-g}.
If the graph $G$ is a simple cycle on the arrangement of bisectors
related to one site $p$ and its vertices are locally
Voronoi  of degree~2, then it is 
called a \emph{Voronoi-like cycle}, for brevity a \emph{site-cycle}
(Def.~\ref{def:cycle}). 

A major difference between  points in the
Euclidean plane, versus non-points, such as line segments, disks, or
AVDs, can be immediately pointed out: 
in the  former case the bisector system is a line arrangement, while
in the latter, the bisecting curves are not even pseudolines.
On a line arrangement, it is not hard to see that a Voronoi-like
graph coincides with the Voronoi diagram of the involved sites:
any Voronoi-like cycle is a convex polygon, which is in fact a
Voronoi region in the Voronoi diagram of a subset of sites.
But in the arrangement of an abstract bisector system,  many different
Voronoi-like  cycles can exist for the
same set of sites (see, e.g., Figure~\ref{fig:p-cycle-2}).
Whether a Voronoi-like graph  corresponds to a Voronoi
diagram is not immediately clear.

In this paper we show that a Voronoi-like graph on the arrangement of an abstract
bisector system is as close as possible to being an abstract Voronoi
diagram, subject to, perhaps,   
\emph{missing} some faces (see Def.~\ref{def:missing});
if the graph misses no face, then it is a Voronoi
diagram. 
 Thus, in the classic AVD model~\cite{K89}, where abstract Voronoi regions are
 connected and cover the plane, any Voronoi-like graph 
 is indeed an abstract Voronoi diagram.
This result can be seen as an abstract dual version of Delaunay’s
theorem. 

Voronoi-like graphs (and their duals) can be very useful structures to
hold partial Voronoi information, either when dealing
with disconnected Voronoi regions, or when considering 
partial information concerning some region.
Building a Voronoi-like graph of partial information may be
far easier and faster than constructing the full diagram.
In some cases the full diagram may even be undesirable, 
as in the example of Section~\ref{sec:CDT}
in updating a constrained Delaunay triangulation.
%

The term \emph{Voronoi-like diagram} was first used, in a restricted
sence, by 
Junginger and Papadopoulou~\cite{JP18}, defining it as a tree (occasionally a forest)
that subdivided a planar region enclosed by a so-called
\emph{boundary curve} 
defined on a subset of Voronoi edges.
Their Voronoi-like diagram was then used as an intermediate
structure
to perform deletion in an abstract Voronoi diagram in linear expected
time.
In this paper the formulation of a Voronoi-like graph
is entirely different;
we nevertheless prove  that the Voronoi-like diagram
of~\cite{JP18} remains 
a special case of the one 
in this paper.
We thus use the results of~\cite{JP18} 
when applicable, and extend them to Voronoi-like cycles in an
admissible bisector system.

In the remainder of this section we consider an \emph{admissible} bisector
system $\J$ following the classic AVD model~\cite{K89}, where bisectors
are unbounded simple curves and Voronoi regions are connected.
To avoid issues with infinity, we asume a large Jordan curve $\Gamma$
(e.g, a circle)
bounding the computation domain, which is large
enough to enclose any bisector intersection.
In the sequel, we list further results, which are obtained 
in this paper under this model.
%

We consider a Voronoi-like cycle $C$ on the arrangement of bisectors
$\J_p\subseteq \J \cup \Gamma$, which are \emph{related} to a site $p\in S$.
Let $S_C\subseteq S\setminus \{p\}$ be the set of sites that
(together with $p$) contribute to the bisector arcs in $C$.
The cycle $C$ encodes a \emph{sequence of site
  occurrences} from  $S_C$.
We define the Voronoi-like graph $\vld(C)$, 
which can be thought as a Voronoi diagram of \emph{site
  occurrences}, instead of \emph{sites}, whose order is represented  by $C$.
We prove that $\vld(C)$ is a tree, or a forest if $C$ is unbounded,
and  it 
exists for any Voronoi-like cycle $C$.
The uniqueness of $\vld(C)$ can be inferred from the results in
\cite{JP18}.
The same properties can be extended
to Voronoi-like graphs of cycles
related to a set $P$ of $k$ sites.

We then  consider the randomized incremental construction of
\cite{JP18},
and apply it to a Voronoi-like cycle in linear expected
time.
If $C$ is the boundary of a Voronoi region then
$\vld(C)$, which is
the part of the abstract Voronoi diagram $\V(S_C)$, truncated by $C$,
can be computed in expected linear time (this has been previously shown~\cite{JP18,JP20arxiv}).
Otherwise, within the same time, the Voronoi-like graph of a (possibly
different) Voronoi-like cycle $C’$,  enclosed by $C$, is computed by
essentially the same algorithm. 
We give conditions under which we can force the randomized algorithm
to compute $\vld(C)$, if desirable, without hurting its
expected-linear time complexity,
using deletion~\cite{JP18} as a subroutine.
The overall technique follows the randomized linear-time
paradigm of Chew~\cite{Chew90}, originally given to compute the Voronoi
diagram of points in convex position.
The generalization of Chew's technique can potentially be used to convert
algorithms working on point sites, which use it,  to counterparts
involving non-point sites that fall under the umbrella
of abstract Voronoi diagrams.



Finally, we give a direct application
for computing the Voronoi-like
graph of a site-cycle in linear expected
time, when updating a constrained Delaunay
triangulation upon insertion of a new line segment, simplifying upon the corresponding result of Shewchuk and
Brown\cite{SB15}.
The resulting algorithm is extremely simple. By modeling the problem
as computing the dual of a Voronoi-like graph, given a Voronoi-like
cycle (which is not a Voronoi region's boundary), the algorithmic description becomes almost trivial and explains
the technicalities, such as self-intersecting subpolygons, that are
listed by Shewchuk and Brown.

The overall technique computes abstract Voronoi, or 
Voronoi-like, trees and forests in linear expected time, given the
order of their leaves along a Voronoi-like cycle. 
In an extended paper, we also give simple conditions under which the
cycle $C$
is an arbitrary Jordan curve of constant complexity, along
which the ordering of Voronoi regions is known.
%

\section{Preliminaries and definitions}
\label{sec:prels}
We follow the framework of abstract Voronoi diagrams (AVDs), which have been defined by Klein~\cite{K89}.
Let  $S$ be a set of $n$  abstract sites (a set of indices)
        and $\J$ be an underlying system of bisectors that satisfy
        some simple combinatorial properties (some axioms).  
        The bisector $J(p,q)$  of two sites $p,q \in S$ is a simple curve
        that
        subdivides the plane into two open domains:
        the \emph{dominance region of $p$}, $D(p,q)$,
	having label $p$,
	and the \emph{dominance region of $q$}, $D(q,p)$,
	having label $q$. 
%

        The \emph{Voronoi region} of site $p$ is 
        
	\[\VR(p,S) = \bigcap_{q \in S \setminus \{p\}}  D(p,q).\]
	\noindent
	The \emph{Voronoi diagram} of $S$ is
	$\V(S) = \mathbb{R}^2\setminus \bigcup_{p \in S}\VR(p, S)$.
        The vertices and the edges of $\V(S)$ are called \emph{Voronoi vertices}
        and 
        \emph{Voronoi edges}, respectively.

        Variants of abstract Voronoi diagrams
        of different degrees of generalization
        have been
        proposed, see e.g., \cite{KLN09, BKL17}. 
 %
        Following the original formulation by Klein~\cite{K89}, the bisector system  $\J$  is called  \emph{admissible}, if it satisfies the following axioms,
	for every subset $S' \subseteq {S}$:

	\begin{enumerate}
		\item[(A1)] Each Voronoi region $\VR(p, S')$ is
                  non-empty and pathwise connected. 
		\item[(A2)] Each point in the plane belongs to the
                  closure of a Voronoi region $\VR(p, S')$.
                  \item[(A3)] Each bisector is an unbounded simple
                    curve homeomorphic to a line.
		\item[(A4)] Any two bisectors intersect
                  transversally and in a finite number of points.

                \end{enumerate}

Under these axioms, the abstract Voronoi diagram $\V(S)$ is a planar graph
of complexity $O(n)$, which can be computed in 
$O(n\log n)$  time, randomized
\cite{KMM93} or  deterministic~\cite{K89}.

       To avoid dealing with infinity,  we
       assume that $\V(S)$ is truncated within a domain $D_\Gamma$
       enclosed by a large Jordan curve
       $\Gamma$ (e.g., a circle or a rectangle)
       such that all bisector intersections are contained in $D_\Gamma$.
	Each bisector crosses $\Gamma$ exactly  twice and
        transversally.
        All Voronoi regions are assumed 
        to be  truncated by $\Gamma$, and thus, lie  within the domain
        $D_\Gamma$.

        We make a general position assumption that no three 
        bisectors involving one common site intersect at the same point,
        that is, all vertices in the arrangement of the 
        bisector system $\J$ have degree 6, and Voronoi vertices
        have degree 3.

  \begin{figure}
		\centering
		\includegraphics[scale=0.25]{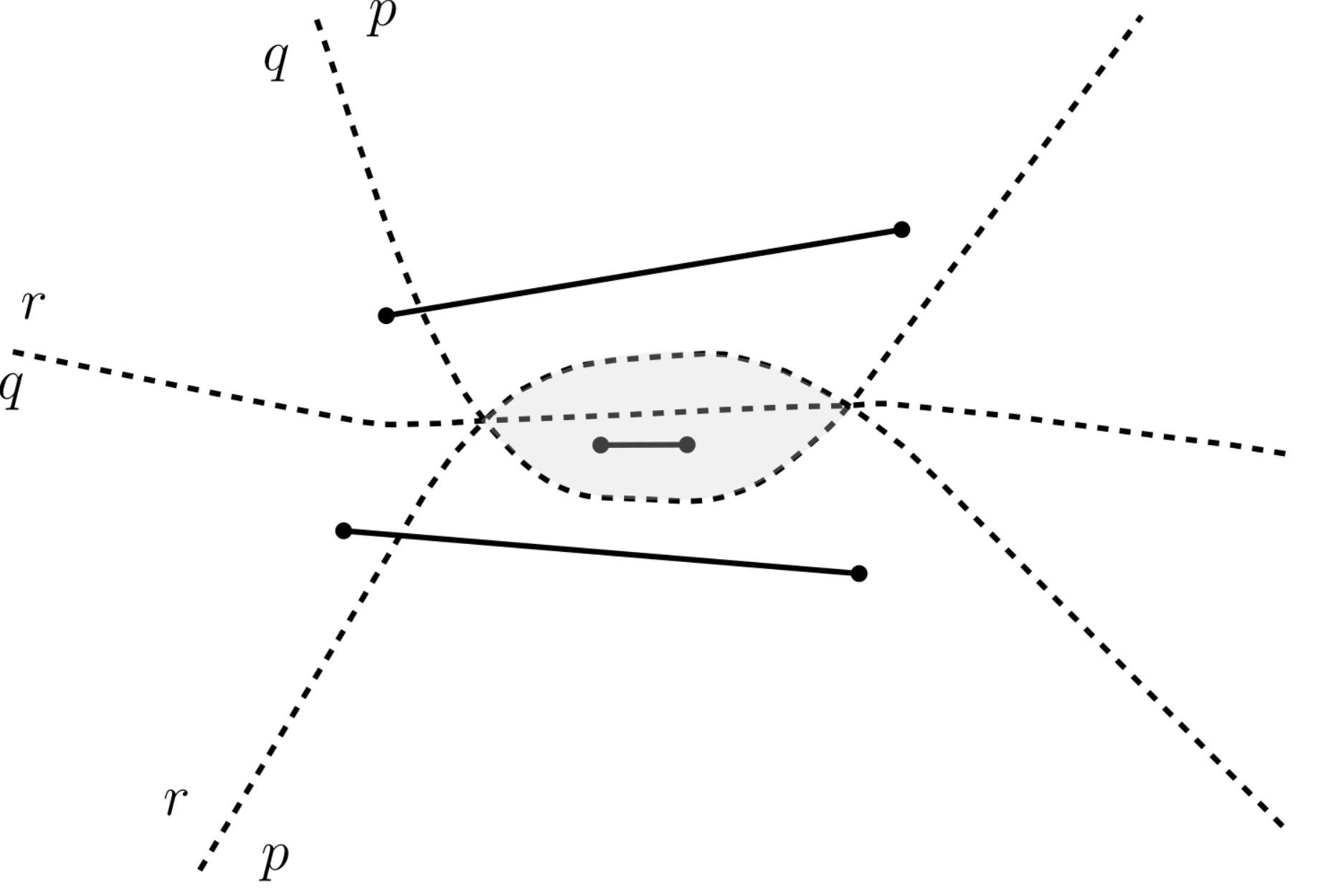}
                \vspace*{0.5\baselineskip}
		\caption{
			Related segment bisectors intersecting
                        twice. $\VR(p,\{p,q,r\})$ is shaded. 
                      }
		\label{fig:3-segment-bisectors}
\end{figure}        

        Bisectors that have a site $p$ in common are called
        \emph{related}, in particular, \emph{$p$-related}.
        Let $\J_p\subseteq \J$ denote the set of all $p$-related
        bisectors in $\J$.
        Under axiom A2, if related bisectors $J(p,q)$ and $J(p,s)$ intersect 
        at a vertex $v$, then $J(q,s)$ must also intersect with them
        at the same vertex,
        which is  a Voronoi vertex in $V(\{p,q,s\})$
        (otherwise, axiom A2 would be violated in $V(\{p,q,s\})$). 
        %
        In an admissible bisector system, related bisectors can
        intersect at most twice~\cite{K89};
        thus, 
        a Voronoi diagram of three sites may have at most  two Voronoi
        vertices, see e.g., the bisectors of three line segments in
        Figure~\ref{fig:3-segment-bisectors}.
        The curve $\Gamma$ can be interpreted as a $p$-related bisector
        $J(p,s_\infty)$, for a site $s_\infty$ representing
        infinity, for any $p\in S$.

\begin{observation}
\label{obs:patterns}
In an admissible bisector system, related bisectors that do not 
intersect or intersect twice must follow the patterns illustrated in Figures~\ref{fig:bisector-0-intersection}
and~\ref{fig:bisector-2-intersection} respectively.
\end{observation}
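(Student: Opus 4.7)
The plan is to apply the admissibility axioms A1--A4 to the three-site diagram $\V(\{p,q,s\})$ in order to constrain the topology of the arrangement of two $p$-related bisectors $J(p,q)$ and $J(p,s)$. Each is a simple arc in $D_\Gamma$ connecting two points of $\Gamma$ (by A3 together with the truncation assumption), and the two meet transversally in a finite number of points (A4); here that number is $0$ or $2$. The connected regions of the arrangement of $J(p,q) \cup J(p,s)$ inherit labels from $\{p,q\}\times\{p,s\}$ indicating, for each, on which side of each bisector the region lies.

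First I would handle the non-intersecting case. Two disjoint simple arcs in $D_\Gamma$ whose endpoints lie on $\Gamma$ must have non-interleaving endpoints (otherwise they would be forced to cross), and so they split $D_\Gamma$ into exactly three regions, realizing three of the four possible labels. By A1, the region $\VR(p,\{p,q,s\}) = D(p,q)\cap D(p,s)$ is non-empty and connected, so the label $(p,p)$ is realized. Examining which of the remaining labels must be the missing one, using A1 applied to $\VR(q,\{p,q,s\})$ and $\VR(s,\{p,q,s\})$ (both of which must be connected and non-empty) and A2 (which requires the three regions to cover $D_\Gamma$), singles out the nested configuration of Figure~\ref{fig:bisector-0-intersection}: one bisector lies entirely inside a single dominance region of the other.

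Next I would handle the two-intersection case. The two crossings $v_1, v_2$ are transversal points interior to $D_\Gamma$, and the endpoints of the two arcs on $\Gamma$ are again non-interleaving (an even number of crossings preserves the parity of interleaving). As already argued in the paragraph preceding the observation, $v_1$ and $v_2$ are Voronoi vertices of $\V(\{p,q,s\})$ through which $J(q,s)$ also passes. Thus the three bisectors form a theta-like subgraph with $v_1$ and $v_2$ as its two nodes and three arcs between them, plus six unbounded tails reaching $\Gamma$. Axiom A1 applied to each of $\VR(p,\{p,q,s\})$, $\VR(q,\{p,q,s\})$, $\VR(s,\{p,q,s\})$ is incompatible with any arrangement other than the one in which $J(q,s)$ is the middle arc separating $\VR(q)$ from $\VR(s)$ inside the lens bounded by $J(p,q)$ and $J(p,s)$, while $\VR(p)$ sits on the opposite side of that lens. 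This is precisely the pattern of Figure~\ref{fig:bisector-2-intersection}.

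The hard part will be the case analysis that rules out the alternative labelings. The key leverage is always axiom A1: any labeling that would force one of the three Voronoi regions of $\V(\{p,q,s\})$ to be disconnected can be discarded, and only the configurations depicted in the two figures survive. A supporting ingredient is the cyclic ordering on $\Gamma$ of the endpoints of $J(p,q)$, $J(p,s)$ and (in the two-intersection case) $J(q,s)$, which is constrained both by the non-interleaving property and by the requirement that each of the three Voronoi regions reach $\Gamma$ on a single arc.
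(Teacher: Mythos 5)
Your overall plan---a case analysis over the labelled configurations of the two related bisectors, driven by A1 applied to the three regions of $\V(\{p,q,s\})$, A2 for coverage, and the forced behaviour of the third bisector $J(q,s)$ (passing through both crossing points, or disjoint from both curves)---is the same as the paper's, and your set-up (non-interleaving endpoints on $\Gamma$, three faces in the disjoint case, the theta-subgraph in the twice-crossing case) is sound. But the execution has a concrete error and leaves out the part that carries the proof.

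The error: the configuration you single out as the unique legal one in the two-intersection case is in fact illegal. Two curves crossing twice create five faces carrying four labels: the lens, two faces lying outside both curves, and two faces flanking the lens which carry the \emph{same} label and are disconnected from one another. The labels pair up complementarily, with $(p,p)$ complementary to $(q,s)$; the lens and the disconnected flanking pair receive complementary labels. So if, as you require, $J(q,s)$ separates a piece of $\VR(q,\{p,q,s\})$ from a piece of $\VR(s,\{p,q,s\})$ inside the lens, the lens must carry the label $(q,s)$, which forces $(p,p)$ onto the two disconnected flanking faces; then $\VR(p,\{p,q,s\})$ is disconnected and A1 fails. This is precisely the illegal pattern of Figure~\ref{fig:bisector-2-intersection}(c). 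The legal patterns are the others: $\VR(p,\{p,q,s\})$ is the lens, or it is an outer face (and then the lens belongs entirely to one of $q,s$). Likewise, in the disjoint case there is no single surviving labelling to ``single out'': both the configuration with $p$'s label facing the middle strip on both curves and the nested one are realized already by collinear point sites. Finally, the step you defer as ``the hard part'' is the entire content of the observation: in particular, the pattern of Figure~\ref{fig:bisector-2-intersection}(d) is \emph{not} excluded by A1 applied to the theta-graph alone---one must route $J(q,s)$ through both crossing points and check that every continuation of it to $\Gamma$ violates A1 or A2, and analogously place the non-crossing $J(q,s)$ inside one of the three faces in the case of Figure~\ref{fig:bisector-0-intersection}(b). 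With that analysis missing and the legal/illegal classification inverted, the proposal does not yet establish the observation.
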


\begin{proof}
In Figure~\ref{fig:bisector-2-intersection}(c) the pattern is illegal
because of axiom A1, and in Figure~\ref{fig:bisector-2-intersection}(d) because
of combining axioms A2 and A1:
$J(s,t)$ must pass through the intersection
points of $J(p,s)$ and $J(t,p)$, by A2. 
Then any possible configuration of $J(s,t)$ 
results in violating either axiom A1 or A2. 
The pattern in Figure~\ref{fig:bisector-0-intersection}(b) can
be shown illegal by combining axioms A1 and A2 in the
presence of  $J(s,t)$, which does not intersect $J(s,p)$ nor $J(t,p)$. 
\end{proof}


\begin{figure}
    \centering
    \includegraphics[scale=0.9]{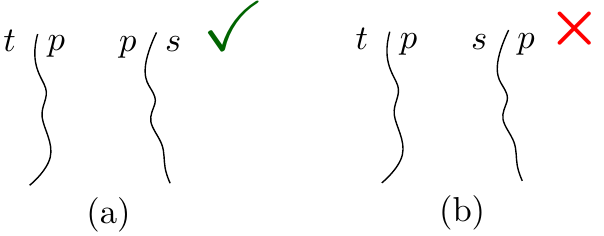}
    \vspace*{0.5\baselineskip}
    \caption{Non-intersecting bisectors; (a)  
      is legal  ($\checkmark$); 
      (b) is illegal ($\times$).} 
    \label{fig:bisector-0-intersection}
  \end{figure}
  
\begin{figure}
		\centering
		\includegraphics[scale=0.9]{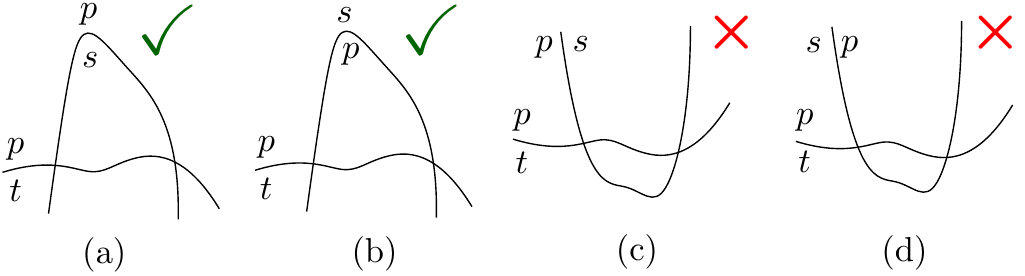}
                \caption{Bisectors intersecting twice; legal
                ($\checkmark$) and illegal($\times$).}
              \label{fig:bisector-2-intersection}
       \end{figure}

\begin{observation} [\cite{JP18}]
\label{obs:inversecycle}
In an admissible bisector system, no cycle in the arrangment of
bisectors related to $p$
can have the label $p$ on the exterior of the cycle, for all of its
arcs.
\end{observation}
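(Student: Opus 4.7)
My plan is to argue by contradiction. Assume such a cycle $C$ exists, with label $p$ on the exterior of every arc, and let $S' = \{p\} \cup \{q_i\}$ consist of $p$ together with the sites defining the bisectors of the arcs of $C$. I will work inside the abstract Voronoi diagram $\V(S')$ restricted to $D_\Gamma$. The first observation is that, for every $q \in S'$, the Voronoi region $\VR(q, S')$ is a non-empty, open, pathwise connected set (by axiom~A1), contained in an intersection of open dominance regions and therefore disjoint from every bisector---in particular from $C$. By the Jordan curve theorem, each such connected region must lie entirely in the interior of $C$ or entirely in its exterior.

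The key technical step is to show that every arc $\alpha$ of $C$, lying on some bisector $J(p, q_i)$, must be a Voronoi edge of $\V(S')$. If not, both local sides of $\alpha$ would belong to a single region $\VR(q_k, S')$, forcing that region to contain points on both sides of $C$ near $\alpha$, contradicting the preceding dichotomy. Since $\alpha \subseteq J(p, q_i)$, the two regions that $\alpha$ separates have $J(p, q_i)$ as their bisector, so they must be precisely $\VR(p, S')$ and $\VR(q_i, S')$. Hence $C \subseteq \partial \VR(p, S')$.

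Finally, invoking the standard consequence of the admissibility axioms that $\VR(p, S')$ is a simply connected bounded region inside $D_\Gamma$, its boundary is a single simple closed curve; combined with $C \subseteq \partial \VR(p, S')$ this forces $C = \partial \VR(p, S')$. Since $\VR(p, S')$ is bounded with boundary $C$, it must coincide with the interior of $C$, placing label $p$ on the interior side of every arc---contradicting the hypothesis. The step I expect to demand the most care is the middle one, showing that every arc of $C$ is necessarily a Voronoi edge of $\V(S')$: it relies on applying the Jordan-curve dichotomy uniformly to all connected Voronoi regions simultaneously, rather than to any single one of them.
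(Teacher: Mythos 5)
The paper does not actually prove this observation---it is imported from~\cite{JP18}---so your argument stands or falls on its own, and it falls at the very first step. You claim that each region $\VR(q,S')$ is ``contained in an intersection of open dominance regions and therefore disjoint from every bisector, in particular from $C$.'' This is a non sequitur: $\VR(q,S')=\bigcap_{r\neq q}D(q,r)$ avoids only the \emph{$q$-related} bisectors $J(q,r)$, not bisectors between two other sites. The cycle $C$ lives on the $p$-related bisectors $J(p,q_i)$, so the only region guaranteed to be disjoint from $C$ is $\VR(p,S')$; a region $\VR(q_k,S')$ can perfectly well be crossed by an arc of $C$ lying on $J(p,q_i)$ with $i\neq k$ (already for three point sites in the Euclidean plane, the line $J(p,q_1)$ typically cuts through the region of $q_2$). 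Hence the ``uniform Jordan-curve dichotomy'' you identify as the crux does not hold, and the key technical step---that every arc of $C$ is a Voronoi edge of $\V(S')$, so that $C\subseteq\partial\VR(p,S')$---is unsupported. Note that this step cannot be rescued as a general fact about $p$-cycles: the paper exhibits nested Voronoi-like cycles on the same site set (Figure~\ref{fig:p-cycle-2}) of which only the innermost is a region boundary, so arcs of a $p$-cycle are typically \emph{not} Voronoi edges; any proof of it must genuinely use the ``$p$ outside everywhere'' hypothesis, which yours does not.

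There is also a secondary problem at the end. Even granting $C=\partial\VR(p,S')$, you assert that $\VR(p,S')$ is bounded and ``must coincide with the interior of $C$''; but the label hypothesis places $\VR(p,S')$ locally on the \emph{exterior} side of every arc, so the configuration to be excluded is $\VR(p,S')=D_\Gamma\setminus\overline{D_C}$, and the contradiction one needs there is with simple connectivity of Voronoi regions---a nontrivial consequence of axioms A1--A4 that you would have to argue, not boundedness. A workable route is quite different: restrict attention to $\VR(p,S')$ (the one region that \emph{is} disjoint from $C$), or argue on a minimal such cycle using the forbidden intersection patterns of doubly-intersecting related bisectors (Observation~\ref{obs:patterns}), which is how the two-arc case is already excluded by pattern (d) of Figure~\ref{fig:bisector-2-intersection}.
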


        Any component $\alpha$ of a bisector curve $J(p,q)$ is called an
        \emph{arc}. We use $s_\alpha \in S$ to denote the site such  that
         arc $\alpha \subseteq J(p,s_\alpha)$.
        Any component of $\Gamma$ is called a $\Gamma$-arc.
        The arrangement of a bisector set $\J_x\subseteq \J$ is denoted by
        $\arr(\J_x)$.

\section{Defining abstract Voronoi-like graphs and cycles}
\label{sec:vld}

In order to define Voronoi-like graphs in a
broader sense, we can relax  axioms A1-A4 in this section.
In particular, we  drop axiom A1 to allow disconnected
Voronoi regions and relax axiom A3 to allow disconnected (or even closed)
bisecting curves. 
The bisector $J(p,q)$  of two sites $p,q \in S$
        still subdivides the plane into two  open domains:
        the \emph{dominance region of $p$}, $D(p,q)$,
	and the \emph{dominance region of $q$}, $D(q,p)$,
        however, $D(p,q)$ 
        may be disconnected or  bounded.
        Axioms A2 and A4 remain.
        Unless otherwise specified, we use the general term \emph{abstract bisector
          system} to denote such a relaxed variant in
        the subsequent definitions and in
        Theorem~\ref{thrm:face}.
        The term \emph{admissible bisector system} always implies
        axioms  A1-A4.
 

Let $G=(V,E)$ be a graph on the arrangement of an abstract bisector system $\J$,
truncated within a simply connected domain $D\subseteq D_\Gamma$ (the
leaves of $G$ are on  $\partial D$).
The vertices of $G$ are  arrangement vertices 
and the edges are maximal bisector arcs connecting pairs of vertices.
Figure~\ref{fig:graph-g} illustrates examples of such graphs on a
bisector arrangment (shown in grey).
Under  the general position assumption, the vertices of $G$, except
the leaves on $\partial D$,  are of degree 3.

\begin{definition}\label{def:local-VV}
A vertex $v$ in graph $G$ 
is called \emph{locally Voronoi}, if $v$ and its incident graph edges, within
a small neighborhood around $v$, $N(v)$, appear in the Voronoi diagram
of the set of three sites defining $v$, denoted $S_v$, 
see Figure~\ref{fig:local-VV}(a).

If instead we consider the farthest
Voronoi diagram of 
$S_v$, then $v$ is called locally
Voronoi of the \emph{farthest-type}, see Figure~\ref{fig:local-VV}(b).
An ordinary locally Voronoi vertex  is 
of the \emph{nearest-type}.
\end{definition}


        	\begin{figure}
		\centering
		\includegraphics[scale=0.85]{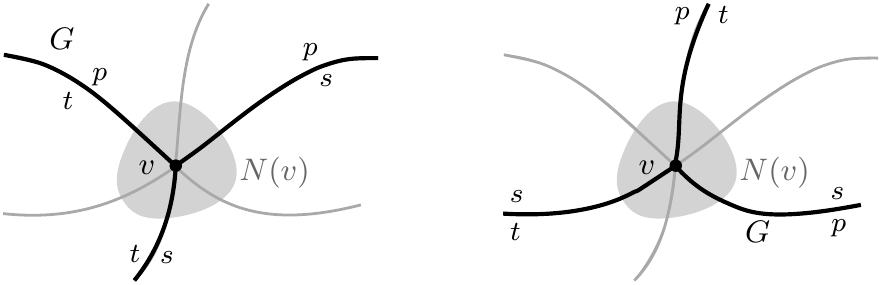}
		\caption{(a)  Vertex $v$ 
                  is locally Voronoi:
			$G \cap N(v) = 
			\V(\{p,s,t\})\cap N(v)$;
                        $N(v)$ is shaded, $G$ is  bold, and bisectors are
                        grey.
                        (b)  Vertex $v$ 
                        is locally farthest Voronoi.
                        	}
		\label{fig:local-VV}
              \end{figure}

\begin{definition}
\label{def:vld}
A graph $G$ on the arrangement of an abstract bisector system,
enclosed within a simply connected domain $D$, is called \emph{Voronoi-like}, if its vertices (other than its leaves on $\partial
D$) are locally Voronoi vertices.
If $G$ is disconnected, we further require that consecutive leaves
on $\partial D$ have \emph{consistent labels}, i.e., they are incident 
to the dominance region of the same site, as implied  by the incident
bisector edges in $G$, see Figure~\ref{fig:graph-g}.
%
\end{definition}

\begin{figure}
  \centering
  \includegraphics[scale=0.85]{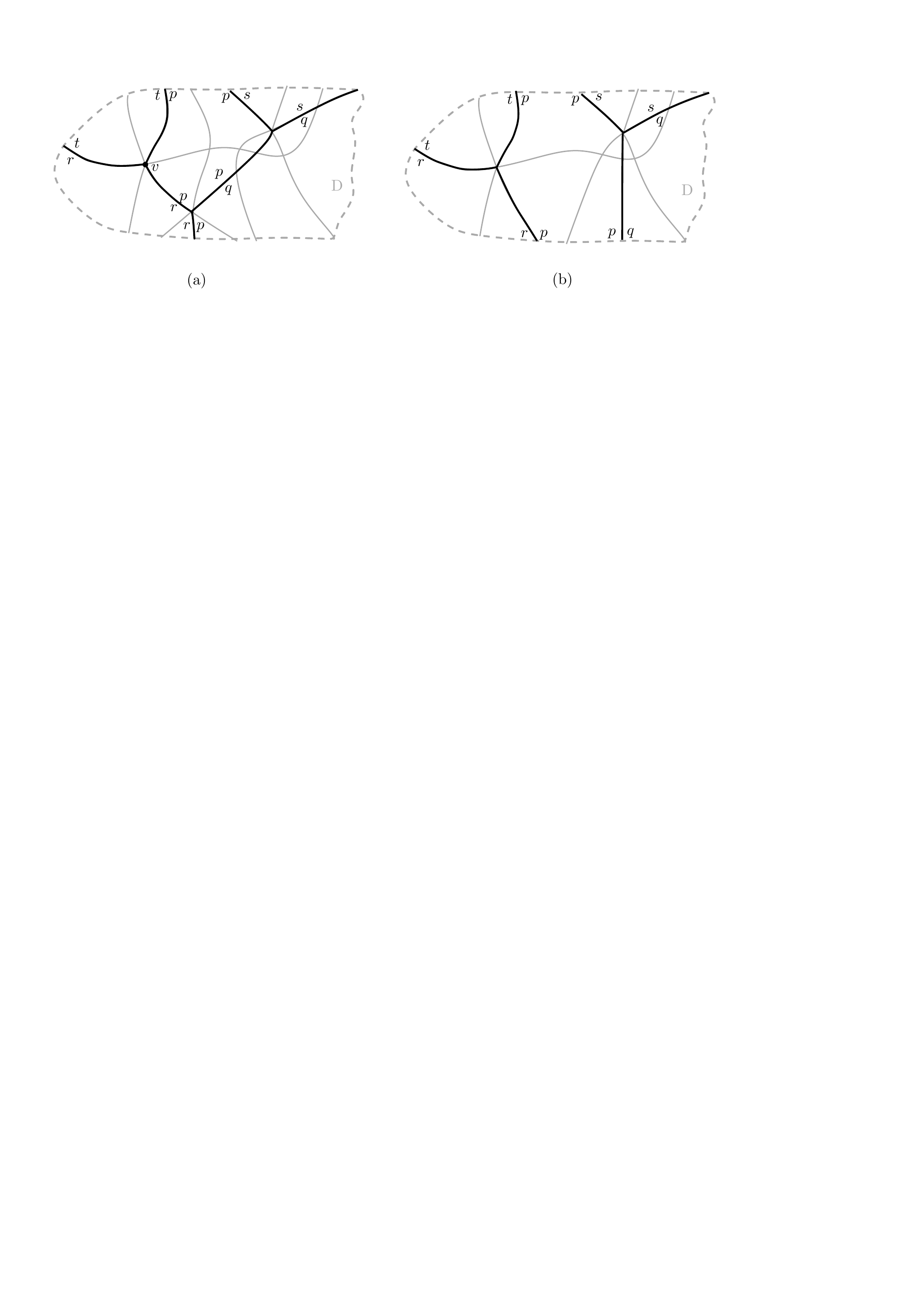}
  \caption{Voronoi-like graphs shown in bold on an
    arrangement of bisectors shown in grey.		}
  \label{fig:graph-g}
\end{figure}

The graph  $G$ is actually called an \emph{abstract Voronoi-like graph} but, 
for brevity, we typically skip the term abstract.
We next consider the relation between a Voronoi-like graph $G$ and the Voronoi diagram
$\V(S)\cap D$, where $S$ is the set of sites involved in the edges of  $G$. 
Since the vertices of $G$ are locally Voronoi, each face $f$ in $G$
must have the label of exactly one site $s_f$ in its interior, which
is called the \emph{site of $f$}.

\begin{definition}
  \label{def:missing}
Imagine we superimpose $G$ and  $\V(S)\cap D$. 
A face $f$ of $\V(S)\cap D$ is said to be \emph{missing from $G$}, if
 $f$ is covered by faces of $G$ that belong to sites that
 are different
 from the site of $f$, see Figure~\ref{fig:voronoi-missing-face}, which
 is derived from Figure~\ref{fig:voronoi-segments}.
\end{definition}

\begin{figure}
     \centering
    \includegraphics[scale=0.8]{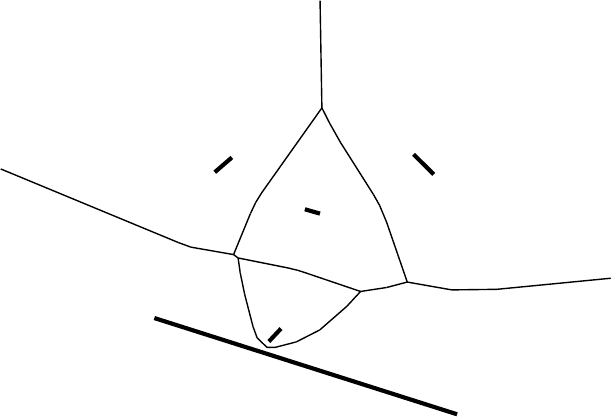}
    \caption{A Voronoi diagram of 4 segments.
    }
    \label{fig:voronoi-segments}
  \end{figure}

  \begin{figure}
 		\centering
		\includegraphics[scale=0.8]{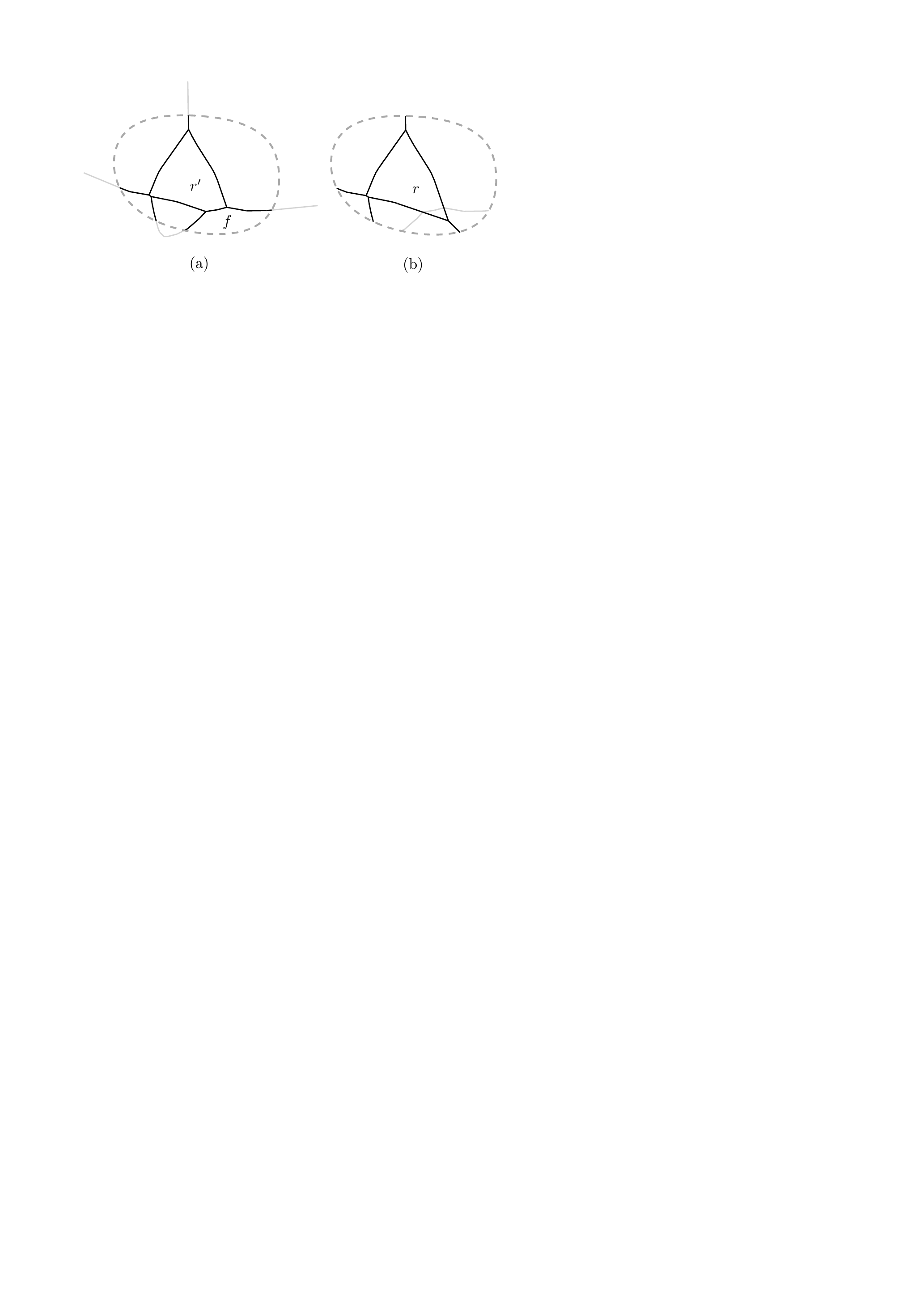}
		\caption{(a) Voronoi diagram  $V(S)\cap D$; (b)
                  Voronoi-like graph $G$; 
                  $r'\subseteq r$; $f$ is missing from (b).}
		\label{fig:voronoi-missing-face}
      \end{figure}

\begin{theorem}
  \label{thrm:face}
Let $r$ be a face
of an abstract Voronoi-like graph $G$ and let  $s_r$  denote its site (the bisectors
bounding $r$ have the label $s_r$ inside $r$). Then one of the
following holds:
\begin{enumerate}
\item there is a Voronoi face $r'$ in $\V(S)\cap D$, of the same site
  $s_r$, $r'\subseteq \VR(s_r,S)$, such that $r'\subseteq r$, see
  Figure~\ref{fig:voronoi-missing-face}.
\item 
face $r$ is disjoint from the Voronoi region $\VR(s_r,S)$. 
Further, it is
entirely covered by Voronoi faces of $\V(S)\cap D$, which are
\emph{missing} from $G$, see
  Figure~\ref{fig:face-2}.
\end{enumerate}
\end{theorem}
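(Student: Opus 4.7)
The plan is to reduce everything to a single \textbf{containment lemma}: if $f$ is a connected component of $\VR(t,S)\cap D$ for some site $t$, and $q$ is a face of $G$ with $s_q = t$, then $f\cap q\neq \emptyset$ already forces $f\subseteq q$. I would prove it by picking $x\in f\cap q$ and, for contradiction, $y\in f\setminus q$, and joining them by a path $\gamma\subseteq f$ (possible since $f$ is connected). The path $\gamma$ lies entirely in $D$, so in order to leave $q$ it must cross some edge $e$ of $G$ bounding $q$; by the locally-Voronoi property at the vertices of $q$, every such edge is an arc of a bisector $J(t,s)$ with the adjacent $G$-face on the far side of $e$ labelled $s$. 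Crossing $e$ transversally (axiom A4) would move $\gamma$ from $D(t,s)$ into $D(s,t)$, contradicting $\gamma\subseteq f\subseteq \VR(t,S)\subseteq D(t,s)$.

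Given the lemma, the theorem follows by a short case analysis on $r$. If $r\cap \VR(s_r,S)\neq \emptyset$, I would take the connected component $r'$ of $\VR(s_r,S)\cap D$ that meets $r$ and apply the containment lemma with $f=r'$, $q=r$ to obtain $r'\subseteq r$, which is exactly item~1. Otherwise $r\cap \VR(s_r,S)=\emptyset$; by axiom A2 every interior point $x\in r$ sits in some Voronoi face $f_x$ of $\V(S)\cap D$ whose site $t_x$ must differ from $s_r$, so $r$ is covered by such faces. To finish item~2 I still need to show each $f_x$ is \emph{missing} from $G$ in the sense of Definition~\ref{def:missing}. If it were not, some point of $f_x$ would lie in a $G$-face $q$ with $s_q=t_x$; the containment lemma would then give $f_x\subseteq q$, so $x\in q\cap r$, and the disjointness of distinct $G$-faces would force $q=r$ and $s_r=t_x$, a contradiction.

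The main obstacle is the containment lemma, and within it the justification that every bounding edge of a $G$-face $q$ really is an arc of a bisector $J(s_q,\cdot)$ whose two sides are the dominance regions of $s_q$ and of the neighbouring face's site. This is exactly what the locally-Voronoi condition guarantees locally at each vertex of $G$, and it then propagates along the edges bounding $q$, thanks to the consistent labelling required in Definition~\ref{def:vld}. It is worth noting that axioms A1 and A3 are not used anywhere in this argument; only A2 (to ensure every $x$ lies in a Voronoi region) and A4 (for the transversal crossing), both of which are in force in the relaxed setting of the theorem.
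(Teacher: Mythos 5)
Your proof is correct and follows essentially the same route as the paper's: both arguments hinge on the single fact that an $s$-related bisector (hence any edge of a $G$-face of site $s$) cannot meet the interior of $\VR(s,S)$, which rules out partial overlap and forces the containment in item~1, and is then reused to show that the Voronoi faces covering $r$ in item~2 are missing. Your packaging of this as an explicit containment lemma, proved by a path-crossing argument, is a slightly cleaner organization of the same idea, and your observation that only A2 and A4 are needed matches the paper's intent of stating the theorem for relaxed (non-admissible) bisector systems.
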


\begin{proof}
Imagine we superimpose $G$ and  $\V(S)\cap D$.
Face $r$ in $G$ cannot partially overlap any face of the Voronoi
region $\VR(s_r,S)$
because if it did, some $s_r$-related bisector, 
which contributes to the boundary of $r$, would
intersect the interior of  $\VR(s_r,S)$, which is not possible by the
definition of a Voronoi region. For the same reason, $r$ cannot be
contained in $\VR(s_r,S)$. Since Voronoi regions cover the plane, the
claim, except from the last sentence in item 2, follows.

Consider a Voronoi face $c'$ of $\V(S)\cap D$ that  overlaps with 
face $r$ of $G$ in case 2, where the site of $c'$, $s_c$, is different from $s_r$.
Since $c'$ overlaps with $r$, it follows that $c'$ cannot be entirely  contained
in any face of site $s_c$ in $G$.
Furthermore,  $c'$ cannot overlap partially  with any face of
$s_c$ in  $G$, by the proof in the previous paragraph. 
Thus, $c'$ is disjoint from any face of  $G$ of site $s_c$, i.e.,
it must be missing from $G$.
In Figure~\ref{fig:face-2}, face $c'$ contains $r$.
\end{proof}

 \begin{figure}
 		\centering
		\includegraphics[scale=0.8]{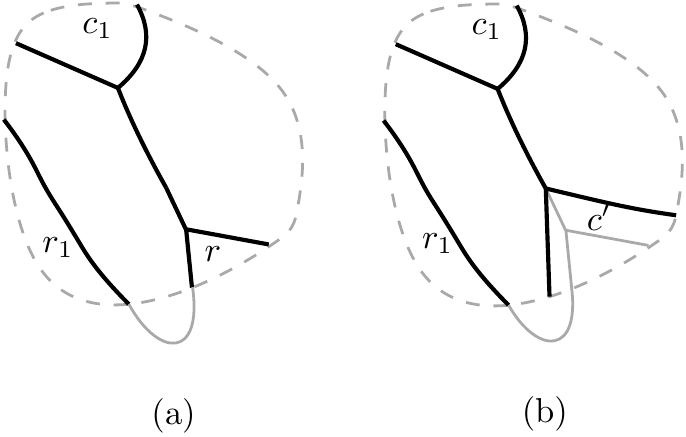}
		\caption{(a)
                  Voronoi-like graph $G$; 
                  (b)  $V(S)\cap D$; face $c'$, which
                  covers $r$,  is missing
                  from $G$.}
		\label{fig:face-2}
      \end{figure}

      
\begin{corollary}
\label{cor:no-missing-face}
If no Voronoi face of  $\V(S) \cap D$ is missing from $G$, then $G = \V(S) \cap D$.
\end{corollary}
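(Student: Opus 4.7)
The plan is to derive the corollary as an almost immediate consequence of Theorem~\ref{thrm:face}, by combining two containment statements in opposite directions and then exploiting that both $G$ and $\V(S)\cap D$ partition the domain $D$ into faces, each carrying a well-defined site label.

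First I would apply Theorem~\ref{thrm:face} face-by-face to $G$: for every face $r$ of $G$ with site $s_r$, the hypothesis that no Voronoi face of $\V(S)\cap D$ is missing from $G$ rules out case~(2) of the theorem, so case~(1) must hold and there is a Voronoi face $r'\subseteq\VR(s_r,S)$ of $\V(S)\cap D$ with $r'\subseteq r$. Next I would argue the reverse direction: every face $c'$ of $\V(S)\cap D$, say of site $s_c$, is contained in some face of $G$ that also has site $s_c$. Indeed, the proof of Theorem~\ref{thrm:face} already shows that $c'$ cannot partially overlap any face of $G$ of site $s_c$; since $c'$ is not missing, $c'$ cannot be disjoint from all faces of $G$ of site $s_c$ either, so it must lie entirely in one such face.

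With both directions in hand, I would close the argument by contradiction. Fix a face $r$ of $G$ with site $s_r$ and its associated Voronoi face $r'\subseteq r$ from the first step. If $r'\subsetneq r$, then $r\setminus\overline{r'}$ is a nonempty open subset of $D$, and because Voronoi regions tile $D$ it is covered by Voronoi faces whose sites differ from $s_r$. By the second step, each such Voronoi face is contained in a face of $G$ whose site also differs from $s_r$; but by construction these points lie inside $r$, whose site is $s_r$, a contradiction. Hence $r=r'$ for every face, so the face partitions coincide, their edge sets agree, and $G=\V(S)\cap D$.

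The only delicate point I anticipate is making sure the face-to-face correspondence is well defined in the relaxed setting of Section~\ref{sec:vld}, where $\VR(s_r,S)$ may be disconnected; however, since the argument is carried out face-by-face and relies on Theorem~\ref{thrm:face} (which was already proved in this generality), no additional assumption is needed.
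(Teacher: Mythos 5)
Your argument is correct and matches the paper's intent: the corollary is stated without proof as an immediate consequence of Theorem~\ref{thrm:face}, and your write-up simply makes explicit the two containments (case~(1) forced for every face of $G$, and every non-missing Voronoi face absorbed into a face of $G$ of the same site) together with the disjointness of the faces of $G$. The only small imprecision is the claim that $r\setminus\overline{r'}$ is covered by Voronoi faces of sites \emph{other} than $s_r$: in the relaxed setting $\VR(s_r,S)$ may be disconnected, so $r$ could a priori meet a second face of $\VR(s_r,S)$; this is easily patched, since two distinct faces of the same Voronoi region cannot share a Voronoi edge, so some Voronoi face of a different site must still intersect $r$, which yields your contradiction.
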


Let us now consider an  admissible bisector system, satisfying axioms
A1-A4. 

\begin{corollary}
\label{cor:AVD-global}
In an \emph{admissible bisector system $\J$}, if $D$ corresponds to the entire plane,
then any Voronoi-like graph  on $\J$ equals the Voronoi
diagram of the relevant set of sites. 
\end{corollary}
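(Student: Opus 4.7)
The plan is to apply Corollary~\ref{cor:no-missing-face} as a reduction: it suffices to show that no face of $\V(S)\cap D$ is missing from $G$. I argue by contradiction, assuming that some face $\VR(s_0,S)$ is missing, with $s_0\in S$. By axiom~(A1), $\VR(s_0,S)$ is connected, and its boundary $\partial \VR(s_0,S)$ consists of $s_0$-related bisector arcs with $s_0$ always on the $\VR(s_0,S)$-side.

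First I would establish that $\partial \VR(s_0,S)$ is disjoint from $G$. If some arc $e\subseteq J(s_0,t)$ of $\partial \VR(s_0,S)$ were an edge of $G$, the $G$-face on its $\VR(s_0,S)$-side would have site $s_0$ and intersect $\VR(s_0,S)$; Theorem~\ref{thrm:face} would then place it in case~1 and force it to contain $\VR(s_0,S)$, contradicting the assumption. A parallel local analysis, using axiom~(A2) and the locally-Voronoi property, rules out any vertex of $G$ on $\partial \VR(s_0,S)$. Hence $\partial \VR(s_0,S)$ sits in the interior of a single face $r$ of $G$, of site $s_r\neq s_0$.

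Next I would apply Theorem~\ref{thrm:face} to $r$. A tubular neighborhood of $\partial \VR(s_0,S)$ inside $r$ shows that $r$ meets every Voronoi region $\VR(t,S)$ that shares a boundary arc with $\VR(s_0,S)$ in $\V(S)$. If $s_r$ is itself such a neighbor of $s_0$, then $r\cap \VR(s_r,S)\neq\emptyset$ and Theorem~\ref{thrm:face} forces $r\supseteq \VR(s_r,S)$; otherwise I descend to a neighbor $t$ of $s_0$ whose region $\VR(t,S)$ is contained in $r$, which is therefore also missing from $G$, and iterate with $t$ in place of $s_0$---a process that terminates since $S$ is finite.

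The hard part will be extracting the final contradiction from the configuration in which $r$ simultaneously contains the two disjoint Voronoi regions $\VR(s_0,S)$ and $\VR(s_r,S)$. My plan is to follow an arc of $J(s_0,s_r)$ that lies on $\partial \VR(s_0,S)\subseteq r$ from the interior of $r$ outward until it exits $\partial r$; at the exit point, axiom~(A2) produces an arrangement vertex at which the locally-Voronoi property of $G$ would force an $s_0$-related bisector to appear as a $G$-edge on $\partial \VR(s_0,S)$, contradicting the first step. Alternatively, the same configuration yields a cycle of $s_r$-related arcs along $\partial r$ with $s_r$ on its exterior, violating Observation~\ref{obs:inversecycle}.
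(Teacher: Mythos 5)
There is a genuine gap. Your reduction via Corollary~\ref{cor:no-missing-face} is the right start, but the contradiction you then chase is never actually reached. First, two concrete problems in the descent: from ``$r$ meets $\VR(t,S)$'' for a neighbor $t\neq s_r$ you conclude that $\VR(t,S)$ is \emph{contained in} $r$, which does not follow; what the proof of Theorem~\ref{thrm:face} gives is only that $\VR(t,S)$ is then \emph{missing} from $G$, and with that weaker conclusion your iteration has no guarantee of termination (nothing prevents revisiting sites). Second, and more importantly, the terminal configuration you aim for --- a face $r$ of site $s_r$ that contains $\VR(s_r,S)$ and also covers the missing region $\VR(s_0,S)$ --- is \emph{not} forbidden by Theorem~\ref{thrm:face}: case~1 does not preclude $r$ from additionally covering missing faces of other sites (this is exactly the situation of Figure~\ref{fig:voronoi-missing-face}, where a case-1 face coexists with a missing face). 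So the ``hard part'' really is the whole proof, and neither of your two sketches closes it: the arc of $J(s_0,s_r)$ on $\partial\VR(s_0,S)$ lies entirely inside $r$ by your own first step, and the exit of the extended bisector through $\partial r$ gives an arrangement vertex that need not be a vertex of $G$ at all; and the cycle $\partial r$ has the label $s_r$ in its \emph{interior}, so Observation~\ref{obs:inversecycle} is not violated by it.

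The missing idea is the direct one the paper relies on (it is the content of Corollary~\ref{cor:AVD-D}): the boundary of \emph{every} face $r$ of $G$ is a Voronoi-like cycle of $s_r$-related bisector arcs with the label $s_r$ inside, and in an admissible system such a cycle must enclose the (connected, by A1) region $\VR(s_r, S_{\partial r}\cup\{s_r\})\supseteq \VR(s_r,S)$ --- this is Observation~\ref{obs:encloseVR}, which ultimately rests on Observation~\ref{obs:inversecycle} and axiom~A1 applied to the subset of sites appearing on $\partial r$. Hence every face of $G$ meets, and therefore (by Theorem~\ref{thrm:face} and connectivity) contains, its own Voronoi region. Since every site of $S$ labels at least one face of $G$, no Voronoi face is missing, and Corollary~\ref{cor:no-missing-face} gives $G=\V(S)$. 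In short: instead of assuming one region is missing and hunting for a contradiction, show positively that each face of $G$ swallows its Voronoi region; the admissibility axioms enter through the enclosure property of site-cycles, not through the local analysis at $\partial\VR(s_0,S)$ that your proposal attempts.
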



In an admissible bisector system, Voronoi regions are connected, thus,
only faces incident to $\partial D$ may be missing from  $\V(S) \cap D$.

\begin{corollary}
\label{cor:AVD-D}
In an admissible bisector system, any  
face $f$ of $G$ that does not touch $\partial D$ either
coincides with or contains the
Voronoi region $\VR(s_f,S)$. 
\end{corollary}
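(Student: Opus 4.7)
The plan is to apply Theorem~\ref{thrm:face} to $f$ and to rule out its second case by using axiom~A1 together with the observation stated just before the corollary.

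In case~1 of Theorem~\ref{thrm:face}, there is a face $r' \subseteq f$ of $\V(S) \cap D$ with $r' \subseteq \VR(s_f, S)$. Since $f \cap \partial D = \emptyset$, we also have $r' \cap \partial D = \emptyset$, so the boundary of $r'$ consists entirely of bisector arcs of $\J$. This makes $r'$ a face of the untruncated diagram $\V(S)$, i.e., a connected component of $\VR(s_f, S)$. Axiom~A1 forces $\VR(s_f, S)$ to be connected, whence $r' = \VR(s_f, S)$ and the desired inclusion $\VR(s_f, S) \subseteq f$ follows.

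For case~2, where $f$ is disjoint from $\VR(s_f, S)$ and is covered by missing faces of $\V(S) \cap D$, I would appeal to the preceding remark that in an admissible system every missing face touches $\partial D$. That remark is itself a dual A1-argument: a hypothetical missing face $c'$ of site $s_c$ lying strictly inside $D$ would, by the same reasoning as in case~1, equal the entire connected region $\VR(s_c, S)$; but then Theorem~\ref{thrm:face} case~1 applied to any $G$-face labeled $s_c$ (which exists since $s_c \in S$) would place part of $\VR(s_c, S) = c'$ inside that $G$-face, contradicting that $c'$ is missing. Granting this, $f$ cannot be covered by missing faces while being bounded away from $\partial D$: the crossing points of the boundary of any covering missing face $c'$ with $\partial f$ must be arrangement vertices where the locally Voronoi condition produces an $s_c$-labeled $G$-face adjacent to $c'$, contradicting missing-ness once more.

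I expect the main obstacle to be this last topological step: upgrading ``missing faces touch $\partial D$'' to ``no face $f \subseteq \mathrm{int}\, D$ is covered by missing faces.'' Making this rigorous requires inspecting the three-bisector configurations at any crossing of $\partial c'$ with $\partial f$ (by general position, at a vertex where bisectors $J(s_f, s_c), J(s_f, x), J(s_c, x)$ meet), using Observation~\ref{obs:patterns} and the locally Voronoi structure of vertices on $\partial f$, so as to exhibit the forced overlap between $c'$ and an $s_c$-face of $G$ that closes the contradiction.
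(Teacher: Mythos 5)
Your handling of case~1 of Theorem~\ref{thrm:face} is correct: a Voronoi face $r'\subseteq f$ whose closure avoids $\partial D$ is relatively clopen in the connected set $\VR(s_f,S)$ (axiom~A1), hence equals it, giving $\VR(s_f,S)\subseteq f$. The problem is case~2, which is the actual content of the corollary, and your treatment of it has a genuine gap in two places. First, your argument that a missing face cannot lie strictly inside $D$ is circular: you apply ``Theorem~\ref{thrm:face} case~1'' to an $s_c$-labeled face of $G$, but the theorem only offers a dichotomy, and that face could itself fall into case~2 (disjoint from $\VR(s_c,S)$), which is exactly the situation you are trying to exclude. Second, as you yourself concede, even granting that every missing face touches $\partial D$, nothing prevents an interior face $f$ from being covered by missing faces that reach $\partial D$ outside of $\overline{f}$; the ``inspect the crossings of $\partial c'$ with $\partial f$'' step is left as a hope rather than a proof, and it is not clear it can be closed locally, since a covering missing face need not place any of its boundary vertices on $\partial f$.

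The clean way to kill case~2 does not go through missing faces but through the cycle machinery of Section~3. Since $\overline{f}$ avoids $\partial D$, every boundary component of $f$ is a closed cycle of $s_f$-related bisector arcs carrying the label $s_f$ on the side of $f$, and its vertices are degree-2 locally Voronoi. A hole of $f$ would be such a cycle with the label $s_f$ on its exterior, which Observation~\ref{obs:inversecycle} forbids; so $f$ is bounded by a single bounded Voronoi-like cycle for $s_f$. Observation~\ref{obs:encloseVR} then says this cycle encloses $\VR(s_f,S_{\partial f}\cup\{s_f\})\supseteq\VR(s_f,S)$, where $S_{\partial f}$ is the set of sites contributing arcs to $\partial f$; this yields the containment outright and shows that case~2 of Theorem~\ref{thrm:face} can never occur for a face disjoint from $\partial D$. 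This is the same enclosure argument the paper reuses in the proof of Lemma~\ref{lem:vld}, and it is what the remark preceding the corollary (``only faces incident to $\partial D$ may be missing'') is silently relying on.
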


By Corollary~\ref{cor:AVD-D},
in an admissible bisector system, we
need to characterize 
the faces of a Voronoi-like graph
that interact with the boundary of the domain $D$. 
That is, we are interested in Voronoi-like
trees and forests.

Let $p$ be a site in $S$ and let $\J_p$  denote the set of $p$-related
bisectors in $\J$.

     \begin{definition}
        \label{def:cycle}
        Let $C$ be a cycle  in the arrangement of $p$-related bisectors $\arr(\J_p\cup
        \Gamma)$ such that the label $p$ appears in the interior of $C$. 
        A vertex $v$ in $C$ is called  \emph{degree-2
          locally Voronoi}, if its two incident bisector arcs
        correspond  to edges
        in the Voronoi diagram $\V(S_v)$
        of the three sites that define $v$ 
        ($p\in S_v$).
        In particular, $C\cap N(v)\subseteq \V(S_v)\cap N(v)$, where
        $N(v)$ is a small neighborhood around $v$.
        The cycle $C$ is called \emph{Voronoi-like}, if its vertices
        are either degree-2 locally Voronoi or points on $\Gamma$.
        For brevity, $C$ is also called  a 
        \emph{$p$-cycle} or \emph{site-cycle}, if the site $p$ is not specified.
        If $C$ bounds a Voronoi region, then it is called a \emph{Voronoi cycle}. 

  %
      $C$ is called  \emph{bounded} if it contains no $\Gamma$-arcs, otherwise, it is called
      \emph{unbounded}.
      \end{definition}


      The part of the plane enclosed by $C$ is called the
        \emph{domain of $C$}, denoted as $D_C$.
        Any $\Gamma$-arc of $C$ indicates an opening of the 
        domain to infinity.
%
       Figure~\ref{fig:p-cycle} illustrates a Voronoi-like cycle for
       site $p$,
       which is unbounded (see the $\Gamma$-arc $\gamma$).
        It is easy to see in this
        figure that other $p$-cycles exist, on the same set of sites,
        which may enclose or 
        be enclosed by $C$. The innermost such cycle is the boundary
        of a Voronoi region, see Figure~\ref{fig:p-cycle-2}. 

       \begin{figure}
 		\centering
		\includegraphics[scale=0.75]{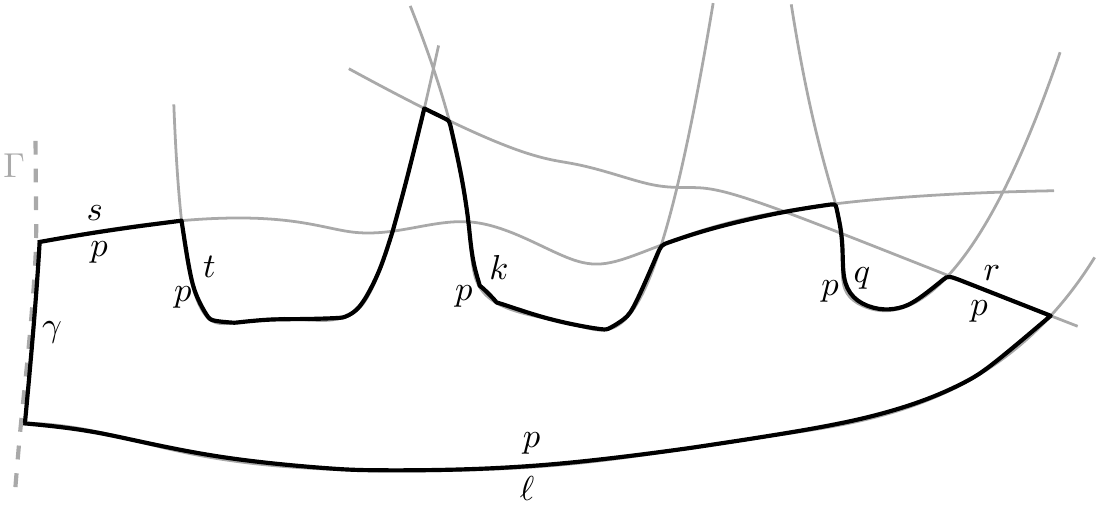}
		\caption{ A
                  Voronoi-like cycle for site $p$. $S_c=\{s,t,k,q,r,l\}$.}
		\label{fig:p-cycle}
      \end{figure}

        Let $S_C\subseteq S\setminus \{p\}$ denote the set of sites that
        (together with $p$) contribute the bisector arcs of $C$, 
        $S_C=\{s_\alpha\in S\setminus \{p\}\mid  \alpha \in
        C\setminus\Gamma\}$.
        We refer to $S_C$ as the  \emph{set of sites relevant} to $C$.
        Let $\hat C$ denote the Voronoi cycle $\hat
        C=\partial(\VR(p,S_C\cup\{p\})\cap D_\Gamma)$.


%

        \begin{observation}
          \label{obs:encloseVR}
          In an admissible bisector system,
          there can be many
        different Voronoi-like cycles  involving the
        same set of sites. Any 
        such cycle $C$ must enclose  the Voronoi cycle $\hat C$.
        Further, $S_{\hat C}\subseteq S_C$.
       %
        In the special case of a  line
        arrangement, e.g., bisectors of point-sites in
        the Euclidean plane,
        a site-cycle $C$ is unique for $S_C$; in particular,
        $C=\hat C$. 
      \end{observation}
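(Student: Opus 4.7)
The plan is to prove the three assertions separately. For the existence of several distinct $p$-cycles on the same relevant set $S_C$, I would simply point to Figures~\ref{fig:p-cycle} and~\ref{fig:p-cycle-2}, which exhibit an outer $p$-cycle $C$ and the strictly smaller Voronoi cycle $\hat C$ having identical $S_C$; axioms A1--A4 already guarantee that $\hat C$ is itself a degree-2 locally Voronoi cycle, so whenever a larger $p$-cycle exists the set of $p$-cycles on $S_C$ has at least two members.

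For the enclosure, set $\hat R = \VR(p, S_C \cup \{p\}) \cap D_\Gamma$. By axiom A1, $\hat R$ is a non-empty open connected set, and it contains the label of $p$, which lies in $D_C$ by the definition of a $p$-cycle. I claim $\hat R \cap C = \emptyset$: every point of $C$ lies either on $\Gamma = \partial D_\Gamma$ or on a $p$-related bisector $J(p, s_\alpha)$ with $s_\alpha \in S_C$; since $\hat R \subseteq D_\Gamma \cap \bigcap_{s \in S_C} D(p,s)$ is an intersection of open sets, it cannot touch any of their bounding curves. Hence $\hat R$ is a connected set disjoint from the Jordan curve $C$ that meets $D_C$, so $\hat R \subseteq D_C$; taking closures gives $\hat C \subseteq \overline{D_C} = D_C \cup C$. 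The inclusion $S_{\hat C} \subseteq S_C$ is then immediate, since $\hat C$ is composed only of arcs of bisectors $J(p, s)$ with $s \in S_C$ together with possible $\Gamma$-arcs, so no other site can label $\hat C$.

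For the line-arrangement uniqueness, I would show that $C$ must be a convex polygon and then squeeze it between two sides of the enclosure. At each vertex $v \in C$ with defining sites $\{p, s_i, s_j\}$, the two incident arcs lie on $J(p, s_i)$ and $J(p, s_j)$ (since $C \subseteq \arr(\J_p \cup \Gamma)$ and $v \notin \Gamma$) and, being degree-2 locally Voronoi, they bound the region of $p$ in $\V(\{p, s_i, s_j\})$ near $v$. With line bisectors this region is the intersection of two half-planes, so the interior angle of $C$ at $v$ is strictly less than $\pi$; hence $C$ is convex. Each side of $C$ sits on a line $J(p, s_\alpha)$ with $D_C$ touching it on the $p$-side, and by convexity of $D_C$ the entire region $D_C$ lies in the half-plane $D(p, s_\alpha)$. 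Intersecting over $s_\alpha \in S_C$ yields $D_C \subseteq \VR(p, S_C \cup \{p\})$, which combined with the enclosure gives $D_C = \hat R$ and therefore $C = \hat C$.

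The only delicate step is the disjointness $\hat R \cap C = \emptyset$ in the unbounded setting, where both $C$ and $\hat C$ may carry $\Gamma$-arcs; this is settled by $\hat R \subseteq D_\Gamma$ being open and therefore unable to meet $\Gamma$. Everything else is a topological argument relying only on the openness of dominance regions, axiom A1 for connectedness of $\hat R$, and the structural definition of a $p$-cycle.
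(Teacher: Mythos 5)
The paper states this as an unproved observation, so there is no in-paper proof to compare against; I am judging your argument on its own. Your skeleton is the natural one and most of it is sound: $\hat R=\VR(p,S_C\cup\{p\})\cap D_\Gamma$ is nonempty, open and connected by A1; it is disjoint from $C$ because every point of $C$ lies on $\Gamma$ or on some $J(p,s)$ with $s\in S_C$, each of which bounds one of the open sets whose intersection is $\hat R$; the inclusion $S_{\hat C}\subseteq S_C$ is immediate from the definition of $\hat C$; and the line-arrangement argument (interior angles below $\pi$ at locally Voronoi vertices, hence convexity, hence $D_C\subseteq D(p,s_\alpha)$ for every edge, hence $D_C\subseteq\hat R$) is correct for bounded cycles.

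The genuine gap is the step asserting that $\hat R$ meets $D_C$, which you justify by saying $\hat R$ ``contains the label of $p$, which lies in $D_C$ by the definition of a $p$-cycle.'' Definition~\ref{def:cycle} imposes only a local condition: along each arc $\alpha\subseteq J(p,s_\alpha)$ of $C$, the side facing $D_C$ is the $D(p,s_\alpha)$ side. A point just inside $\alpha$ therefore lies in $D(p,s_\alpha)$ but may well lie in $D(s,p)$ for another $s\in S_C$ --- this is precisely what the split relation (Definition~\ref{def:split}) produces, and it is why faces of $\vld(C)$ can strictly contain the corresponding Voronoi faces. So the definition does not place any point of $\bigcap_{s\in S_C}D(p,s)$ inside $C$, and as written the step is a non sequitur (if the definition said what you read into it, the enclosure claim would be true by definition and not worth stating). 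The claim is true but needs an argument. One clean route mirrors the proof of Lemma~\ref{lem:existance}: $C$ is obtained from $C_1=\partial(D(p,s_1)\cap D_\Gamma)$ by successive arc insertions $C_i\oplus\alpha$ with $\alpha\subseteq J(p,s)$, and each insertion removes from the current domain only points of $\overline{D(s,p)}$, which is disjoint from $\hat R$; induction then gives $\hat R\subseteq D_{C_i}$ for all $i$. Alternatively one can derive a contradiction with Observation~\ref{obs:inversecycle}. A second, minor loose end: in the unbounded case your convexity argument controls only the bisector--bisector vertices of $C$, not the vertices on $\Gamma$, whereas what you actually need is the per-edge containment $D_C\subseteq D(p,s_\alpha)$; that should be argued directly for unbounded cycles.
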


A Voronoi-like cycle $C$ must share several bisector arcs with its
Voronoi cycle $\hat C$, at least one bisector arc
for each site in $S_{\hat C}$.
Let $C\cap\hat C$ denote the sequence of common arcs between $C$ and $\hat C$.
 %
%
       Several other $p$-cycles $C'$, 
       where $S_{\hat
        C}\subseteq S_{C'}\subseteq S_C$, 
      may lie between  $C$ and  $\hat C$, all sharing  
      $C\cap\hat C$.
      Other $p$-cycles 
      may enclose $C$.
      Figure~\ref{fig:p-cycle-2} shows such cycles, where the innermost one is
      $\hat C$; its domain (a Voronoi region) is shown in solid grey.
      
            \begin{figure}
 		\centering
		\includegraphics[scale=0.75]{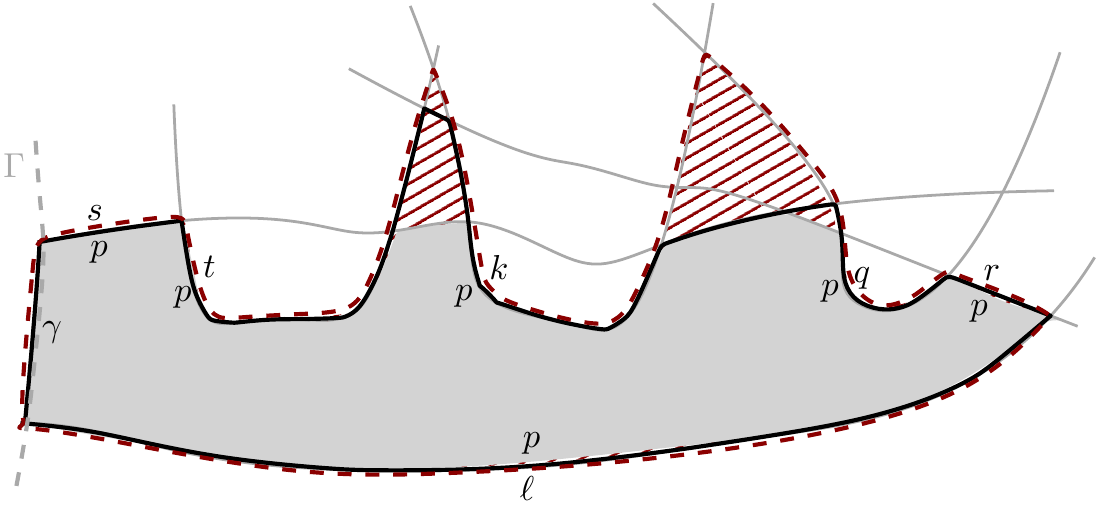}
		\caption{
                  Voronoi-like cycles for site $p$,
                  $S_c=\{s,t,k,q,r,l\}$.}
		\label{fig:p-cycle-2}
      \end{figure}

\section{The Voronoi-like graph of a cycle}
\label{sec:cycle}
Let $\J$ be an admissible bisector system and let $C$ be a
Voronoi-like cycle 
for site $p$, which involves a set of sites $S_C$ ($p\not\in
S_C$).
Let $\J_{C}\subseteq \J$ be the subset of all bisectors that are related
to the sites in $S_c$.
The cycle
$C$ corresponds to a sequence of \emph{site-occurrences} from $S_C$, which  imply
a Voronoi-like graph $\vld(C)$ in the domain of $C$, defined as
follows:

\begin{definition}
\label{def:vldC}
The Voronoi-like graph $\vld(C)$, implied by a Voronoi-like cycle $C$, is a graph on
the underlying arrangement of bisectors $\arr(\J_C)\cap D_C$,  
whose leaves are the vertices of $C$, and its remaining (non-leaf) vertices are locally
Voronoi vertices, see Figure~\ref{fig:vld}.

\noindent
(The existence of such a graph
on $\arr(\J_C)\cap D_C$ remains to be established).
\end{definition}

       \begin{figure}
 		\centering
		\includegraphics[scale=0.75]{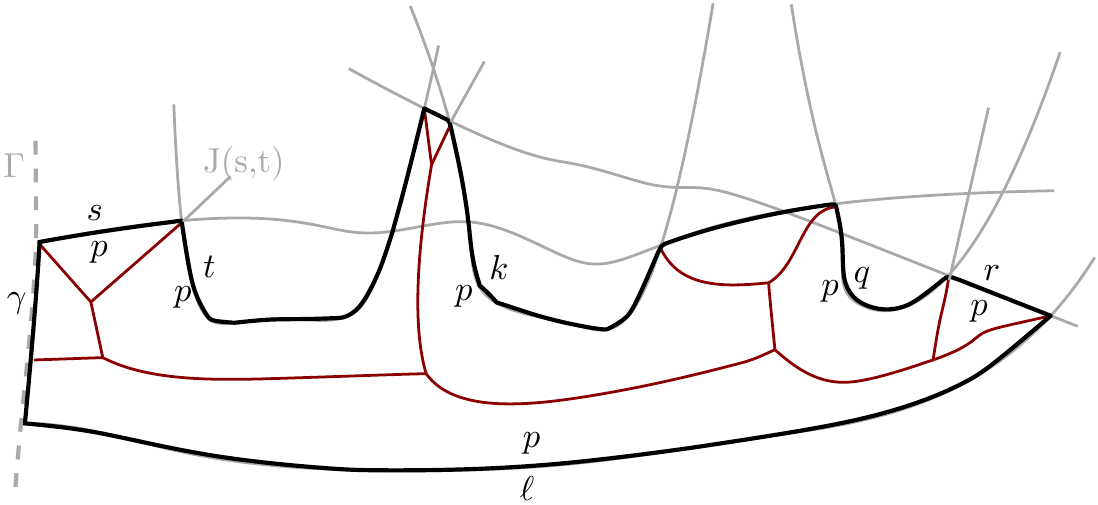}
		\caption{The 
                  Voronoi-like graph $\vld(C)$ (red tree) of the site-cycle $C$ of Fig.~\ref{fig:p-cycle}}
		\label{fig:vld}
      \end{figure}

In this section we prove the following theorem for
any Voronoi-like cycle $C$ on $\arr(\J_p\cup\Gamma)$.
%

\begin{theorem}
\label{thrm:vld}
The Voronoi-like graph $\vld(C)$ of a $p$-cycle $C$ has the following
properties:
\begin{enumerate}
\item it exists and is unique;
  \item it is a tree if $C$ is bounded, and a forest if $C$ is
    unbounded;
  \item it can be computed in expected linear time, if it is the
    boundary of a Voronoi region. 
    Otherwise, in expected linear time we can compute $\vld(C')$
     for some $p$-cycle $C'$ 
     that is enclosed by $C$ (possibly, $C'=C$ or $C'=\hat C$).
\end{enumerate}
\end{theorem}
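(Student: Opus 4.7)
The plan is to establish items 1 and 2 by a structural induction on $|S_C|$ and to address item 3 by adapting the randomized incremental construction of \cite{JP18}.

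For item 1, existence is proved by induction on $|S_C|$. The base case $|S_C|=1$ is trivial: $C$ lies on a single bisector $J(p,s)$ possibly joined by $\Gamma$-arcs, so there are no bisectors of $\J_C$ strictly inside $D_C$, and $\vld(C)$ consists only of the leaves of $C$. For the inductive step, I would pick any vertex $v\in C$ where two $p$-bisector arcs $J(p,s_1)$ and $J(p,s_2)$ meet; by the degree-2 local Voronoi condition at $v$, the bisector $J(s_1,s_2)$ emanates from $v$ into $D_C$. I trace $J(s_1,s_2)$ inward: either it eventually crosses $C$, in which case this chord splits $D_C$ into two smaller subdomains, each bounded by a $p$-cycle with fewer site occurrences to which induction applies; or it first meets another arc of $\arr(\J_C)$ at a locally Voronoi vertex, where the structure branches and the process continues along the new arcs. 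For uniqueness, I would consider the superposition of two candidate Voronoi-like graphs $G_1, G_2$ on $C$: any discrepancy would enclose a bounded region whose interior is labeled by one site in $G_1$ and a different site in $G_2$; using the local Voronoi condition at every boundary vertex, together with Observation~\ref{obs:inversecycle} and axioms A1 and A2, I would derive a contradiction, mirroring the uniqueness argument of \cite{JP18}.

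Item 2 follows once item 1 is in place. A hypothetical internal cycle in $\vld(C)$ would enclose a bounded face labeled by some $s\in S_C$ and bounded entirely by $s$-related bisectors with $s$ on the interior; since $s$ also appears on the outer cycle $C$, the connectedness axiom A1 applied to the relevant subproblem, combined with Observation~\ref{obs:inversecycle}, rules this out. When $C$ is bounded, one further shows that $\vld(C)$ is connected -- a disconnected component would leave neighboring leaves of $C$ inconsistent with the site labels assigned by the other component -- so $\vld(C)$ is a tree; when $C$ is unbounded, the $\Gamma$-arcs of $C$ admit openings to infinity that may yield a forest.

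For item 3, when $C$ is a Voronoi region boundary, the algorithm of \cite{JP18} applies directly and produces $\vld(C)$ in expected linear time. For a general $p$-cycle, I would adapt the same randomized incremental framework: process the sites of $S_C$ in random order, updating the current Voronoi-like structure at each insertion, with a Chew-style backward analysis bounding the expected total cost by $O(|S_C|)$. \textbf{The main obstacle} is that when $C$ is not a Voronoi cycle, the algorithm may converge to $\vld(C')$ for some $p$-cycle $C'$ strictly enclosed by $C$ -- intermediate random updates can effectively shrink $C$ toward $\hat C$. Ensuring that the output corresponds to $\vld(C)$ exactly, rather than $\vld(C')$, requires additional invariants and, where needed, the deletion subroutine of \cite{JP18}, which can be invoked without increasing the expected linear time.
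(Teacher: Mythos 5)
Your treatment of items 2 and 3 follows essentially the same route as the paper: acyclicity of $\vld(C)$ via Observation~\ref{obs:inversecycle} and the fact that $\VR(s,S_C)$ cannot be enclosed by $C$, and for item 3 the Chew-style randomized incremental construction of \cite{JP18}, with the correct identification of the obstacle (the output may be $\vld(C')$ for a cycle $C'$ enclosed by $C$) and the remedy (the deletion subroutine). Those parts are fine.

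The genuine gap is in your existence argument for item 1. Your induction on $|S_C|$ does not go through as stated: when you trace $J(s_1,s_2)$ from a vertex $v$ of $C$ until it crosses $C$ again, the chord is an $s_1$-$s_2$ bisector, not a $p$-related bisector, so the two subdomains it creates are \emph{not} bounded by $p$-cycles, and the inductive hypothesis — which is a statement about $p$-cycles — simply does not apply to them. Your fallback ("the structure branches and the process continues along the new arcs") is precisely the point at issue: you must show that this tracing terminates, that every branching point is locally Voronoi (which requires the correct third bisector to be present and followed), and that the traced pieces close up into a consistent planar graph whose leaves are exactly the vertices of $C$. Nothing in the sketch establishes this; in a non-Voronoi $p$-cycle the site occurrences on $C$ are "out of order" relative to $\V(S_C)$, and it is exactly this that can make a naive tracing fail. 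The paper resolves this with a different mechanism: the \emph{split relation} $s<_p t$ (Definition~\ref{def:split}) is shown to be acyclic, and $C$ together with $\vld(C)$ is built incrementally in a topological order $o_p$ of this relation using the arc-insertion operation (Lemma~\ref{lem:insertion}); the ordering guarantees that at each step a distinct component of $J(s_{i+1},p)\cap D_{C_i}$ exists for every arc of $s_{i+1}$ in $C$, so the construction never gets stuck, yielding existence in $O(|C|^2)$ time (Lemma~\ref{lem:existance}). You need either this ordered construction or a substitute argument that handles termination and global consistency of your tracing. Your uniqueness sketch is also only a plan; the paper derives uniqueness from Lemma~\ref{lem:missingarc} (if $J(s_\alpha,s_\beta)$ reached into $R(\alpha,C)$, an arc of $J(p,s_\beta)$ would have to be missing from $C$), which is a concrete criterion rather than a superposition argument.
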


Recall that $\hat C$ denotes the Voronoi-cycle enclosed by $C$, where
$\hat C=\partial [\VR(p,S_C\cup\{p\})\cap D_\Gamma]$.
Then $\vld(\hat C)$  is the Voronoi
diagram $\V(S_C)\cap D_{\hat C} $.
To derive Theorem~\ref{thrm:vld} we show each item separately in
subsequent lemmas.

\begin{lemma}
\label{lem:vld}
Assuming that it exists, $\vld(C)$ is a forest, and if $C$ is bounded,
then $\vld(C)$ is a tree.
Each face of $\vld(C)$ is incident to exactly one bisector arc
$\alpha$ of
$C$, which is called the face (or region) of $\alpha$,
denoted $R(\alpha,C)$.
\end{lemma}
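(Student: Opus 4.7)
The plan is to establish two structural facts about $\vld(C)$ and then deduce the tree/forest dichotomy and the face--arc bijection by a single Euler-formula count. The two facts are: (a) $\vld(C)$ contains no cycle, and (b) at each bisector--bisector vertex $v$ of $C$ where two consecutive arcs $\alpha,\alpha'$ meet, exactly one edge of $\vld(C)$ is incident to $v$, namely the arc of $J(s_\alpha,s_{\alpha'})$ entering $D_C$, and this edge locally separates the face of $\alpha$ from the face of $\alpha'$. Fact~(b) is almost immediate: axiom~A2 applied at $v$ forces $J(s_\alpha,s_{\alpha'})$ through $v$, and by general position it is the only bisector of $\J_C$ incident to $v$, so the unique arc of $\arr(\J_C)\cap D_C$ at the leaf $v$ lies on $J(s_\alpha,s_{\alpha'})$; locally this arc has the $s_\alpha$-dominant side on one side and the $s_{\alpha'}$-dominant side on the other, so consecutive bisector arcs of $C$ always bound distinct faces $R(\alpha,C)\ne R(\alpha',C)$.

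For (a), I would argue by contradiction. Suppose $\vld(C)$ contains a cycle $Z$; then the region enclosed by $Z$ lies in $D_C$ and contains an innermost face $f$ of $\vld(C)$ of some site $s_f$ that does not touch $C$. Since the system is admissible, Corollary~\ref{cor:AVD-D} applied to $\vld(C)$ as a Voronoi-like graph in the domain $D_C$ gives $\VR(s_f,S_C)\cap D_C\subseteq f$. Because $s_f\in S_C$, site $s_f$ occurs on $C$ at some arc $\alpha\subseteq J(p,s_f)$, and the adjacent face $R(\alpha,C)$ is another face of $\vld(C)$ of the same site, distinct from $f$ since $R(\alpha,C)$ touches $C$ while $f$ does not. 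Applying Theorem~\ref{thrm:face} to $R(\alpha,C)$ leaves two options: in option~(1), $R(\alpha,C)$ contains a Voronoi subface $r'\subseteq \VR(s_f,S_C)\cap D_C\subseteq f$, which contradicts $f\cap R(\alpha,C)=\emptyset$; option~(2), in which $R(\alpha,C)$ is disjoint from $\VR(s_f,S_C)$, must then hold simultaneously at every occurrence of $s_f$ on $C$. Ruling out this residual case is the main obstacle of the proof; the plan is to analyse the labels along $\partial R(\alpha,C)$ and $\partial f$ and apply Observation~\ref{obs:inversecycle} to a suitable cycle of $s_f$-related bisector arcs that emerges from combining them, to obtain the contradiction.

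Once (a) and (b) are in hand, $\vld(C)$ is a forest with leaves located at the vertices of $C$ (plus any $\Gamma$-endpoints of $\vld(C)$-edges in the unbounded case) and with internal vertices of degree~$3$. Fact~(b) ensures that distinct bisector arcs of $C$ bound distinct faces, so that each of the $k$ bisector arcs of $C$ is in bijection with one inner face of the planar graph $\vld(C)\cup C$. Substituting this equality into the standard forest/Euler count on $\vld(C)\cup C$ then pins the number of connected components to $c=1$ when $C$ is bounded, giving a tree, and to $c\ge 1$ in general, giving a forest; the face--arc bijection that defines $R(\alpha,C)$ comes along at the same time.
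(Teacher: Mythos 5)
Your argument has two genuine gaps, one in each half of the lemma. For acyclicity, you reduce correctly to the situation where a bounded face $f$ of site $s_f$ does not touch $C$, but you then leave the decisive step unproved: the ``residual case'' of Theorem~\ref{thrm:face}, in which every face of $s_f$ meeting $C$ is disjoint from $\VR(s_f,S_C)$, is only accompanied by a ``plan'' to find a suitable cycle for Observation~\ref{obs:inversecycle}; no such cycle is exhibited. The missing idea is Observation~\ref{obs:encloseVR} together with the fact that $\VR(s_f,S_C)$ cannot be confined to $D_C$: since $J(p,s_f)$ contributes an arc to $C$ with the label $s_f$ on its exterior side, $\VR(s_f,S_C\cup\{p\})\subseteq\VR(s_f,S_C)$ extends outside $C$; on the other hand, the boundary of a face of $\vld(C)$ not touching $C$ would itself be a Voronoi-like cycle for $s_f$ inside $D_C$ and would therefore, by Observation~\ref{obs:encloseVR}, have to enclose $\VR(s_f,S_C)$ --- impossible. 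Your own setup via Corollary~\ref{cor:AVD-D} (which forces $\VR(s_f,S_C)\subseteq f\subseteq D_C$) is actually one step away from this contradiction, but you neither state nor prove that the region escapes $D_C$, and the detour through $R(\alpha,C)$ and the case analysis of Theorem~\ref{thrm:face} does not substitute for it.

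The second gap is in the face--arc correspondence. Your fact~(b) only shows that \emph{consecutive} arcs of $C$ bound distinct faces; the configuration that actually has to be excluded is a single face $f$ incident to two \emph{non-consecutive} arcs $\phi_1,\phi_2\subseteq J(p,s_f)$ of the same site (the only possible offender, since a face of site $s_f$ can be incident only to arcs of $J(p,s_f)$). Hence the ``bijection'' between arcs and inner faces that your Euler count relies on is unestablished, and the count cannot supply it: with $t$ trees, $k$ arcs, leaves at all $k$ vertices of $C$ and internal degree $3$, Euler's formula gives $k-t+1$ inner faces, so a non-injective arc-to-face assignment is perfectly consistent with $t\ge 2$; one relation cannot pin down two unknowns. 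The paper closes this case directly: if $f$ were incident to $\phi_1$ and $\phi_2$, one branch of $\partial f\setminus\{\phi_1,\phi_2\}$ together with the portion of $J(p,s_f)$ between $\phi_1$ and $\phi_2$ would form a cycle of $s_f$-related arcs carrying the label $s_f$ entirely on its exterior, contradicting Observation~\ref{obs:inversecycle}. You need this (or an equivalent) argument before the Euler count can deliver the tree conclusion for bounded $C$.
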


\begin{proof}
  We first show that $\vld(C)$ contains no cycles.  
By Observation~\ref{obs:encloseVR}, any
Voronoi-like cycle for a site $s\in S_C$  must entirely enclose 
$\VR(s,S_C)$, thus, it must also enclose
$\VR(s,S_C\cup\{p\})\subseteq \VR(s,S_C)$.
Since  $J(p,s)$ 
contributes arc(s) to $C$,
it follows that $\VR(s,S_C\cup\{p\})$ must extend outside of $C$,
hense, $\VR(s,S_C)$ must also extend outside of $C$. 
Since $\VR(s,S_C)$ cannot be enclosed by $C$, the same must hold for any
$s$-cycle on $S_C$. 
Thus, $\vld(C)$ may not contain a cycle. 
The same argument  implies that 
  $\vld(C)$ cannot have a face that is incident to $\Gamma$ without
also  being  incident to a bisector arc of $C$. 

       \begin{figure}
 		\centering
		\includegraphics[scale=0.95]{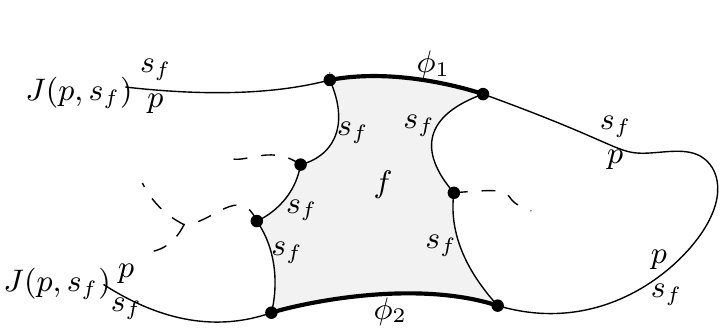}
		\caption{Proof of Lemma~\ref{lem:vld}; face $f$ is
                  shown shaded. }
		\label{fig:lemma:vld}

      \end{figure}

  Suppose now that $\vld(C)$ has a face $f$, which 
  belongs to a site $s_f\in S_c$, incident to two bisector arcs 
  $\phi_1,\phi_2\in C$ such that  $\phi_1,\phi_2 \subseteq J(p,s_f)$,
   see Figure~\ref{fig:lemma:vld}.
  Then one brunch of $\partial
          f\setminus\{\phi_1,\phi_2\}$ and the component of
          $J(p,s_f)$ between $\phi_1$ and $\phi_2$ would form a cycle
          having  the label $s_f$ 
          outside, see Figure~\ref{fig:lemma:vld}.
          Such
          a cycle is not possible in an admissible bisector system, 
          by Observation~\ref{obs:inversecycle}, deriving a contradiction.
          Thus, each face of $\vld(C)$ must be incident to exactly one
          bisector arc.
\end{proof}

If $C$ is the boundary of a Voronoi region, the tree property of the
Voronoi diagram  $\V(S)\cap D_C$ had been previously shown in~\cite{JP18,BKL19}.
Lemma~\ref{lem:vld} generalizes it to Voronoi-like
graphs for  any Voronoi-like cycle $C$. 

In~\cite{JP18}, a \emph{Voronoi-like diagram} was defined 
as a tree structure subdividing the domain of a so-called
\emph{boundary curve}, which was implied by a set of Voronoi edges.  A
boundary curve is a Voronoi-like cycle but not necessarily vice versa. 
That is, the tree structure of~\cite{JP18} was defined using 
some of the properties in Lemma~\ref{lem:vld} as definition,
and the question whether such a tree always existed
had remained open.
In this paper a Voronoi-like graph is defined entirely differently, but
Lemma~\ref{lem:vld} implies that 
the two structures are equivalent within the domain of a boundary
curve. As a result,
we can use and extend the results of \cite{JP18}. 

Given a $p$-cycle $C$,
and a bisector $J(p,s)$ that  intersects it,
an \emph{arc-insertion operation} can be defined~\cite{JP18}  as follows.
Let $\alpha\subseteq J(p,s)$ be a maximal
component of $J(p,s)$ in the domain of $C$, see Figure~\ref{fig:cases}.
         Let $C_\alpha=C\oplus
         \alpha$  denote the $p$-cycle 
         obtained by substituting with $\alpha$ the superflous portion of $C$ between the
         endpoints of $\alpha$. 
         (Note that only one portion of $C$
         forms a $p$-cycle with $\alpha$, thus, no ambiguity exists).
          There are three different main cases possible as a result, see Figure~\ref{fig:cases}:
         1) $\alpha$ may lie between two consecutive
         arcs of $C$, in which case  $|C_\alpha|=|C|+1$; 2) $\alpha$ may cause the deletion of 
         one or more arcs in $C$, thus, $|C_\alpha|\leq |C|$; 3) the endpoints of
         $\alpha$ may lie on the same arc $\omega$ of $C$, in which
         case $\omega$ splits in two different arcs, thus,
         $|C_\alpha|=|C|+2$. 
         In all cases $C_\alpha$ is enclosed by $C$ ($|\cdot|$
         denotes cardinality).
         
       \begin{figure}
 		\centering
		\includegraphics[scale=0.95]{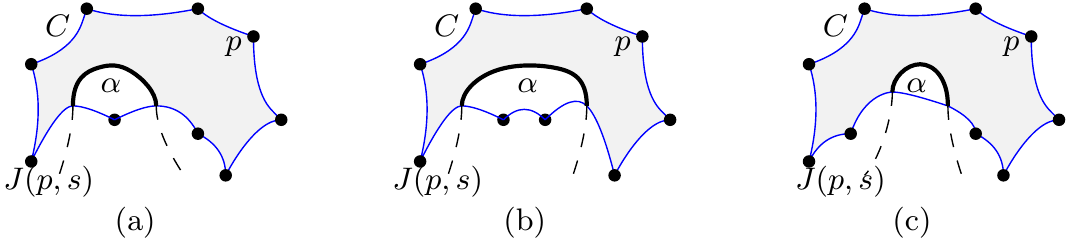}
		\caption{Three cases of the arc insertion operation.}
		\label{fig:cases}
               
              \end{figure}
              
The arc-insertion operation can be naturally extended  to the Voronoi-like
graph $\vld(C)$ to insert arc $\alpha$ and obtain
$\vld(C_\alpha)$. We use
the following lemma, which can be extracted from~\cite{JP18}
(using Theorem~18, Theorem~20, and Lemma~21 of \cite{JP18}). 


\begin{lemma}[\cite{JP18}]
\label{lem:insertion}
Given $\vld(C)$, arc $\alpha\in J(p,s) \cap D_C$, and the endpoints of $\alpha$ on $C$, we can
compute the merge curve 
$J(\alpha)=\partial R(\alpha, C_\alpha)$, using standard
techniques 
as in ordinary Voronoi diagrams.
If the endpoints of $\alpha$ lie on different arcs of $C$, or $\Gamma$,
the time complexity is  $O(|J(\alpha)| +|C\setminus C_\alpha|)$.
Otherwise,  $\alpha$ \emph{splits} a bisector arc $\omega$, and
its region  $R(\omega,C)$, into
$R(\omega_1,C_\alpha)$ and $R(\omega_2,C_\alpha)$;
the time complexity increases to  $O(|J(\alpha)|+
\min\{|R(\omega_1,C_\alpha)|, |R(\omega_2,C_\alpha)|\})$.
\end{lemma}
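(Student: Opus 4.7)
The plan is to compute $J(\alpha)$ by the standard merge-curve walk familiar from Voronoi diagram merging. Let $x_1, x_2$ be the endpoints of $\alpha$ on $C$, each lying on a bisector arc $\beta_i\subseteq J(p,s_{\beta_i})$ of $C$ (or on $\Gamma$). At $x_i$ the merge curve enters the face $R(\beta_i, C)$ along the bisector $J(s,s_{\beta_i})$. Starting from $x_1$, I would trace the current bisector $J(s,s_r)$ through the current face $R(\beta, C)$ (initially $s_r=s_{\beta_1}$) until it first crosses an edge $e$ of $\vld(C)$. The edge $e$ separates $R(\beta, C)$ from a neighboring face of some site $s_{r'}$, so the trace now switches to the bisector $J(s,s_{r'})$ and continues. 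By axiom~(A2) the three bisectors $J(s_r,s_{r'})$, $J(s,s_r)$, and $J(s,s_{r'})$ meet at the crossing point, which is therefore locally Voronoi in the sense of Definition~\ref{def:local-VV}; by axiom~(A4) the crossing is transversal. The walk terminates when the current bisector reaches $C$ at $x_2$.

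Correctness of this trace is standard for Voronoi merges: the resulting curve $J(\alpha)$ is a simple arc in $D_C$ whose interior vertices are locally Voronoi, so together with $\alpha$ it bounds a single new face $R(\alpha, C_\alpha)$ inside $D_{C_\alpha}$, while the surviving portions of faces of $\vld(C)$ retain their locally Voronoi structure and form the rest of $\vld(C_\alpha)$. Each step of the walk does $O(1)$ work and produces one edge of $J(\alpha)$, so tracing costs $O(|J(\alpha)|)$. The remaining cost is to discard the part of $\vld(C)$ lying in $D_C\setminus D_{C_\alpha}$: by Lemma~\ref{lem:vld} each face of $\vld(C)$ is incident to a unique arc of $C$, so the fully discarded faces are in one-to-one correspondence with $C\setminus C_\alpha$ and have total complexity $O(|C\setminus C_\alpha|)$, giving the stated bound $O(|J(\alpha)|+|C\setminus C_\alpha|)$.

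The main obstacle is the split case, where both endpoints of $\alpha$ lie on the same arc $\omega$: no face is discarded, but the single face $R(\omega, C)$ is carved by $J(\alpha)$ into $R(\omega_1, C_\alpha)$ and $R(\omega_2, C_\alpha)$, and we cannot afford to inspect all of $R(\omega, C)$ since it may be much larger than either piece. To achieve the $\min$-bound I would trace $J(\alpha)$ simultaneously from $x_1$ and from $x_2$, alternating one step on each end, and halt both the moment the two traces meet; in lock-step I would also scan the interior edges of $R(\omega, C)$ from both sides of $J(\alpha)$, stopping the scan as soon as the smaller of $R(\omega_1, C_\alpha)$ and $R(\omega_2, C_\alpha)$ is closed off. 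The larger side is then recovered in $O(1)$ additional work from what remains of $R(\omega, C)$ after detaching the smaller piece. This is exactly the bookkeeping argument used in Theorem~18, Theorem~20, and Lemma~21 of~\cite{JP18} for boundary curves; since Lemma~\ref{lem:vld} certifies that Voronoi-like cycles are structurally equivalent to boundary curves within $D_C$, the same analysis applies verbatim and establishes both complexity bounds.
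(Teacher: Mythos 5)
Your proposal is correct and follows essentially the same route as the paper: the lemma is imported from~\cite{JP18} (Theorems~18, 20 and Lemma~21 there), and the paper's only added content is the observation that those correctness proofs carry over unchanged to a Voronoi-like cycle once the arc $\alpha$ lies in the cycle's domain --- which is exactly the deferral you make via Lemma~\ref{lem:vld}. Your reconstruction of the merge-curve walk and the two-sided trace for the split case is a reasonable sketch of the underlying technique, though the simplicity and termination of the trace on a Voronoi-like (rather than genuine Voronoi) graph is precisely the nontrivial part supplied by the cited results rather than by ``standard'' arguments.
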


The correctness proofs from \cite{JP18,JP20arxiv}, which are related
to Lemma~\ref{lem:insertion},  
remain intact if performed on a Voronoi-like cycle, as long as the arc $\alpha$ is
contained in the cycle's domain; see also 
\cite[Lemma~9]{JP20arxiv}.
Thus, Lemma~\ref{lem:insertion} can be established.

Next we prove the existence of $\vld(C)$ by construction.
To this goal we use a \emph{split relation} between bisectors in
$\J_p$ or sites in $S_C$, which had also been considered in
\cite{JP19}, see Figure~\ref{fig:split}. 

\begin{definition}\label{def:split}
For any two sites $s,t\in S_C$, 
we say that $J(p,s)$ \emph{splits}
$J(p,t)$  (we also say that $s$ splits $t$, with respect
to $p$),
if $J(p,t) \cap D(p,s)$ 
contains two connected components.
\end{definition}

\begin{figure}
 	\centering 
 		
 		\includegraphics{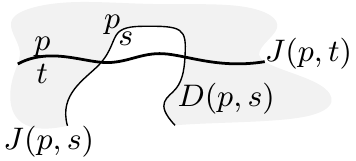}
 		\caption{$J(p,s)$ splits  $J(p,t)$. In the partial order, $s<_pt$.}
 		\label{fig:split}
\end{figure}
 
From the fact that related bisectors in an admissible bisector system
intersect at most twice, as shown in
Figs.~\ref{fig:bisector-0-intersection} and \ref{fig:bisector-2-intersection},
we can infer
that the split relation  is asymmetric and transitive, 
thus, it is also acyclic.
The split relation  induces a strict partial order on $S_C$,
where $s<_p t$, if $J(p,s)$ \emph{splits} $J(p,t)$, see Figure~\ref{fig:split}.
Let $o_p$ be a topological order of the resulting directed acyclic 
graph, which underlies  the split relation on $S_c$ induced by $p$.


The following lemma shows that $\vld(C)$ exists
by construction. 
It builds upon a more restricted version regarding a boundary
curve that had been considered in~\cite{JP19}.

\begin{lemma}
\label{lem:existance}
Given the topological ordering of the split relation $o_p$, 
$\vld(C)$
can be constructed in $O(|C|^2)$ time; thus, $\vld(C)$ exists.
Further, at the same time,
we can construct $\vld(C')$  for any other
Voronoi-like cycle $C'$ 
that is enclosed by $C$, $S_{C'}\subseteq S_C$.
\end{lemma}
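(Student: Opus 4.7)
The plan is to construct $\vld(C)$ by an incremental sweep that respects the topological order $o_p$ of the split relation and uses the arc-insertion operation of Lemma~\ref{lem:insertion} as its single primitive. Let $o_p=(s_1,\ldots,s_k)$ enumerate the sites of $S_C$ so that $s_i<_p s_j$ implies $i<j$. I maintain a current Voronoi-like cycle $C^{(i)}$ together with its graph $\vld(C^{(i)})$, with the invariant that $C^{(i)}$ is enclosed by $C^{(i-1)}$ and uses only bisector arcs from $\{s_1,\ldots,s_i\}$ (plus $\Gamma$-arcs). The base is $C^{(0)}=\Gamma$, viewed as a single-arc $p$-cycle via the convention $\Gamma=J(p,s_\infty)$, together with the trivial graph that has no internal vertices.

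At step $i$, I identify the components of $J(p,s_i)\cap D_{C^{(i-1)}}$ whose closures are supposed to contribute arcs to the target cycle $C$, and insert them one by one into $C^{(i-1)}$ by Lemma~\ref{lem:insertion}. Each insertion $\alpha$ produces a merge curve $J(\alpha)=\partial R(\alpha,C^{(i)})$, contributes a new face to the accumulated graph, and yields the shrunken cycle $C^{(i)}=C^{(i-1)}\oplus\alpha$. After step $k$ the current cycle equals $C$ and the accumulated structure is $\vld(C)$. The running time is $O(|C|^2)$ because there are $O(|C|)$ insertions, each costing $O(|C|)$ by Lemma~\ref{lem:insertion}.

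The topological order is what makes every insertion well defined. Because any two related bisectors meet at most twice (Observation~\ref{obs:patterns}), the only way $J(p,s_i)$ can fragment into multiple components inside $D_{C^{(i-1)}}$ is through bisectors $J(p,s_j)$ with $s_j<_p s_i$; by the choice of $o_p$, all such $s_j$ are already present in the current cycle. Hence every component of $J(p,s_i)\cap D_{C^{(i-1)}}$ is a simple arc with endpoints on identifiable arcs of $C^{(i-1)}$, precisely the situation handled by Lemma~\ref{lem:insertion}. Together with Observation~\ref{obs:inversecycle}, this also excludes illegal cycles in the accumulated graph, so Lemma~\ref{lem:vld} applies and the result is a forest, giving existence of $\vld(C)$.

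For any Voronoi-like cycle $C'$ enclosed by $C$ with $S_{C'}\subseteq S_C$, the same schedule can be continued past $C$ by inserting further components of $J(p,s)\cap D_C$ for sites $s\in S_C$, until the current cycle equals $C'$; this produces $\vld(C')$ within the same $O(|C|^2)$ budget. The main obstacle will be to verify the invariant supporting step $i$, namely that every connected component of $J(p,s_i)\cap D_{C^{(i-1)}}$ destined to sit on $C$ really does have both endpoints on $C^{(i-1)}$, and that the merge curves returned by Lemma~\ref{lem:insertion} glue onto the previously built graph so as to yield only locally Voronoi vertices. This will require a case analysis based on the admissible intersection patterns of Figs.~\ref{fig:bisector-0-intersection} and~\ref{fig:bisector-2-intersection}, combined with the acyclicity and transitivity of $<_p$.
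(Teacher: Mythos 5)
Your construction of $\vld(C)$ itself follows the paper's proof almost exactly: process the sites of $S_C$ in the topological order $o_p$, argue that because every splitter of $s_i$ has already been handled each arc of $s_i$ in $C$ sits in its own component of $J(p,s_i)\cap D_{C^{(i-1)}}$ with endpoints on the current cycle, and realize each insertion via Lemma~\ref{lem:insertion} (the paper starts from $C_1=\partial(D(p,s_1)\cap D_\Gamma)$ rather than from $\Gamma$, an immaterial difference). The ``main obstacle'' you defer at the end is also left at essentially the same level of detail in the paper, which delegates the feasibility and correctness of each individual arc insertion to the correctness proofs of \cite{JP18}; so for the first claim you are on the paper's route.

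Where you genuinely diverge is the second claim, about $\vld(C')$, and there your route has a gap. You propose to finish building $C$ and only then ``continue past $C$'' by inserting the arcs of $C'\setminus C$. The paper instead interleaves the two constructions: at the step for site $s_{i+1}$ it inserts into the $C'$-cycle \emph{all} components of $J(s_{i+1},p)\cap D_{C_i}$ whose arcs appear in \emph{either} $C$ or $C'$ (including arcs of $C$ that are absent from $C'$), and maintains the invariants $R(\beta,C_{i+1}')\subseteq R(\beta,C_{i+1})$ for shared arcs and that $C_{i+1}'$ stays enclosed by $C_{i+1}$. The reason this interleaving matters is the same reason $o_p$ is needed in the first place: for an arc $\alpha'$ of a site $s_j$ in $C'\setminus C$ to be a \emph{distinct, identifiable} component of $J(p,s_j)$ in the current domain, every site that splits $s_j$ must already have contributed its relevant arcs to the current cycle. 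In your deferred phase you insert the leftover arcs ``for sites $s\in S_C$'' in an unspecified order, so a splitter's arc belonging to $C'\setminus C$ may not yet be present when you process $s_j$, and the component of $J(p,s_j)\cap D_{\text{current}}$ you grab may then contain two arcs of $C'$ together with the stretch of bisector between them; inserting that maximal component does not produce $C'$. To repair this you would at minimum have to run the deferred phase in the order $o_p$ again and then still argue that the resulting cycle converges to $C'$ exactly, which is precisely what the paper's invariant-based interleaved argument supplies.
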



\begin{proof}
Given the order $o_p$, we follow the randomized approach of
Chew~\cite{Chew90}, and apply the arc-insertion operation of
Lemma~\ref{lem:insertion}, which is extracted from~\cite{JP18}.
Let the sites in $S_C$ be numbered according to 
$o_p=(s_1,s_2\ldots s_m)$, $m=|S_C|$.
We first show that $C$ can be constructed incrementally, by arc-insertion,
following $o_p$.

Let $C_i$ denote the $p$-cycle constructed by the first
$i$ sites in $o_p$. 
$C_1$ consists of $J(s_1,p)$ and a $\Gamma$-arc, 
that is, $C_1=\partial(D(p,s_i)\cap
D_\Gamma)$. Clearly $C_1$ encloses $C$.
Suppose  that $C_i$ encloses $C$.
Then, given $C_i$, let $C_{i+1}$ by the $p$-cycle obtained
by inserting to $C_i$ the components of $J(s_{i+1},p)\cap D_{C_i}$, which
correspond to arcs in $C$.
For each such component $\alpha$ ($\alpha\in J(s_{i+1},p)\cap
D_{C_i}$), if some portion of $\alpha$ appears in $C$, then compute
$C_i\oplus \alpha$; if $\alpha$ does not appear in $C$, ignore
it.
Let $C_{i+1}$ be the resulting
$p$-cycle after all such components of $J(s_{i+1},p)$ have been inserted to $C_i$, one by one.
Because  any site whose $p$-bisector splits 
$J(p,s_{i+1})$ has already been processed, a distinct component
of $J(s_{i+1},p)\cap D_{C_i}$ must exist for each arc of 
$s_{i+1}$ in $C$.
Thus, $C_{i+1}$ can be derived from $C_i$ and must enclose $C$.


We have shown that $C$ can be constructed incrementally, if we follow
$o_p$, in time  $O(|C|^2)$.
It remains to construct the Voronoi-like graph
$\vld(C_i)$ at each step $i$.
To this end, we use Lemma~\ref{lem:insertion}, starting at 
$\vld(C_1)=\emptyset$.
Given $\vld(C_i)$ and $C_{i+1}$,  we can apply
Lemma~\ref{lem:insertion} to each arc $\alpha \subseteq J(s_{i+1},p)$ in $C_{i+1}\setminus
C_i$.
The correctness proof of \cite{JP18} ensures the feasibility and the correctness of each
arc insertion, thus, it also ensures the existence of $\vld(C_{i+1})$.

The above incremental construction can also
compute $\vld(C')$ 
by computing both $C_i$ and $C_i'$
at each step $i$.
Suppose $C_i=C_i'$, where $C_1'=C_1$.
When considering site $s_{i+1}$, we
insert to  $C_i'$ all components of $J(s_{i+1},p)\cap D_{C_i}$ 
corresponding to arcs of $s_{i+1}$, which appear in either  $C$ or 
$C'$. 
Thus,  $C_{i+1}'$ is derived from  $C_{i+1}$ by inserting any additional
arcs $\alpha'$ of $s_{i+1}$, where  $\alpha'\in C'\setminus C$.
Note that all arcs of
$s_{i+1}$ that appear in $C$ are inserted to $C_{i+1}'$, even if they do not
appear in $C'$.
This is possible because of the
order $o_p$: any site whose $p$-bisector splits 
$J(p,s_{i+1})$ has already been processed, thus, a distinct component
of $J(s_{i+1},p)\cap D_{C_i}$ must exist for each arc of
$s_{i+1}$ in either $C$ or $C'$, which can be identified.

Referring to $\vld(C_{i+1})$, the insertion of an additional arc
$\alpha'$ may only cause an existing region to shrink.
Therefore, we derive two invariants: 1. $R(\beta,C_{i+1}')\subseteq R(\beta,C_{i+1})$ for any arc $\beta\in
C_{i+1}\cap C_{i+1}'$; and 2.  $C_{i+1}'$ is enclosed by
$C_i$. 
The invariants are maintained in subsequent steps.
The fact that step $i+1$ starts with $C_{i+1}'$, which is enclosed by
$C_{i+1}$, does not make a difference to the above arguments.
Thus, the invariants hold for
$C_n$ and $C_n'$, therefore, $C_n'=C'$.
\end{proof}

The following lemma can also be extracted from \cite{JP18,JP20arxiv}. 
It  can be used to
establish the uniqueness of $\vld(C)$.
Similarly to Lemma~\ref{lem:insertion}, its original statement 
does not refer to a $p$-cycle, 
however, nothing in its proof prevents its adaptation 
to a $p$-cycle, 
see~\cite[Lemma~29]{JP20arxiv}.

\begin{lemma} \cite{JP20arxiv}
 \label{lem:missingarc}
Let $C$ be a $p$-cycle and let $\alpha,\beta$ be two bisector arcs in
$C$, where $s_\alpha\neq
s_\beta$.
Suppose that a component $e$ of $J(s_\alpha,s_\beta)$ intersects  $R(\alpha,C)$.
Then $J(p,s_\beta)$ must intersect $D_c$ with a component
$\beta'\subseteq J(p,s_\beta)\cap D_c$ such that $e$ is a portion of
$\partial R(\beta',C\oplus\beta')$.
\end{lemma}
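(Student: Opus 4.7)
The plan is to construct the component $\beta'$ explicitly by tracing across faces of $\vld(C)$, and then verify, via Lemma~\ref{lem:insertion}, that inserting $\beta'$ into $C$ places $e$ on the boundary of the resulting face $R(\beta', C\oplus \beta')$.

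To construct $\beta'$, I would trace $e$ inside $R(\alpha,C)$ between its two endpoints $v_1, v_2$ on $\partial R(\alpha,C)$. At each endpoint $v_i$, the curve $J(s_\alpha,s_\beta)$ meets either $\alpha$ itself or an interior edge $\sigma_i$ of $\vld(C)$ lying on some $J(s_\alpha,s_{\gamma_i})$, where $s_{\gamma_i}$ is the site labelling the face of $\vld(C)$ across $\sigma_i$ from $R(\alpha,C)$. In either case, axiom~A2 applied to the three sites meeting at $v_i$ forces a third related bisector to pass through $v_i$: when $\sigma_i=\alpha$, this third bisector is $J(p,s_\beta)$, placing a crossing of $J(p,s_\beta)$ directly on $C$; when $\sigma_i\subseteq J(s_\alpha,s_{\gamma_i})$, it is $J(s_{\gamma_i},s_\beta)$, and the argument propagates into the neighboring face of $\vld(C)$ labelled $s_{\gamma_i}$. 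Using Observation~\ref{obs:patterns} (related bisectors meet at most twice and transversally) to restrict how pairs of related bisectors interact, and Observation~\ref{obs:inversecycle} to exclude inverse cycles of label $p$ inside $D_C$, the propagation terminates at two points of $J(p,s_\beta)\cap C$. The portion of $J(p,s_\beta)$ joining these two points inside $D_C$ is the desired component $\beta'$; together with $e$ it cobounds the locus inside $D_C$ in which $s_\beta$ would dominate both $p$ and $s_\alpha$ in the three-site diagram $\V(\{p,s_\alpha,s_\beta\})$.

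To verify the boundary claim, I apply Lemma~\ref{lem:insertion} to insert $\beta'$ into $C$. The merge curve $\partial R(\beta', C\oplus \beta')$ is traced from the two endpoints of $\beta'$ on $C$, following in each visited face of $\vld(C)$ the bisector between $s_\beta$ and the face's current label. Entering $R(\alpha,C)$ at the endpoint of $\beta'$ that is also the endpoint of $e$, the trace must follow $J(s_\alpha,s_\beta)$; since $e$ is the unique piece of $J(s_\alpha,s_\beta)$ inside $R(\alpha,C)$ on the $\beta'$ side, the merge curve coincides with $e$ along this stretch and then exits $R(\alpha,C)$ at the other endpoint of $e$. Hence $e\subseteq \partial R(\beta', C\oplus\beta')$.

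The principal obstacle will be the propagation argument of the construction step, particularly when neither $v_1$ nor $v_2$ lies on $\alpha$. One must ensure that the chain of adjacent faces of $\vld(C)$ crossed by $J(p,s_\beta)$ terminates on $C$ and yields a single connected arc $\beta'$, without forming a closed curve or crossing $C$ at spurious locations. This is the technical heart of \cite[Lemma~29]{JP20arxiv}, and since the argument rests only on admissibility and on the local Voronoi property of the vertices of $\vld(C)$---both of which hold for an arbitrary $p$-cycle $C$---the adaptation should go through, in accordance with the remark preceding the lemma statement.
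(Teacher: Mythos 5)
Your proposal is correct and matches the paper's treatment of this lemma: the paper gives no self-contained proof, but simply cites \cite[Lemma~29]{JP20arxiv} and observes that nothing in that proof depends on $C$ being a boundary curve rather than an arbitrary $p$-cycle, which is exactly the justification you give for the adaptation. Your sketch of the propagation argument (tracing $e$ across faces of $\vld(C)$ via axiom A2 and deferring the termination argument to the cited source) is consistent with the proof being referenced, and is in fact more detailed than what the paper itself provides.
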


By Lemma~\ref{lem:missingarc},  if $J(s_\alpha,s_\beta)$ intersects
$R(\alpha,C)$, then a face of $s_\beta$ must be \emph{missing} from
$\vld(C)$ (compared to $\vld(\hat C)$) implying that an arc of $J(p,s_\beta)$ is missing
from $C$. Then $\vld(C)$ must be unique.

We now use the randomized incremental construction of \cite{JP18} to
construct $\vld(C)$, which in turn follows Chew~\cite{Chew90},
to establish the last claim of Theorem~\ref{thrm:vld}.
Let $o=(\alpha_1,\ldots \alpha_n)$ be a random permutation of the bisector arcs
of $C$, where each arc represents a different occurrence of a site in $S_C$.
The incremental algorithm works in two phases.
In phase~1, delete arcs from $C$
in the reverse order $o^{-1}$, 
while registering their neighbors at the
time of deletion.
In phase~2,  insert the arcs one by one, following
$o$, using their neighbors information from phase~1. 

Let $C_i$ denote the $p$-cycle constructed by considering the
first $i$ arcs in $o$ in this order.
$C_1$ is the $p$-cycle consisting of $J(s_{\alpha_1},p)$ and the relevant
$\Gamma$-arc. 
Given $C_i$, let $\alpha_{i+1}'$ denote the bisector
component of $J(p,s_{\alpha_{i+1}})\cap D_{C_i}$ that contains
$\alpha_{i+1}$ (if any), see Figure~\ref{fig:cases} where $\alpha$
stands for $\alpha_{i+1}'$. 
If $\alpha_{i+1}$ lies outside $C_i$, then
$\alpha_{i+1}'=\emptyset$
(this is possible if $C_i$ is not a Voronoi cycle).
Let cycle
$C_{i+1}=C_i\oplus\alpha_{i+1}'$ (if $\alpha_{i+1}'=\emptyset$,
then  $C_{i+1}=C_i$).
Given $\alpha_{i+1}'$, and  $\vld(C_i)$, the graph 
$\vld(C_{i+1})$ is obtained
by applying Lemma~\ref{lem:insertion}.


Let us point out a critical case, which differentiates from \cite{Chew90}: both endpoints of
$\alpha_{i+1}'$ lie on the same arc $\omega$ of $C_i$, see
Figure~\ref{fig:cases}(c) where $\alpha$ stands for $\alpha_{i+1}'$.
That is, the insertion of $\alpha_{i+1} $ splits the arc $\omega$ in two arcs, $\omega_1$ and $\omega_2$.
(Note $ s_{\alpha_{i+1}} <_p s_\omega$ 
but $\omega$ was inserted to $C_i$ before  $\alpha_{i+1}$).
Because of this split, $C_i$, and thus $\vld(C_i)$, is order-dependent: if $\alpha_{i+1}$ were
considered before $\omega$, in some alternative ordering, then $\omega_1$ or $\omega_2$ would not
exist in the resulting cycle, and similarly for their faces in
$\vld(C_{i+1})$.
The time to split $R(\omega,C_i)$ is proportional to the minimum complexity of
$R(\omega_1,C_{i+1})$ and $R(\omega_2,C_{i+1})$, which is added to the
time complexity
of step $i$.
Another side effect of the split relation is that $\alpha_{i+1}$ may fall outside
$C_i$, if $C$ is not a Voronoi-cycle,  in which case,  
$C_{i+1}=C_i$.
Then  $C_n\neq C$, in particular, $C_n$
is enclosed by $C$. 

Because the computed cycles are order-dependent, standard backwards analysis
cannot be directly applied to 
step $i$.
In \cite{JP20arxiv} an alternative technique was proposed, 
which can 
be applied to the above construction.
The main difference from
\cite{JP20arxiv} is 
case  $C_{i+1}=C_i$,
however, such a case has no effect to  time complexity,
thus, the analysis of 
\cite{JP20arxiv} can be applied.

\begin{proposition}
By the variant of backwards analysis in \cite{JP20arxiv}, the time complexity of step $i$ 
 is expected $O(1)$.
\end{proposition}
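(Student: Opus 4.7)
The plan is to invoke the variant of backwards analysis developed in \cite{JP20arxiv} for the randomized deletion algorithm, and to verify that the only new ingredient arising here—the possibility that $\alpha_i'=\emptyset$, in which case $C_i=C_{i-1}$—is cost-free and does not disturb the analysis.

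First, using Lemma~\ref{lem:insertion}, I would write the cost of step $i$ as $O(|J(\alpha_i')|)$ plus either $O(|C_{i-1}\setminus C_i|)$ in the standard case, or $O(\min\{|R(\omega_1,C_i)|,|R(\omega_2,C_i)|\})$ in the split sub-case of Figure~\ref{fig:cases}(c). Because $o$ is a uniformly random permutation, conditional on the set $\{\alpha_1,\ldots,\alpha_i\}$ and on $\vld(C_i)$, the identity of $\alpha_i$ is (approximately) uniform among the at most $i$ arcs currently present in $C_i$. Standard Chew-style backwards charging then bounds the expected cost of the last insertion by $\frac{1}{i}\sum_{\beta\in C_i}\bigl(|R(\beta,C_i)|+O(1)\bigr)=O(|C_i|/i)=O(1)$, since $|C_i|\le i$ and the sum of face complexities of the planar graph $\vld(C_i)$ is linear in $|C_i|$.

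The main obstacle is that, unlike Chew's original setting for points in convex position, the cycle $C_i$ is genuinely order-dependent: the splitting sub-case means $\vld(C_i)$ is not a function of $\{\alpha_1,\ldots,\alpha_i\}$ alone, so it is not immediate that removing a uniformly random arc from $C_i$ corresponds to undoing the last forward insertion. This is precisely the difficulty overcome by \cite{JP20arxiv}: there, the split cost is charged against the \emph{smaller} of the two sub-regions $R(\omega_1,C_i),R(\omega_2,C_i)$, and a careful accounting shows that the resulting backwards charge is still $O(1)$ in expectation per step. The additional sub-case $\alpha_i'=\emptyset$ leaves $\vld(C_i)=\vld(C_{i-1})$ at zero cost, so it neither appears in the cost expression nor in the backwards charging scheme, and therefore does not affect the expected $O(1)$ bound. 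Concluding, the argument of \cite{JP20arxiv} transfers essentially verbatim and yields the stated complexity.
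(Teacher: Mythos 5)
Your proposal is correct and follows essentially the same route as the paper: both identify order-dependence of the cycles $C_i$ as the obstruction to standard backwards analysis, both defer the resolution to the variant of backwards analysis in \cite{JP20arxiv}, and both observe that the only genuinely new case here, $\alpha_{i}'=\emptyset$ (so $C_i=C_{i-1}$), incurs no cost and hence does not disturb the expected $O(1)$ bound.
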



\subsection{The relation among the Voronoi-like graphs
  $\vld(C)$, $\vld(C')$, and $\vld(\hat C)$}

 In the following proposition, the first claim follows from
 Theorem~\ref{thrm:face} and the second follows from the proof of
 Lemma~\ref{lem:existance}.

\begin{proposition}
\label{cor:subset}
Let $C'$ be a Voronoi-like cycle between $C$ and $\hat C$ such that $S_{\hat
  C}\subseteq S_{C'}\subseteq S_{C}$. 
\begin{enumerate}
\item  $R(\alpha,C') \supseteq R(\alpha,\hat C)$, for any arc $\alpha\in C'\cap
\hat C$.
\item  $R(\alpha,C') \subseteq R(\alpha,C)$, for any arc $\alpha\in C\cap
C'$.
\end{enumerate}
\end{proposition}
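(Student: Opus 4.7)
The plan is to establish the two inclusions separately, following the hints in the statement of the proposition.

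For the first inclusion, my approach is to apply Theorem~\ref{thrm:face} to the Voronoi-like graph $G = \vld(C')$ with domain $D = D_{C'}$ and enlarged site set $S = S_C$. This is legitimate because the proof of Theorem~\ref{thrm:face} only requires $S$ to contain the site of each face and its bisector-neighbours, and these lie in $S_{C'} \subseteq S_C$. For the face $r = R(\alpha, C')$ of site $s_\alpha$, the theorem presents two alternatives: either (i) $r$ contains a connected component $r'$ of $\VR(s_\alpha, S_C) \cap D_{C'}$, or (ii) $r$ is disjoint from $\VR(s_\alpha, S_C)$.

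The crux is to rule out alternative~(ii). I would consider a small one-sided neighbourhood $N$ of the relative interior of $\alpha$, taken on the interior side of $\hat C$. Because $\hat C = \partial(\VR(p, S_C \cup \{p\}) \cap D_\Gamma)$ and $\alpha \subseteq J(p, s_\alpha)$, shrinking $N$ ensures that every point of $N$ has $p$ as its nearest and $s_\alpha$ as its second-nearest site in $S_C \cup \{p\}$; hence $N \subseteq \VR(s_\alpha, S_C)$. Since $D_{\hat C} \subseteq D_{C'}$ and $\alpha$ is an arc of $C'$, the same $N$ sits immediately inside the arc $\alpha$ of $C'$ and is therefore contained in $R(\alpha, C')$. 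This forces alternative~(i), and the component $r'$ it supplies must be the connected component of $\VR(s_\alpha, S_C) \cap D_{C'}$ containing $N$. On the other hand, $R(\alpha, \hat C)$ equals (by Lemma~\ref{lem:vld} together with $\vld(\hat C) = \V(S_C) \cap D_{\hat C}$) the connected component of $\VR(s_\alpha, S_C) \cap D_{\hat C}$ containing $N$; it is thus a connected subset of $\VR(s_\alpha, S_C) \cap D_{C'}$ that meets $r'$, so $R(\alpha, \hat C) \cup r'$ is connected and must lie entirely in the single component $r'$. This yields $R(\alpha, \hat C) \subseteq r' \subseteq R(\alpha, C')$, establishing~(1).

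For the second inclusion, I would invoke the parallel incremental construction already carried out inside the proof of Lemma~\ref{lem:existance}, which processes $S_C$ in the topological order $o_p$ of the split relation. At step $i{+}1$, the cycle $C_{i+1}$ is obtained from $C_i$ by inserting precisely those components of $J(p, s_{i+1}) \cap D_{C_i}$ that appear in $C$, and $C_{i+1}'$ is then obtained from $C_{i+1}$ by further inserting any additional components that appear in $C' \setminus C$. By Lemma~\ref{lem:insertion}, each such extra insertion only shrinks the existing faces of the current Voronoi-like graph, so a straightforward induction on $i$ maintains the invariant $R(\beta, C_i') \subseteq R(\beta, C_i)$ for every $\beta \in C_i \cap C_i'$; taking $i = n$, where $C_n = C$ and $C_n' = C'$, yields claim~(2).

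I expect the main obstacle to lie in claim~(1), specifically in identifying the component $r'$ supplied by Theorem~\ref{thrm:face} with the one that contains $R(\alpha, \hat C)$, rather than some other component of $\VR(s_\alpha, S_C) \cap D_{C'}$; isolating the shared neighbourhood $N$ immediately inside $\alpha$ and then exploiting connectivity of $R(\alpha, \hat C)$ supplies the bridge. Claim~(2), by contrast, demands essentially no new analysis beyond reading off the shrinking-under-insertion invariant already established within the proof of Lemma~\ref{lem:existance}.
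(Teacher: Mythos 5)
Your proposal is correct and follows exactly the route the paper takes: the paper proves this proposition in one line, deriving claim (1) from Theorem~\ref{thrm:face} and claim (2) from the shrinking invariant $R(\beta,C_{i+1}')\subseteq R(\beta,C_{i+1})$ established inside the proof of Lemma~\ref{lem:existance}. Your write-up merely fills in the details (ruling out case (2) of Theorem~\ref{thrm:face} and identifying the right connected component) that the paper leaves implicit.
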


Proposition~\ref{cor:subset} indicates that the faces  of
$\vld(C')$ \emph{shrink} as we move from the outer cycle $C$ to an inner
one, 
until we reach the Voronoi faces of $\vld(\hat C)$, which are
contained in all others.
It also indicates that $\vld(C)$, $\vld(C')$ and $\vld(\hat C)$ 
share common subgraphs, and that the adjacencies of the Voronoi diagram
$\vld(\hat C)$ are  preserved. More formally, 

\begin{definition}
Let $\vld(C',C\cap C')$ be
the following subgraph of $\vld(C')$: vertex  $v\in \vld(C')$ is included in
$\vld(C',C\cap C')$, if all three faces incident to $v$ belong to 
arcs in  $C\cap C'$;
edge $e \in \vld(C')$ is included to $\vld(C',C\cap C')$ if
both faces incident to $e$ belong to arcs in  $C\cap C'$.
\end{definition}

\begin{proposition}
\label{cor:subset2}
For any Voronoi-like cycle $C'$,
enclosed by $C$, where 
$S_{C'}\subseteq S_{C}$,
it holds: 
$\vld(C',C\cap C') \subseteq  \vld(C)$. 
\end{proposition}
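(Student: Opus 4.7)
The plan is to establish the inclusion pointwise, by arguing that each vertex and each interior point of an edge of $\vld(C',C\cap C')$ actually lies on $\vld(C)$. The key ingredient is Proposition~\ref{cor:subset}(2), which gives $R(\alpha,C')\subseteq R(\alpha,C)$ for every $\alpha\in C\cap C'$. Combined with the fact that distinct faces of $\vld(C)$ correspond to distinct sites and are therefore disjoint open sets, this should force the shared substructure of $\vld(C')$ into $\vld(C)$.

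First I would handle the edges. Let $e$ be an edge of $\vld(C',C\cap C')$ separating $R(\alpha,C')$ and $R(\beta,C')$ with $\alpha,\beta\in C\cap C'$. Since $e$ is an edge of a locally Voronoi graph, it lies on the bisector $J(s_\alpha,s_\beta)$, and any interior point $x$ of $e$ is a limit point of both $R(\alpha,C')$ and $R(\beta,C')$. By the inclusion above, $x$ is then also a limit point of $R(\alpha,C)$ and $R(\beta,C)$. These are open faces of $\vld(C)$ of distinct sites $s_\alpha\neq s_\beta$, hence disjoint, so $x$ must lie on their common boundary within $\vld(C)$, i.e., on an edge or vertex of $\vld(C)$. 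Sweeping $x$ over the interior of $e$ then yields $e\subseteq\vld(C)$.

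Next I would treat the vertices analogously. A vertex $v$ of $\vld(C',C\cap C')$ is incident to three faces $R(\alpha_1,C'), R(\alpha_2,C'), R(\alpha_3,C')$ whose sites $s_{\alpha_1}, s_{\alpha_2}, s_{\alpha_3}$ are pairwise distinct, since $v$ is a locally Voronoi vertex. The same closure-and-inclusion argument places $v$ simultaneously on the boundaries of the three corresponding faces $R(\alpha_i,C)$ of $\vld(C)$; since those three open faces are pairwise disjoint, the point $v$ can only be a vertex of $\vld(C)$. Taken together with the edge statement, this gives $\vld(C',C\cap C')\subseteq\vld(C)$.

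The main obstacle I anticipate is making sure that a limit point of two site-disjoint open faces of $\vld(C)$ really must land on an edge or vertex of $\vld(C)$, rather than on the boundary cycle $C$ itself or on an unrelated arrangement feature. The first case is ruled out because interior points of $e$ lie strictly inside $D_{C'}\subseteq D_{C}$ and therefore off $C$; the second because, in the Voronoi-like graph $\vld(C)$, the boundary of each face consists precisely of arcs of $C$ together with edges and vertices of $\vld(C)$ itself. These local-topology checks are routine but need to be spelled out carefully to conclude.
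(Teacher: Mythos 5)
Your proposal is correct and takes essentially the same route as the paper: the paper states Proposition~\ref{cor:subset2} as a direct consequence of the face containments $R(\alpha,C')\subseteq R(\alpha,C)$ of Proposition~\ref{cor:subset}, leaving the point-set topology implicit, and your write-up simply spells out those details (disjointness of distinct open faces of $\vld(C)$, and the exclusion of $C$ itself as a possible location for interior points of shared edges).
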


Depending on the problem at hand, computing  $\vld(C')$ (instead of the more expensive task of computing $\vld(C)$)
may be sufficient. For an example see
Section~\ref{sec:order-k}.

Computing $\vld(C)$ in linear expected time, instead of $\vld(C')$,
is possible if the faces of $\vld(C)$ are Voronoi
regions.
This can be achieved by deleting the superflous arcs in $C'\setminus
C$, created during the arc-splits,
which are called \emph{auxiliary arcs}. 
A concrete example is given in Section~\ref{sec:CDT}.
During any step of the construction, if $R(\alpha',C_i)$ is
 a Voronoi region,  but $\alpha'\cap C=\emptyset$, 
 we can call
the site-deletion procedure of \cite{JP18}
to eliminate $\alpha'$  and $R(\alpha',C_i)$ 
from $\vld(C_i)$. In particular,

\begin{proposition}
\label{cor:del}
Given $\vld(C_i)$, $1\leq i\leq n$, we can delete $R(\alpha,C_i)$, if
$R(\alpha,C_i) \subseteq \VR(s_{\alpha}, S_{\alpha})$, where
$S_{\alpha}\subseteq S_C$ is the set of sites that define 
$\partial R(\alpha,C_i)$,
in expected time linear on $|S_{\alpha}|$.
\end{proposition}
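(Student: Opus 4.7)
The plan is to reduce the statement to the site-deletion algorithm of~\cite{JP18} applied locally around $R(\alpha, C_i)$. First I would examine the boundary $\partial R(\alpha, C_i)$ and observe that it is itself a Voronoi-like cycle for the site $s_\alpha$ on the bisector system $\J_{s_\alpha}$ restricted to sites in $S_\alpha$: indeed its vertices are locally Voronoi of degree~2 (being internal degree-3 vertices of $\vld(C_i)$ with one of the three incident faces equal to $R(\alpha,C_i)$), its arcs are $s_\alpha$-related bisector pieces, and the label $s_\alpha$ lies in its interior. Thus the framework of Theorem~\ref{thrm:vld} applies to $\partial R(\alpha, C_i)$ with site $s_\alpha$ playing the role of $p$.

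Next, the hypothesis $R(\alpha,C_i)\subseteq \VR(s_\alpha, S_\alpha)$ upgrades this Voronoi-like cycle to a Voronoi cycle: within $S_\alpha \cup \{s_\alpha\}$, the region $R(\alpha,C_i)$ is exactly $\VR(s_\alpha, S_\alpha\cup\{s_\alpha\})$, since a proper subset would force some bisector of $\J_{s_\alpha}$ to cross the interior of $\VR(s_\alpha, S_\alpha)$, contradicting Observation~\ref{obs:encloseVR} applied in $S_\alpha \cup \{s_\alpha\}$. So deleting $R(\alpha,C_i)$ from $\vld(C_i)$ is the same local operation as removing the site $s_\alpha$ from the abstract Voronoi diagram $\V(S_\alpha\cup\{s_\alpha\})$, restricted to $\overline{R(\alpha,C_i)}$.

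Then I would invoke the site-deletion procedure of \cite{JP18}, which, given an abstract Voronoi region bounded by a Voronoi cycle whose neighbors are $S_\alpha$, recomputes $\V(S_\alpha)\cap \overline{R(\alpha,C_i)}$ in expected time linear in $|S_\alpha|$. The output replaces $R(\alpha,C_i)$ in $\vld(C_i)$; correctness of the splice is immediate because the rest of $\vld(C_i)$ is unaffected (each vertex on $\partial R(\alpha,C_i)$ is the endpoint of exactly one edge incident to $R(\alpha,C_i)$ and two edges that stay), and local Voronoi-ness of all remaining and newly-created internal vertices is guaranteed by \cite{JP18}.

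The main obstacle is verifying that the preconditions of the \cite{JP18} deletion routine are satisfied by $\partial R(\alpha, C_i)$ as a Voronoi cycle, in particular that its bisector arcs carry the correct dominance labels relative to $s_\alpha$ (so that the randomized incremental analysis of \cite{JP18} transfers verbatim). This is precisely why the hypothesis $R(\alpha,C_i)\subseteq \VR(s_\alpha, S_\alpha)$ is needed rather than the weaker statement that $R(\alpha,C_i)$ is a face of $\vld(C_i)$: without it, the cycle $\partial R(\alpha,C_i)$ may merely be Voronoi-like and the \cite{JP18} runtime guarantee would not directly apply. Once this preconditions check is in place, the expected linear bound in $|S_\alpha|$ is inherited directly from \cite{JP18}.
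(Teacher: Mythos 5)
Your proposal is correct and follows essentially the same route as the paper, which gives no separate proof but justifies the proposition by directly invoking the site-deletion procedure of~\cite{JP18}: the hypothesis $R(\alpha,C_i)\subseteq \VR(s_\alpha,S_\alpha)$ guarantees that the face to be removed is a genuine Voronoi region bounded by a Voronoi cycle, which is exactly the precondition under which that procedure runs in expected time linear in $|S_\alpha|$. Your additional verification that $\partial R(\alpha,C_i)$ is a Voronoi cycle and that the splice into the rest of $\vld(C_i)$ is local is a reasonable elaboration of what the paper leaves implicit.
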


There are two ways to use Proposition~\ref{cor:del}, if applicable:
\begin{enumerate}
\item Use it when necessary to 
maintain the invariant that $C_i$ encloses $C$ (by deleting \cite{JP18} any
auxiliary arc in $C_{i-1}$ that blocks the insertion of  $\alpha_i$, thus, eliminating the case $C_i =C_{i-1}$).
\item Eliminate any auxiliary arc at the time of its creation. 
If the insertion of $\alpha_i$ splits an arc $\omega\in C_{i-1}$
into $\omega_1$ and $\omega_2$,  but $\omega_2\not\in C$, then
eliminate $R(\omega_2,C_i)$ by calling \cite{JP18}.
\end{enumerate}

The advantage of the latter is that Voronoi-like cycles become
order-independent, 
therefore,
backwards analysis becomes possible to establish the algorithm's time complexity.
We give the backwards analysis argument on the concrete
case of Section~\ref{sec:CDT}; the same type of argument, only
more technical, can be derived for this abstract formulation as well.



\section{Extending to Voronoi-like cycles of $k$ sites}
\label{sec:order-k}

Theorem~\ref{thrm:vld} can extend to a \emph{Voronoi-like $k$-cycle},
for brevity, a \emph{$k$-cycle}, which involves 
a set $P$ of $k$ sites whose labels appear
in the interior of the cycle.
A $k$-cycle $C_k$ lies in the arrangement $\arr(\J_P\cup
\Gamma)$  and its vertices are degree-2 locally Voronoi, where $\J_P$
denotes the set of bisectors related to the sites in $P$. 
It implies a Voronoi-like graph $\vld(C_k)$, which involves the
set of sites $S_C\subseteq S\setminus P$, which (together
with the sites in $P$) define  the bisector arcs of $C_k$.
$\vld(C_k)$ is defined  analogously to
Def.~\ref{def:vldC}, given $C_k$ and the set of sites $S_C$.

We distinguish two different types of $k$-cycles on $\arr(\J_P\cup
\Gamma)$:
1. a \emph{$k$-site Voronoi-like
  cycle} whose vertices are  all 
of the nearest type, e.g., the boundary  of the union of $k$
neighboring Voronoi regions; and
2. an \emph{order-$k$ Voronoi-like
  cycle} whose vertices are both of the nearest and the farthest type,
e.g.,  the boundary of an order-$k$ Voronoi face.
%
%
In either case we partition a $k$-cycle $C_k$ into maximal 
\emph{compound arcs}, each induced by one site in $S_C$.
Vertices in the interior of a compound arc are switches between
sites in $P$, and the endpoints of compound arcs are switches between
sites in $S_c$.
For an order-$k$ cycle, the former vertices 
are of the farthest type, whereas the latter
(endpoints of compound arcs) are of the nearest type.
Given a compound arc $\alpha$, let $J(\alpha)$ denote the bisector
curve that consists of the arc $\alpha$ extending the 
bisector arcs incident to its endpoints to $\Gamma$, see Figure~\ref{fig:J(a)}.
Let $P_\alpha\subseteq P$ be the subset of sites 
that (together with one site in $S_C$) define $\alpha$.

\begin{figure}
 		\centering
		\includegraphics[scale=0.8]{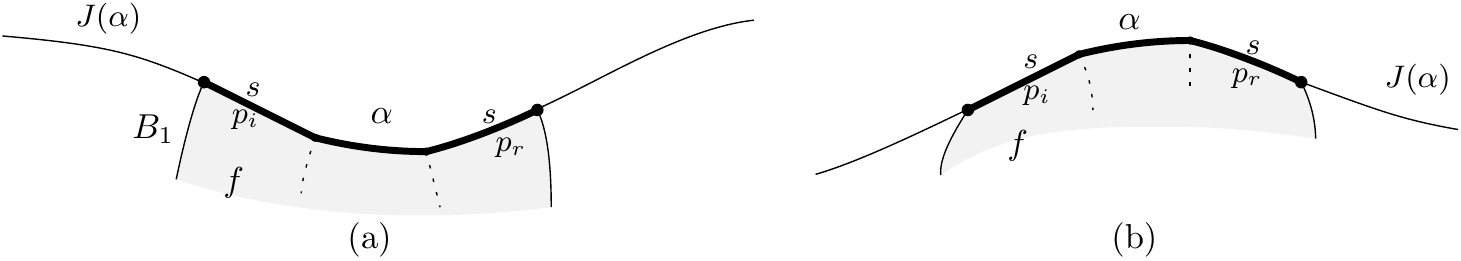}
		\caption{Bisector $J(\alpha)$ for a compound arc $\alpha$ in a $k$-site cycle (a)
                  and  an order-$k$ cycle (b).}
		\label{fig:J(a)}
                  \end{figure}

\begin{lemma}
\label{lem:vld-k}
Assuming that it exists, $\vld(C_k)$ is a forest, and if $C_k$ is bounded,
then $\vld(C_k)$ is a tree.
Each face of $\vld(C_k)$ is incident to exactly one compound arc 
$\alpha$ of
$C_k$, which is denoted as $R(\alpha,C_k)$.
\end{lemma}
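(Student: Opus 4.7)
The plan is to prove Lemma~\ref{lem:vld-k} by mirroring the proof of Lemma~\ref{lem:vld}, with compound arcs playing the role that individual bisector arcs played in the $p$-cycle setting, and with the associated curve $J(\alpha)$ of each compound arc $\alpha$ (as defined just before the lemma, see Figure~\ref{fig:J(a)}) playing the role that the single bisector $J(p,s_\alpha)$ played before. The two structural pillars remain the same: Observation~\ref{obs:encloseVR} (any site-cycle for $s$ must enclose $\VR(s,\cdot)$) and Observation~\ref{obs:inversecycle} (no cycle in the arrangement of $s$-related bisectors can carry the label $s$ on its exterior throughout).

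First I would prove that $\vld(C_k)$ has no cycles. Assume otherwise; then $\vld(C_k)$ contains a face $f$ strictly enclosed in $D_{C_k}$, belonging to some site $s\in S_C$. Since every vertex of $\vld(C_k)$ is locally Voronoi and the three faces incident to each vertex of $\partial f$ include $f$ (labelled $s$), the boundary $\partial f$ is itself a Voronoi-like cycle for $s$ on the arrangement $\arr(\J_s\cup\Gamma)$. By Observation~\ref{obs:encloseVR} applied to $s$, $\partial f$ must enclose $\VR(s,S_C\cup P)$. However, $s$ contributes at least one compound arc $\alpha\subseteq C_k$, and on the outside of $\alpha$ the label $s$ appears next to $C_k$, forcing $\VR(s,S_C\cup P)$ to extend outside of $D_{C_k}$ — contradicting the fact that $\partial f\subset D_{C_k}$. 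If $C_k$ is bounded (contains no $\Gamma$-arc), a face of $\vld(C_k)$ incident to $\Gamma$ but not to $C_k$ would also yield an $s$-cycle that encloses $\VR(s,S_C\cup P)$ while forcing the latter to extend beyond $\Gamma$, again a contradiction; hence $\vld(C_k)$ is a tree.

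Next I would establish that each face $f$ of $\vld(C_k)$ is incident to exactly one compound arc of $C_k$. Suppose $f$ has site $s_f$ and is incident to two distinct compound arcs $\alpha_1,\alpha_2\subseteq C_k$ of $s_f$. The boundary $\partial f$ minus $\alpha_1\cup\alpha_2$ splits into two branches inside $D_{C_k}$, each consisting of bisector arcs incident to $f$ and therefore $s_f$-related. Pick one such interior branch $\beta$ joining an endpoint of $\alpha_1$ to an endpoint of $\alpha_2$. Extend the bisector arcs at the chosen endpoints of $\alpha_1$ and $\alpha_2$ outward along $J(\alpha_1)$ and $J(\alpha_2)$, then close the loop along $\Gamma$. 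The resulting closed curve lies in the arrangement of $s_f$-related bisectors (together with $\Gamma$), and a direct orientation check shows that $s_f$ labels its exterior throughout (on $\beta$, $s_f$ lies on the $f$-side; on the extensions along $J(\alpha_i)$, $s_f$ lies on the side opposite $\VR(s_f,\cdot)$ outside $D_{C_k}$). This violates Observation~\ref{obs:inversecycle}, contradicting the assumption.

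The main obstacle I expect is the last step: unlike a single bisector $J(p,s_f)$, the curves $J(\alpha_1)$ and $J(\alpha_2)$ are only piecewise $s_f$-related (composed of bisectors $J(s_f,p_i)$ for $p_i\in P_{\alpha_i}$), so a little care is needed to verify that the reassembled loop is indeed a cycle in the arrangement of $s_f$-related bisectors with $s_f$ on the exterior; this is where the definition of $J(\alpha)$ together with the nearest/farthest-type labelling of the vertices along a compound arc must be used carefully. In particular, for order-$k$ cycles the internal vertices of a compound arc are farthest-type, which means the relative position of $s_f$ along $J(\alpha)$ must be tracked on both sides of $C_k$. Once this bookkeeping is handled, the arguments of Lemma~\ref{lem:vld} transfer verbatim.
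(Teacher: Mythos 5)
Your acyclicity argument (hence forest, and tree in the bounded case) is essentially the paper's and is fine. The genuine gap is in the second claim, that a face $f$ cannot be incident to two compound arcs $\alpha_1,\alpha_2$ of the same site $s$. Your plan — assemble a branch $\beta$ of $\partial f$ with the outward extensions of $J(\alpha_1)$ and $J(\alpha_2)$, close the loop along $\Gamma$, and invoke Observation~\ref{obs:inversecycle} — does not go through. First, a cycle closed along $\Gamma$ cannot have the label $s$ on its exterior for \emph{all} of its arcs: $\Gamma$ plays the role of $J(s,s_\infty)$ with $s$ on the $D_\Gamma$ side, i.e.\ on the \emph{interior} of any such cycle, so the hypothesis of Observation~\ref{obs:inversecycle} fails on the $\Gamma$-arc by construction. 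Second, the two extended curves need not reach $\Gamma$ disjointly; in fact they \emph{must} intersect, since otherwise $\VR(s,\{s,p_i,p_j\})$ (order-$k$ case) or $\VR(s,P_{\alpha_1}\cup P_{\alpha_2}\cup\{s\})$ ($k$-site case) would be disconnected, violating axiom A1. So your ``cycle'' would in general self-intersect, and the orientation check you defer cannot be carried out on it.

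That forced intersection is not bookkeeping to be handled at the end — it is the engine of the paper's proof. The actual argument is a case analysis on where the two $s$-related curves ($J(s,p_i)$ and $J(s,p_j)$ for the order-$k$ case, $J(\alpha_1)$ and $J(\alpha_2)$ for the $k$-site case) can meet: an intersection on the far side of $B_1$ produces an illegal cycle (label $s$ outside) with $B_1$; an intersection on the far side of $B_2$ produces one with $B_2$, after extending the $s$-related arcs of $B_2$ until they hit the two curves (using the illegal non-intersection pattern of Figure~\ref{fig:bisector-0-intersection}(b)); and an intersection inside $f$ is excluded because two bisectors contributing to the same compound arc intersect exactly once (else the illegal pattern of Figure~\ref{fig:bisector-2-intersection}(d) arises). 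Since the curves must intersect somewhere and every location is excluded, the contradiction follows. None of these ingredients — the guaranteed intersection from A1, the single-intersection property along a compound arc, or the $B_1$/$B_2$ case split — appear in your sketch, so the step you label as ``a little care'' is in fact the entire proof of the second claim and is missing.
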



\begin{proof}
  $\vld(C_k)$ may not contain cycles because $\VR(s,S_C), s\in S_C$,
  cannot be enclosed by $C_k$, as 
  in the proof of 
  Lemma~\ref{lem:vld}. For the same reason, any face of
  $\vld(C_k)$ must be incident to a bisector arc.
  %
  Thus,  $\vld(C_k)$ is a forest whose leaves are incident to
  the endpoints of 
  compound arcs.
  It remains to show that no face of  $\vld(C_k)$  can be incident to
  a pair of compound arcs  of the same site $s\in S_c$. 

  Suppose, for the sake of contradiction,  that a face $f$ is incident
  to two compound arcs 
  $\alpha,\alpha'\in C_k$ of the same site $s\in S_C$ ($s=s_f$).
  We first consider an order-$k$ cycle, see 
  Figure~\ref{fig:lemma:vld-k}. 
  Arcs $\alpha$ and $\alpha'$ consist of bisector pieces in
  $J(s,p_i)$, $p_i\in P$.
  Any two of these $s$-related bisectors $J(s,p_i)$, $J(s,p_j)$ 
  must intersect at least once, as otherwise 
  $\VR(s,\{s,p_i,p_j\})$ would be disconnected, violating axiom A1.
  Furthermore, any two $J(s,p_i)$ and  $J(s,p_r)$ 
  contributing to the same compound arc must intersect
  exactly once, because if they intersected twice, they would intersect
  under an illegal pattern of
  Figure~\ref{fig:bisector-2-intersection}(d), see Figure~\ref{fig:lemma:vld-k}(c).
 
 \begin{figure}
 		\centering
		\includegraphics[scale=0.85]{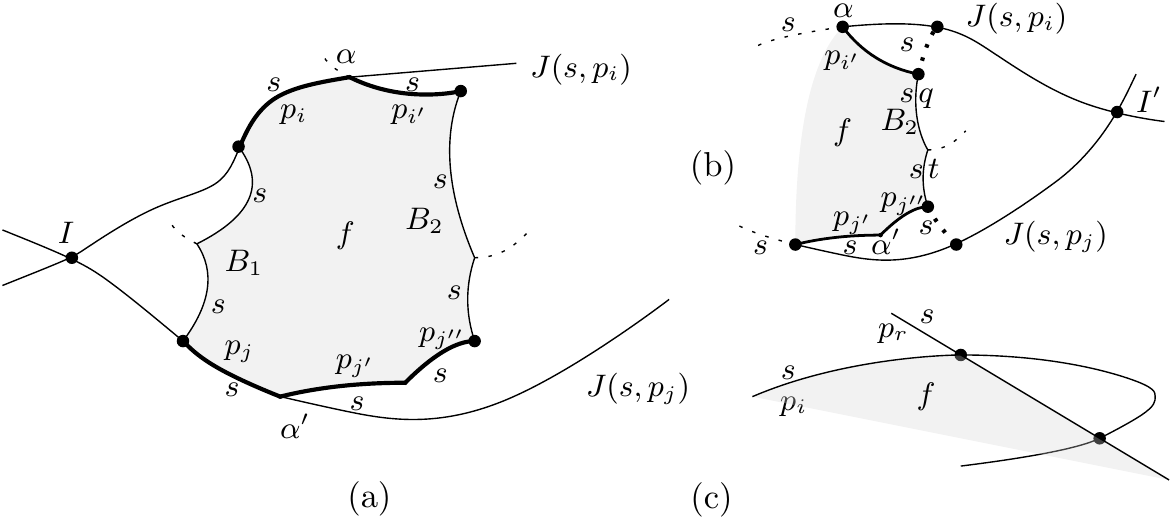}
		\caption{Proof of Lemma~\ref{lem:vld-k} for an
                  order-$k$ Voronoi-like cycle; face $f$ is
                  shown shaded.}
		\label{fig:lemma:vld-k}
              \end{figure}
              
  Consider the two branches of  $\partial
          f\setminus\{\alpha,\alpha'\}$, see
          Figure~\ref{fig:lemma:vld-k}. Choose one such brunch, say
          $B_1$, 
          and let $\alpha_i\subset J(s,p_i)$ and  $\alpha_j\subset
          J(s,p_j)$ be the bisector arcs of $\alpha$ and $\alpha'$
          respectively incident to the endpoints of $B_1$.
          If  $J(s,p_i)$ and  $J(s,p_j)$ intersect at a point $I$ at
          opposite side of $B_1$ as $\alpha_i$ and $\alpha_j$, then we
          have a cycle formed by $B_1$ and the pieces of $J(s,p_i)$
          and  $J(s,p_j)$ incident to $I$  that has the label $s$
          outside.
          But such a cycle cannot exist, by Observation~\ref{obs:inversecycle}.
          Thus, $I$ cannot exist  and  $J(s,p_i)$, $J(s,p_j)$ must
          intersect at a point $I'$ on the other side of $B_1$.

          Bisector $J(s,p_i)$ (resp. $J(s,p_j)$)  cannot enter face $f$ because
          otherwise $J(s,p_i)$ (resp. $J(s,p_j)$) would intersect
          twice with another $s$-related bisector contributing to
          arc $\alpha$ (resp. $\alpha')$, which is not possible as
          claimed above. Thus, 
          $I'$ 
          cannot lie within $f$.

    
          Consider the other brunch $B_2$ of  $\partial
          f\setminus\{\alpha,\alpha'\}$ and expand the arcs 
          incident to its endpoints until one hits  $J(s,p_i)$ and the
          other hits $J(s,p_j)$, see Figure~\ref{fig:lemma:vld-k}(b).  The bisectors constituting
          $B_2$ are $s$-related, thus, they must intersect $J(s,p_i)$
          and $J(s,p_j)$, as otherwise the illegal pattern of
          Figure~\ref{fig:bisector-0-intersection}(b) would appear.
          Suppose now that $J(s,p_i)$ and  $J(s,p_j)$ intersect at a point $I'$ at
          the opposite side of $B_2$ as $f$.
          Then an illegal cycle with the label $s$ outside is
          constructed by the expanded brunch $B_2$ and the pieces of $J(s,p_i)$  
          and  $J(s,p_j)$ incident to $I'$, concluding that $I'$ is
          not possible either, by Observation~\ref{obs:inversecycle}.
          We derive a contradiction as $J(s,p_i)$
          and $J(s,p_j)$ must intersect at least once.
          Thus, each face of $\vld(C_k)$ must be incident to exactly one
          order-$k$ arc of $C_k$.

          Suppose now that $C_k$ is a $k$-site Voronoi-like
          cycle and face $f$ is incident to compound arcs $\alpha$ and
          $\alpha'$. Consider the curves $J(\alpha)$  and
          $J(\alpha')$, which  can not 
          intersect $B_1$ nor $B_2$ because
          otherwise an illegal cycle, having the label
          $s$ outside, would be created  contradicting
          Observation~\ref{obs:inversecycle}.
          (In Figure~\ref{fig:J(a)}(a) an illegal cycle would be
          created if  $J(\alpha)$ turned to intersect $B_1$).
          Furthermore, $J(\alpha)$ and $J(\alpha')$  must intersect otherwise
          $\VR(s,P_\alpha\cup P_{\alpha'}\cup\{s\})$ would be
          disconnected. 
          But then an illegal
          cycle,  with the label
          $s$ outside, would be created between the intersecting pieces
          of  $J(\alpha)$ and $J(\alpha')$, and  $B_1$ or $B_2$,
          contradicting Observation~\ref{obs:inversecycle}.
\end{proof}

Given Lemma~\ref{lem:vld-k}, the remaining claims of
Theorem~\ref{thrm:vld} can be derived as in  Section~\ref{sec:cycle}.
Let $J(s_\alpha,P)$ denote the bisector curve associated with a
compound arc $\alpha$, $s_\alpha\in S_C$.
For a $k$-site cycle,
$J(s_\alpha,P)=\partial\VR(s_\alpha,P\cup\{s_\alpha\})$.
%
For an order-$k$ cycle,  $J(s_\alpha,P)=\partial
\freg(\alpha,P\cup\{s_\alpha\})$, where
$\freg(\alpha,P\cup\{s_\alpha\})$ denotes the face of the
farthest Voronoi region of $s_\alpha$, which is incident to arc
$\alpha$. 
In both cases $J(\alpha)=J(s_\alpha,P_\alpha)$.

The curve  $J(s_\alpha,P)$ is
expensive to compute, however, we never need to entirely compute it.
Instead of $J(s_\alpha,P)$, we use $J(s_\alpha,\tilde P_\alpha)$, where
$P_\alpha\subseteq \tilde P_\alpha\subseteq P$, and  $|P_\alpha| \leq
|\tilde P_\alpha| \leq |P_\alpha|+2$.
$J(s_\alpha,\tilde P_\alpha)$ is readily available from
$J(\alpha)$  and the two neighbors of $\alpha$ at its insertion time in the 
current Voronoi-like cycle.
Using $J(s_\alpha,\tilde P_\alpha)$  in the place of the $p$-bisectors
of Section~\ref{sec:vld} the same essentially incremental algorithm
can be applied  on the compound arcs of $c_k$.
Some properties of  $J(s_\alpha,\tilde P_\alpha)$ in the
case of an order-$k$ cycle are given in \cite{JP20arxiv}.

\subsection{Computing a Voronoi-like graph in an order-$k$ Voronoi face}
We now review an example by Junginger and Papadopoulou~\cite{JP19} when $C_k$ is the boundary
of  a face $f$ of an order-$k$ Voronoi region.
It is  known that $\vld(C_k)$  can be computed in
linear-expected time~\cite{JP20arxiv},
but an even simpler technique can be derived by computing the
Voronoi-like graph of an appropriately defined Voronoi-like cycle $C$~\cite{JP19}.
In fact, 
the Voronoi-like graph of any Voronoi-like cycle  $C'$, between $C$ and
$\hat C$, turns out fully sufficient. 

Let $f$ be a 
face of an order-$k$
Voronoi region of a set $H$ of $k$ sites.
Let $S_f$ denote the set of sites that, together with the sites in
$H$, define the boundary $\partial f$.
The graph $\vld(\partial f)$ gives  the order-$(k{+}1)$
Voronoi subdivision within $f$, which is the Voronoi diagram
$\V(S_f)$, truncated within $f$, i.e., $\V(S_f)\cap f$.



\begin{description} 
\item Computing the Voronoi diagram $\V(S_f)\cap f= \vld(\partial f)$ \cite{JP19}.
\begin{itemize}
\item Given $\partial f$, and any $h\in H$,  compute an $h$-cycle $C$ as
implied by the order of sites along the boundary of $f$. Note that $C$
encloses the Voronoi region $\VR(h,S_f)$, which in turn encloses
$f$. $\VR(h,S_f)$ is not known, however, $C$ can be derived directly 
from $\partial f$.

\item Run the randomized incremental technique of \cite{JP18}
 on $C$ in linear
  expected time  (see Section~\ref{sec:cycle}). It
will compute $\vld(C')$
for some $h$-cycle between $C$ and $\hat C$.
\item Truncate $\vld(C')\cap f$. No matter which $h$-cycle is computed, $\vld(C')\cap f=
  \V(S_f)\cap f$.

\end{itemize}
\end{description}

The  claim follows by  
the fact that $R(\alpha',\hat C)
\cap f=\emptyset$, for any $\alpha'\in \hat C\setminus C'$, and
$C\setminus C'\subseteq \hat C\setminus C$. Thus,
$\vld(\hat C)\cap f= \vld(\hat C, \hat C\cap C')\cap f = \vld(\hat C, \hat C\cap C)\cap f$.

\section{Updating a constraint Delaunay
       triangulation}\label{sec:CDT}
      We give an example of a Voronoi-like cycle $C$, which does  not
     correspond to a Voronoi
     region, but we need to compute the adjacencies of the Voronoi-like
     graph $\vld(C)$.
     The problem appears in the incremental construction of a \emph{constraint Delaunay
       triangulation} (CDT), a well-known variant of the Delaunay
     triangulation, in which a given set of segments is constrained
     to appear in the triangulation of a point set $Q$, which includes
     the endpoints of the segments, see \cite{SB15} and references therein.
     
     Every edge of the CDT is either an input segment or is
     \emph{locally Delaunay} (see Section~1).
     The incremental construction to compute a CDT,
     first constructs an ordinary
       Delaunay triangulation of the points in $Q$, and then inserts segment
      constraints, one by one, updating the triangulation after
      each insertion.
      Shewchuk and  Brown~\cite{SB15} gave an expected
      linear-time algorithm to perform each update. 
      Although the algorithm is summarized in a pseudocode, which could then
      be directly implemented,
      the algorithmic description is quite
      technical having to make sense of self-intersecting polygons,
      their triangulations, and other exceptions.
      We show that the problem corresponds exactly to computing (in
      dual sense) the
      Voronoi-like graph of a Voronoi-like cycle. 
      Thus, a very simple randomized incremental construction, with
      occasional calls to Chew's algorithm~\cite{Chew90} to delete a
      Voronoi region of points, can be derived.
      Quoting from \cite{SB15}: incremental segment insertion is
      likely to remain the most used CDT construction algorithm, so it
      is important to provide an understanding of its performance and
      how to make it run fast.
      We do exactly the latter in this section.


        \begin{figure}
	\centering 
	\includegraphics[scale=0.9]{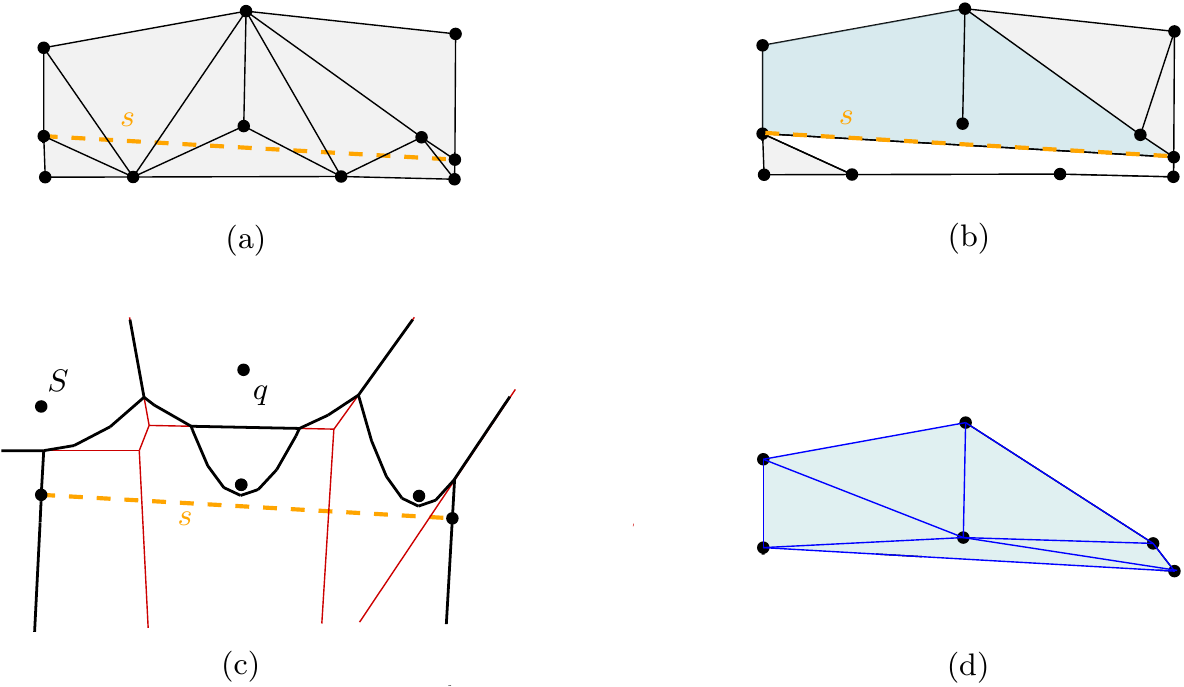}
	\caption{Point set from \cite[Fig.~3]{SB15}. (a) The
          given CDT and the segment $s$ superimposed; (b) the cavity
          $P$ in blue; (c) $\vld(C)$ in red,  $C=\partial \VR(s,
          S\cup \{s\})$ in black; (d) the re-triangulated  $P$.
        }
	\label{fig:cdt}
      \end{figure}

       When a new constraint segment $s$ is inserted in a CDT, the
       triangles, which get 
       destroyed by that segment, are  identified and
       deleted~\cite{SB15}. This creates two \emph{cavities} that need to be
       re-triangulated using \emph{constrained Delaunay triangles},
       see Figure~\ref{fig:cdt}(a),(b), borrowed from \cite{SB15},
       where one cavity is shown shaded (in light blue) and the other unshaded.
       The boundary of each cavity need not 
       be  a simple polygon. 
       However, each cavity implies a Voronoi-like cycle,
       whose Voronoi-like graph re-triangulates the cavity, see Figure~\ref{fig:cdt}(c),(d).

        \begin{figure}[h]
	\centering 
	\includegraphics[scale=0.9]{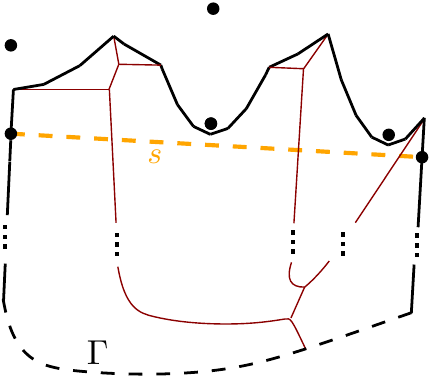}
        \caption{The Voronoi-like cycle $C$ and $\vld(C)$ (in red) for the
          example of Fig.~\ref{fig:cdt}.}
	\label{fig:cdt-vld}
      \end{figure}

      Let  $P=(p_1,p_2,\ldots, p_n)$ denote one of the cavities, where $p_1 \ldots p_n$
      is the sequence of cavity vertices in counterclockwise order, and  $p_1,p_n$ are the endpoints of $s$.
      Let $S$ denote the corresponding set of points ($|S|\leq n$) and
      let $\J_s$ denote the underlying bisector system
      involving the segment $s$ and points in $S$.
      Let  $C$ be the $s$-cycle in $\arr(\J_S\cup\Gamma)$ that has one $s$-bisector
      arc for each  vertex in $P$, in the same order as $P$, see
      Figure~\ref{fig:cdt-vld}.
      Note that one  point in $S$ may contibute more than one arc in
      $C$.

    \begin{lemma}
        \label{lem:C}
      The  $s$-cycle $C$ exists and can be derived from $P$ in linear time.
    \end{lemma}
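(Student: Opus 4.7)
The plan is to construct $C$ in a single counterclockwise sweep of the cavity $P$, producing one bisector arc $\alpha_i \subseteq J(s,p_i)$ for each cavity vertex $p_i$, and gluing consecutive arcs at intersection vertices $v_i = J(s,p_i) \cap J(s,p_{i+1})$.

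First I would establish that for each consecutive pair $(p_i, p_{i+1})$ the two bisectors $J(s,p_i)$ and $J(s,p_{i+1})$ intersect at a point $v_i$ situated on the cavity side of the cavity edge $p_i p_{i+1}$. The key geometric input is that each such cavity edge was an edge of a CDT triangle that survived the insertion of $s$; hence it is locally Delaunay with respect to the third vertex of that triangle and, after the deletion, with respect to the sites (including $s$) now facing it from the cavity interior. From this, a circle through $p_i$ and $p_{i+1}$ that is tangent to $s$ from the cavity side exists, and its center is precisely the desired intersection $v_i$. By construction $v_i$ is equidistant from $s, p_i, p_{i+1}$ and the two incident arcs of $J(s,p_i)$ and $J(s,p_{i+1})$ coincide locally with edges of $\V(\{s,p_i,p_{i+1}\})$, so $v_i$ is a degree-2 locally Voronoi vertex in the sense of Definition~\ref{def:cycle}.

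Next I would define $\alpha_i$ as the portion of $J(s,p_i)$ between $v_{i-1}$ and $v_i$ (indices modulo $n$) and take $C = \bigcup_i \alpha_i$. Walking along $C$, the label $s$ appears on the interior side of every arc, since the cavity (which contains $s$) is on that side by construction, so $C$ is an $s$-cycle. The endpoints $p_1$ and $p_n$ of $s$ are handled by extending the corresponding arcs along the rays perpendicular to $s$ at its endpoints, closing $C$ either as a bounded cycle or, if such rays reach $\Gamma$, as an unbounded $s$-cycle with the appropriate $\Gamma$-arcs. The construction visits each cavity vertex a constant number of times and computes one bisector arc and one intersection per step, so the total running time is $O(n)$.

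The main obstacle I expect is the fact, emphasized by Shewchuk and Brown, that the cavity boundary $P$ need not be a simple polygon: the same point of $S$ may appear multiple times in the sequence, producing several distinct bisector arcs in $C$. The task is to verify that even in such degenerate configurations the consecutive intersections $v_i$ remain well-defined and correctly located, so that $C$ is a simple cycle in the arrangement $\arr(\J_S \cup \Gamma)$. This reduces to a local analysis at each pinched vertex, using the transversality axiom (A4) and the admissibility of the bisector system $\J_s$ together with the local Delaunay property of the surviving cavity edges.
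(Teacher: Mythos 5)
Your overall architecture matches the paper's: one arc of $J(s,p_i)$ per cavity vertex, consecutive arcs glued at points equidistant from $s$ and two cavity vertices, existence derived from an empty circle tangent to $s$, and a single linear-time sweep. The paper works per vertex (a circle through $p_i$ tangent to $s$ containing neither $p_{i-1}$ nor $p_{i+1}$) where you work per edge (a circle through $p_i,p_{i+1}$ tangent to $s$); that difference is cosmetic.

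There are, however, two genuine gaps. First, your existence argument for $v_i$ is circular: you assert that the cavity edge $p_ip_{i+1}$ is ``locally Delaunay with respect to the sites (including $s$) now facing it from the cavity interior,'' but that is precisely a property of the new, not-yet-constructed triangulation that this lemma is meant to underpin; nothing you cite about the old CDT implies it. The paper instead extracts the empty tangent circle from a property of the \emph{old} triangulation: the destroyed diagonals incident to $p_i$ were locally Delaunay \emph{and crossed by} $s$, which yields a circle through $p_i$ tangent to $s$ avoiding $p_{i-1}$ and $p_{i+1}$; the center of that circle lies on an arc of $J(p_i,s)$ whose endpoints are then forced to be intersections with $J(p_{i-1},s)$ and $J(p_{i+1},s)$. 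Second, you correctly flag that two $s$-bisectors are parabolas with a common directrix and may intersect twice, and that $p_{i-1}$ may equal $p_{i+1}$, but you defer the disambiguation to ``a local analysis at each pinched vertex,'' i.e., you do not actually show which intersection bounds $\alpha_i$ or that the chosen arcs close up into a single $s$-cycle. The paper resolves exactly this point: the relevant intersection is determined by the counterclockwise order of $P$ and can be computed by in-circle tests involving $s$ and three points. Without these two pieces your write-up establishes neither the existence of the vertices $v_i$ nor that the resulting curve is the claimed cycle.
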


       \begin{figure}
	\centering 
	\includegraphics[scale=0.9]{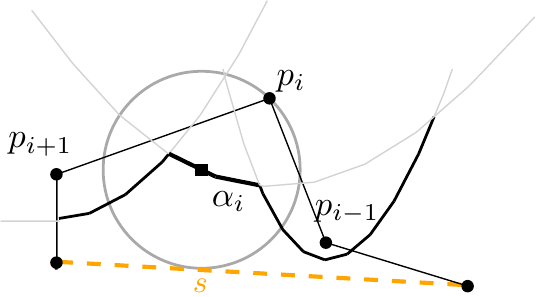}
        \caption{Proof of Lemma~\ref{lem:C}.}
	\label{fig:lem:C}
      \end{figure}
      
    \begin{proof}
      Let $p_i\in P, 1<i<n$. The diagonal of the original CDT, which bounded the triangle  incident to
       $p_ip_{i-1}$ (resp. $p_ip_{i+1}$) was a
       locally Delaunay edge intersected by $s$.
      Thus, there is a
      circle through $p_i$ that is tangent to $s$ that contains 
      neither $p_{i-1}$ nor $p_{i+1}$, see Figure~\ref{fig:lem:C}. Hense, an arc of $J(p_i,s)$ must
      exist, which contains the center of this circle, and extends
      from an intersection point of $J(p_i,s)\cap
      J(p_{i-1},s)$ to an intersection point of $J(p_i,s)\cap
      J(p_{i+1},s)$. The portion
      of $J(p_i,s)$ between these two intersections corresponds to the
      arc of $p_i$ on $C$, denoted $\alpha_i$.
      Note that the $s$-bisectors are parabolas that share the same directrix
      (the line through $s$), thus, they may intersect twice.
      It is also possible that  $p_{i-1}=p_{i+1}$. In each case,
      we can determine which intersection  is relavant to arc
      $\alpha_i$, given the counterclockwise order of $P$.
      Such questions can be reduced to \emph{in-circle tests} involving the
      segment $s$ and three points.
     \end{proof}

     Let $\cdt(P)$  denote the  constraint Delaunay triangulation of
     $P$. Its edges are either locally Delaunay or they are cavity edges on
     the boundary of $P$.
  
     \begin{lemma}
       The $\cdt(P)$ is  dual to 
       $\vld(C)$, where $C$ is the $s$-cycle derived from $P$.
     \end{lemma}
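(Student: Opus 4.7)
The plan is to establish the duality in two stages: first a structural bijection between elements of $\vld(C)$ and those of a candidate triangulation of $P$, and second a verification of the constrained Delaunay property via the classical Delaunay theorem recalled in Section~\ref{sec:intro}.

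For the structural bijection, I would invoke Lemma~\ref{lem:vld}: $\vld(C)$ is a tree whose $n$ faces $R(\alpha_i,C)$ correspond bijectively to the arcs $\alpha_i$ of $C$, hence to the cavity vertices $p_i$. Define the candidate dual $G^*$ by placing a node at each $p_i$, drawing an edge $p_ip_j$ across each edge of $\vld(C)$ that separates $R(\alpha_i,C)$ from $R(\alpha_j,C)$, and a triangle $p_ip_jp_k$ at each internal locally Voronoi vertex of $\vld(C)$ incident to $R(\alpha_i,C), R(\alpha_j,C), R(\alpha_k,C)$. By Lemma~\ref{lem:C}, the cyclic order of arcs on $C$ agrees with the counterclockwise order of $p_1,\ldots,p_n$, so the $n$ pendant edges of $\vld(C)$ (those incident to leaves on $C$) dualize to the $n$ boundary edges of $P$ in the right cyclic order, while the $n-3$ remaining interior edges of $\vld(C)$ produce $n-3$ diagonals. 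A short degree count (a tree with $n$ leaves and $n-2$ degree-$3$ internal nodes has $2n-3$ edges) confirms that $G^*$ is a triangulation of $P$ into $n-2$ triangles.

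For the CDT verification, I would proceed edge by edge. By Def.~\ref{def:local-VV}, each internal vertex $v$ of $\vld(C)$ is a vertex of $\V(\{p_i,p_j,p_k\})$ and hence is the circumcenter of $\triangle p_ip_jp_k$, with its three incident edges of $\vld(C)$ being exactly the three half-bisectors of that three-site Voronoi diagram. Across any interior edge $p_ip_j$ of $G^*$, the two adjacent dual triangles have circumcenters $v_1,v_2$ lying on $J(p_i,p_j)$, and the locally Voronoi configurations at $v_1,v_2$ pin down on which side of the line through $p_ip_j$ each circumcenter sits. A short in-circle computation, based on the geometric fact that two circumcircles sharing the chord $p_ip_j$ are nested on each side of $J(p_i,p_j)$, then shows that each triangle's opposite vertex lies outside the other triangle's circumcircle, i.e.\ the edge $p_ip_j$ is locally Delaunay. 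The boundary edges of $G^*$ are either the constraint segment $s$ or edges inherited from the original CDT, which are locally Delaunay by the cavity's construction (as used already in the proof of Lemma~\ref{lem:C}). Delaunay's theorem then promotes these local conditions to the global constrained Delaunay property, so $G^*=\cdt(P)$.

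The hard part will be the in-circle verification at each interior edge: one must argue precisely how the locally Voronoi data at $v_1$ and $v_2$ fix the sides of $p_ip_j$ on which these circumcenters lie, and then use the nesting of circumcircles through a common chord to conclude local Delaunay-ness. Once this local step is settled, the remainder of the argument is a direct combination of the structural bijection with Delaunay's theorem.
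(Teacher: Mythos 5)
Your proposal follows essentially the same route as the paper's proof: one dual node per $s$-bisector arc (hence per cavity vertex), interior edges of $\vld(C)$ dualizing to locally Delaunay edges because their two locally Voronoi endpoints involve four sites, and leaf edges dualizing to the cavity boundary, with Delaunay's theorem closing the argument. The paper states this more tersely (it simply asserts the local-Delaunay step that you flag as the ``hard part''), so your added degree count and in-circle verification are elaborations rather than a different method.
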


     \begin{proof}
      The claim derives from the definitions, Lemma~\ref{lem:C}, which
      shows the existence of $C$,
      and the properties of 
      Theorem~\ref{thrm:vld}. 
      The dual of $\vld(C)$ has one node for each $s$-bisector arc of $C$,
      thus, one node per vertex in $P$.
      An edge of $\vld(C)$ incident to two locally Voronoi
      vertices $v,u$  involves four different sites in
      $S$; thus, its dual edge is locally Delaunay.
      The dual of an edge incident to a leaf of $C$, 
      is an edge of the
      boundary of $P$. 
     \end{proof}

Next, we 
compute $\vld(C)$ in expected linear time.
Because $C$ is not the complete boundary of a Voronoi-region,
if we apply the construction of Theorem~\ref{thrm:vld},  
the computed cycle $C_n$ may be 
enclosed by $C$.
This is because of occasional split operations, given the random order
of arc-insertion,
which may create \emph{auxiliary arcs} that have no correspondence to vertices of
$P$.
However, we can use Proposition~\ref{cor:del} to delete such auxiliary
arcs and their faces. 
The sites in $S$ are points, thus, any Voronoi-like cycle
in their bisector arrangement coincides with a Voronoi region. 
By calling Chew's algorithm~\cite{Chew90} we can delete any face of any auxiliary
arc in expected time linear in the complexity of the face.

It is easy to dualize the technique to directly compute constraint
Delaunay triangles.
In fact, the cycle $C$ can remain  conceptual with no need to explicitly compute it. 
The dual  nodes  are  graph
theoretic, each one corresponding to an  $s$-bisector  arc,
which in turn corresponds to a cavity vertex.
This explains the
polygon self-crossings of \cite{SB15} if we 
draw these graph-theoretic  nodes on the cavity vertices
during the intermediate steps of the construction.


The algorithm to compute $\vld(C)$ (or its dual $\cdt(C)=\cdt(P)$) is very simple.
Let  $o=(v_1,\ldots v_n)$ be a random permutation of the vertices in
$P$, except the endpoints of $s$; let $v_1=p_1$ and
$v_2=p_n$.
Let  $P_i$ denote the sub-sequence of $P$ consisting of the first
$i$ vertices in $o$.
Let  $C_i$ denote the corresponding $s$-cycle, which has one $s$-bisector arc for each vertex in $P_i$ in the
order of $P_i$ (see Lemma~\ref{lem:C}).
In an initial  phase~1, starting at $P_n=P$, delete
vertices in reverse order $o^{-1}$, recording the
neighbors of each vertex $v_i$ in $P_i$ at the time of its deletion.
%
In phase~2, 
consider the vertices in $o$ in increasing order, starting with
$\vld(C_3)$, and using the arc-insertion operation 
(Lemma~\ref{lem:insertion})
to build $C_i$ and 
$\vld(C_i)$
incrementally, $3\leq i\leq n$.
Instead of $\vld(C_i)$, we can equivalently be constructing the dual $\cdt(C_i)$.

In more detail, let $C_3$ be the $s$-cycle obtained by the two perpendicular lines
through the endpoints of $s$, which are truncated on one side by $\Gamma$, and on
the other by  $J(v_3,s)$.  $C_3$ consists of four arcs on: $J(s,p_1), J(v_3,s),
J(s,p_n)$ and $\Gamma$, respectively. $\vld(C_3)$ has one Voronoi vertex
for $(p_1,v_3, p_n)$, see Figure~\ref{fig:cdt-steps}(a). 

Given $\vld(C_{i-1})$, we insert $v_{i}$  between its two neighboring vertices
$w,u$, which have been recorded in phase~1. Suppose $w,v_i,u$ appear
in counterclockwise order in $P$, see Figure~\ref{fig:charge}(a), where $v_i=v$. 
Let $\alpha_{i}$  denote the arc of $v_{i}$  in $C_{i}$, in particular, $\alpha_{i}$ is the component of $J(v_{i},s)\cap D_{C_{i-1}}$
whose endpoints lie between the arcs of $u$ and $w$ in $C_{i-1}$,
call them $\beta$ and $\omega$ respectively, see
Figure~\ref{fig:charge}(a), where $\alpha_i=\alpha$.
Among the three cases of the arc
insertion operation, we only consider  the split case (depicted in
Figure~\ref{fig:cases}(c) and~\ref{fig:charge}(a)),
where
$J(v_{i},s)$ \emph{splits} (intersects twice) the arc $\omega\subseteq
J(w,s)$ in $C_{i-1}$; the other cases are straightforward.
In this case, when inserting $\alpha_{i}$ to  $\vld(C_{i-1})$, the region  $R(\omega,C_{i-1})$ is split in two faces, where 
one, say $R(\omega_2,C_i)$, does not correspond
to $w$ (since it is out of order with respect to  $w,v_{i}, u$).
That is,  we compute  $\vld(C_{i}')$, where
$C_{i}'=C_{i-1}\oplus\alpha_{i}$ and includes the auxiliary arc  $\omega_2$.
%
To obtain  $\vld(C_{i})$ we can call Chew's algorithm to delete $R(\omega_2,C_i')$,
thus, restore $C_{i}$ to its original definition.
The increase to the time complexity of step $i$ 
is expected $O(|R(\omega_2,C_i)|$.
This is 
not covered by the argument of \cite{JP20arxiv}, which proves
the expected constant time complexity of step $i$.
However, 
by deleting auxiliary arcs,  $\vld(C_i)$ becomes
order-independent, therefore, we can  prove the
time complexity of step $i$ in simpler
terms by invoking backwards analysis.


\begin{lemma}
The time complexity of step $i$, which computes $\vld(C_i)$
enhanced by calling Chew's
algorithm to delete any generated auxiliary arc, is expected
$O(1)$. 
\end{lemma}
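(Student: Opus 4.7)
The plan is to invoke backwards analysis in the style of Chew~\cite{Chew90}, which was obstructed in Section~\ref{sec:cycle} only by the order-dependence caused by arc-splits. The first step is to observe that once every auxiliary arc is deleted at the moment of its creation, the cycle $C_i$ depends solely on the set $P_i\subseteq P$ of vertices processed so far: it consists of one $s$-bisector arc per vertex of $P_i$, arranged in the counterclockwise order inherited from $P$ (cf.\ Lemma~\ref{lem:C}). Hence $\vld(C_i)$ is order-independent, and the hypothesis required by standard backwards analysis is met.

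I would then fix $P_i$ and regard $v_i$ as uniformly distributed over the $i-2$ non-endpoint vertices of $P_i$. Let $T(v)$ denote the total work charged to $v$ when $v$ is the last vertex of $P_i$ to be inserted: namely, the arc-insertion cost bounded by Lemma~\ref{lem:insertion} plus, in the split case, the Chew-deletion cost for the resulting auxiliary face. The expected cost of step $i$ is then
\[
\frac{1}{i-2}\sum_{v\in P_i\setminus\{p_1,p_n\}} T(v),
\]
and it suffices to prove that $\sum_v T(v)=O(i)$.

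For this bound, I would charge $T(v)$ to the arc $\alpha_v$ of $v$ in $\vld(C_i)$, together with the $O(1)$ arcs bounding its face. Lemma~\ref{lem:insertion} bounds the arc-insertion cost by the complexity of the merge curve $J(\alpha_v)$ plus, in the split case, the smaller of the two pieces into which an arc $\omega$ of $C_{i-1}$ is split. Chew's deletion cost is linear in the complexity of the auxiliary face, which equals that smaller piece; so both contributions can be absorbed by a constant-size neighborhood of $\alpha_v$ in $\vld(C_i)$. Since $\vld(C_i)$ is a tree of complexity $O(i)$, a standard amortized-charging argument then yields $\sum_v T(v)=O(i)$, and hence expected $O(1)$ per step.

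The main obstacle I anticipate is the charging associated with auxiliary-arc deletions. The face removed by Chew's algorithm is ``small'' in the sense that its complexity equals the minimum of the two split pieces, but it is not itself a face of $\vld(C_i)$, so it must be charged indirectly. I would handle this by arguing that, at the moment of $v$'s insertion, the auxiliary face is bounded by a constant-size portion of $\partial R(w,C_{i-1})$ in the vicinity of $\alpha_v$; by order-independence this neighborhood may be reinterpreted in terms of $O(1)$ arcs of $\vld(C_i)$, paralleling the analysis of \cite{JP20arxiv}.
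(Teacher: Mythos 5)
Your setup is the same as the paper's: after eliminating auxiliary arcs at creation time, $C_i$ and $\vld(C_i)$ depend only on the set $P_i$, so backwards analysis applies, and the goal becomes $\sum_{v} T(v)=O(i)$ over all choices of the last vertex $v$. The arc-insertion part of $T(v)$ is also handled correctly: summing $|R(\alpha_v,C_i)|$ over all $v$ gives $O(|\vld(C_i)|)$ because the faces of the tree have total complexity $O(i)$.

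The gap is in your treatment of the Chew-deletion cost. You assert that the auxiliary face ``can be absorbed by a constant-size neighborhood of $\alpha_v$ in $\vld(C_i)$,'' and your fallback is that it ``is bounded by a constant-size portion of $\partial R(w,C_{i-1})$.'' Neither claim is true: the auxiliary face $R(\omega_2,C_i\oplus\omega_2)$ is the smaller of the two split pieces, which for a single step can have complexity $\Theta(i)$, and it is not confined to any $O(1)$ neighborhood of $\alpha_v$. What is actually needed is a \emph{global} double-counting bound: $\sum_v |R(\omega_2,C_i\oplus\omega_2)|=O(i)$, where the sum is over all choices of the last vertex. The paper achieves this by charging one unit, on behalf of $v$, to every vertex of $\vld(C_i)$ that would be deleted if $\omega_2$ were inserted into $\vld(C_i)$, and then proving separately (Lemma~\ref{lem:charge}) that each vertex $t$ of $\vld(C_i)$ can receive at most two such charges over all $v$. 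That lemma is genuinely geometric, not combinatorial: it uses the Delaunay circle $C_t$ of $t$, the partition of $C_t$ into three arcs by its defining sites, the fact that the constraint segment $s$ must cross one of these arcs to be visible from the defining sites, and the fact that the diagonal $v u$ created by inserting $v$ must cross another arc without obstructing visibility of $s$ --- leaving room for at most two charging vertices per $t$. Without an argument of this kind (or some substitute bounding the total would-be-deleted incidences), your amortization does not go through, and this is precisely the part of the analysis that is not ``standard.''
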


\begin{proof}
Since $C_{i-1}$ contains no auxiliary arcs, step $i$ can be performed
in time proportional to $|R(\alpha_{i},C_{i}')|$ +
$|R(\omega_2,C_{i}')|$, where $C_{i}'=C_{i-1}\oplus\alpha_{i}$, and
$\omega_2$ is the auxiliary arc when 
inserting $\alpha_{i}$ to  $\vld(C_{i-1})$. 
The first term $|R(\alpha_{i},C_{i}')|\leq
|R(\alpha_{i},C_{i})|$.
The second term can be expressed as 
$|R(\omega_2,C_{i}\oplus \omega_2|)$, i.e, the face complexity of $\omega_2$, if we
insert the arc $\omega_2$ to $\vld(C_{i})$. We charge 1 unit, on
behalf of  $v_{i}$, to any vertex of $\vld(C_{i})$
that would get deleted if we inserted the arc $\omega_2$.

Let $V_{i}=\{v_3\ldots v_{i}\}$. 
Any vertex in $V_{i}$ is
equally likely to be the last one considered at step $i$.
Thus, we can add up the time complexity of step $i$ when considering  
each vertex in $V_{i}$ as last, and take the average.
The total is  $O(|\vld(C_{i})|)$  for the first term, plus the total
number of charges for the second.
By the following lemma the total number of charges is also
$O(|\vld(C_{i})|)$.
Therefore, the average time complexity is $O(1)$.
\end{proof}

     \begin{figure}
	\centering 
	\includegraphics[scale=0.9]{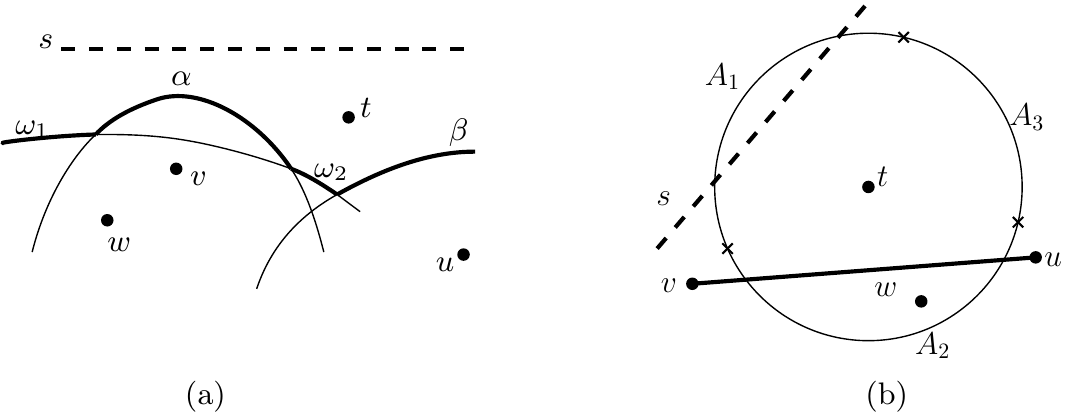}
        \caption{Proof of Lemma~\ref{lem:charge}. }
	\label{fig:charge}
      \end{figure}

\begin{lemma}
\label{lem:charge}
At step $i$,  any vertex  of  $\vld(C_{i})$ can be charged
at most twice. 
\end{lemma}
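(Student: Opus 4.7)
The plan is a direct charge-counting argument. Fix a vertex $x \in \vld(C_i)$ and count the $v \in V_i$ whose hypothetical last-insertion would charge $x$ via its auxiliary arc. Since the underlying bisectors here are perpendicular bisectors of points, $x$ is the circumcenter of the three points $a,b,c \in P_i$ whose corresponding arcs bound the three faces of $\vld(C_i)$ meeting at $x$. Our goal is to show that the set $\{v \in V_i : v \text{ charges } x\}$ has size at most two.

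First I will characterize the charging condition combinatorially: $v$ charges $x$ exactly when $v$'s insertion triggers the split case (Figure~\ref{fig:cases}(c)), producing an auxiliary arc $\omega_2(v) \subseteq J(w,s)$ with $w = w(v)$ a neighbor of $v$ in the sequence $P_i$, and moreover $x$ lies in the interior of the face $R(\omega_2(v), C_i \oplus \omega_2(v))$. Translating to the point-bisector setting, the latter forces $w$ to lie strictly inside the circumcircle through $a,b,c$ centered at $x$, since the face of $\omega_2(v)$ is the locus where $w$ dominates every other arc of $C_i$ under point-bisectors; moreover, the endpoints of $\omega_2(v)$ must lie on the arcs of $C_i$ adjacent to $\alpha_i$, which further restricts how $w$ relates to the arcs incident to $x$.

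The main step is to associate each admissible pair $(v,w)$ to one of the three arcs of $\vld(C_i)$ incident to $x$, and then to exclude one of these three incident arcs on combinatorial grounds. For the association, the merge curve separating $R(\omega_2(v), C_i\oplus\omega_2(v))$ from the old faces passes through $x$'s neighborhood and is supported by a bisector $J(w,\cdot)$ tied to exactly one of the three sites $\{a,b,c\}$; this assigns $(v,w)$ to the corresponding incident arc, and distinct admissible $w$'s receive distinct arcs. To reduce the count from three to two, I will invoke the cyclic order of $P_i$: the incident arc at $x$ that lies on the $s$-side (the one whose dominance region against $s$ borders $x$'s neighborhood directly, shown on the right in Figure~\ref{fig:charge}(b)) cannot host an admissible $w$, because realizing the split that would produce $\omega_2(v) \ni x$ for that arc would force $w,v,u$ to appear in $P_i$ in an order inconsistent with the counterclockwise enumeration of $P$ recorded by $C$.

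The hardest part will be precisely this final exclusion. I expect it to follow by inspection of Figure~\ref{fig:charge}(b) together with the definition of $w(v)$ as the $P_i$-neighbor along which the arc $\alpha_i$ re-enters $C_{i-1}$: both pieces $\omega_1$ and $\omega_2$ of the split arc must sit on the sides of $\alpha_i$ compatible with the counterclockwise order of $P$, and this consistency requirement rules out exactly one of the three otherwise-admissible local configurations at $x$. Once a single admissible $w$ is fixed, the corresponding charging $v$ is uniquely determined as the unique neighbor of $w$ in $P_i$ whose removal restores the un-split arc, so the bound on $(v,w)$-pairs translates directly to a bound on charging $v$'s, yielding the claim.
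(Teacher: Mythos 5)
Your high-level shape (three candidate ``slots'' at the charged vertex, exclude one, show the assignment of charges to the remaining slots is injective) matches the paper's, but both load-bearing steps are left unestablished, and the assignment mechanism you propose does not work as stated. A vertex $x$ is charged by $v$ precisely when $x$ lies in the \emph{interior} of $R(\omega_2, C_i\oplus\omega_2)$, i.e.\ $x$ is among the vertices that would be deleted by inserting the auxiliary arc $\omega_2$. The merge curve $\partial R(\omega_2, C_i\oplus\omega_2)$ therefore need not pass anywhere near $x$, so there is no bisector $J(w,\cdot)$ ``supported in $x$'s neighborhood'' from which to read off one of the three edges of $\vld(C_i)$ incident to $x$. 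Consequently the claim that ``distinct admissible $w$'s receive distinct arcs'' --- which is the entire content of the lemma --- is an unproved assertion resting on an ill-defined map. Your exclusion step is likewise only promised (``I expect it to follow by inspection''), and it is phrased purely in terms of the cyclic order of $P_i$, which by itself does not obviously rule out a configuration.

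The paper's proof supplies concrete mechanisms for both steps, and they live on the circumcircle $C_t$ rather than on the edges of $\vld(C_i)$ at $t$. The three defining points of $t$ cut $C_t$ into three circular arcs $A_1,A_2,A_3$; since $t$ is a vertex of the constrained Delaunay structure, $s$ must be visible from $C_t$'s defining points and crosses exactly one arc, say $A_1$. If $v$ charges $t$, then its neighbor $w$ lies strictly inside $C_t$ while $v$ and $u$ do not, so the cavity diagonal $vu$ enters and exits $C_t$; because $vu$ cannot obstruct the visibility between $s$ and the defining points, it must cross one of $A_2,A_3$. Injectivity then comes from visibility again: two diagonals crossing the same arc $A_j$ would block each other's view of $s$, which is impossible since every diagonal of the cavity must see $s$. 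Hence at most one charge per arc $A_2,A_3$, i.e.\ at most two charges. To repair your write-up you would need to replace the ``merge curve near $x$'' assignment with the diagonal-crosses-circumcircle-arc assignment and supply the two visibility arguments; without them the count of two is not justified.
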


\begin{proof}
Consider a vertex $t$ of $\vld(C_{i})$ and its Delaunay circle $C_t$
passing through three vertices of $P_{i}$, indicated by crosses in
Figure~\ref{fig:charge}(b). The three vertices
partition $C_t$ in three arcs: $A_1,A_2, A_3$. The segment $s$ must
cross through (intersect twice) one of these arcs, say $A_1$, since
$s$ must be visible to $t$ and the three defining sites of  $C_t$.

Suppose $t$ is charged 
one unit by  $v\in V_{i}$.
Suppose $w,v,u$ appear consecutively counterclockwise around $P_{i}$. 
Let $\omega,\beta$ be the arcs corresponding to  $w$ and $u$,  respectively, in
$C_{i-1}$, see Figure~\ref{fig:charge}(a).
Since $t$ is charged 
one unit by  $v$, it follows that $\omega\in C_{i-1}$ gets split by the insertion
of $v$ creating an auxiliary arc $\omega_2$, and  $t$ lies in $R(\omega_2,C_{i}\oplus \omega_2)$.
That is, $w$ is enclosed by $C_t$ but  $v$ and $u$ are not.
Thus, diagonal $vu$ must intersect $C_t$, and since it cannot obstruct
the visibility between $s$ and the defining points of $C_t$, it must
cross through another arc of $C_t$, say $A_2$; diagonal $uv$ leaves
$w$ and $t$ on opposite sides.
But $s$ must be visible to diagonal $uv$,
thus,  no other diagonal
of $P_i$ can also cross through $A_2$,  obstructing the visibility of
$uv$  and $s$.
Thus, $v$ can receive at most one charge in relation to arc $A_2$.
This implies that $v$ can receive at most one more charge in total, which  corresponds to
$A_3$.
\end{proof}

Figure~\ref{fig:cdt-steps} illustrates the incremental construction
for an indicated order $o=(v_1,\ldots, v_6)$. Vertices  $v_4$ and $v_6$ coincide.
The insertion of $v_5$ causes the arc of $v_4$ to
split, see Fig.~\ref{fig:cdt-steps}(c). The result of deleting the
created auxiliary arc is shown in Fig.~\ref{fig:cdt-steps}(d);
we insert $v_6$ 
in Fig.~\ref{fig:cdt-steps}(e).
In this example, we could avoid deleting the
auxiliary arc of $v_4$, which is created by inserting  $v_5$ in  Fig.~\ref{fig:cdt-steps}(c), because it overlaps with an
arc of $C$, therefore, it is known that it will later be
re-inserted and it cannot obstruct the insertion process of other arcs.

     \begin{figure}
	\centering 
	\includegraphics[scale=0.7]{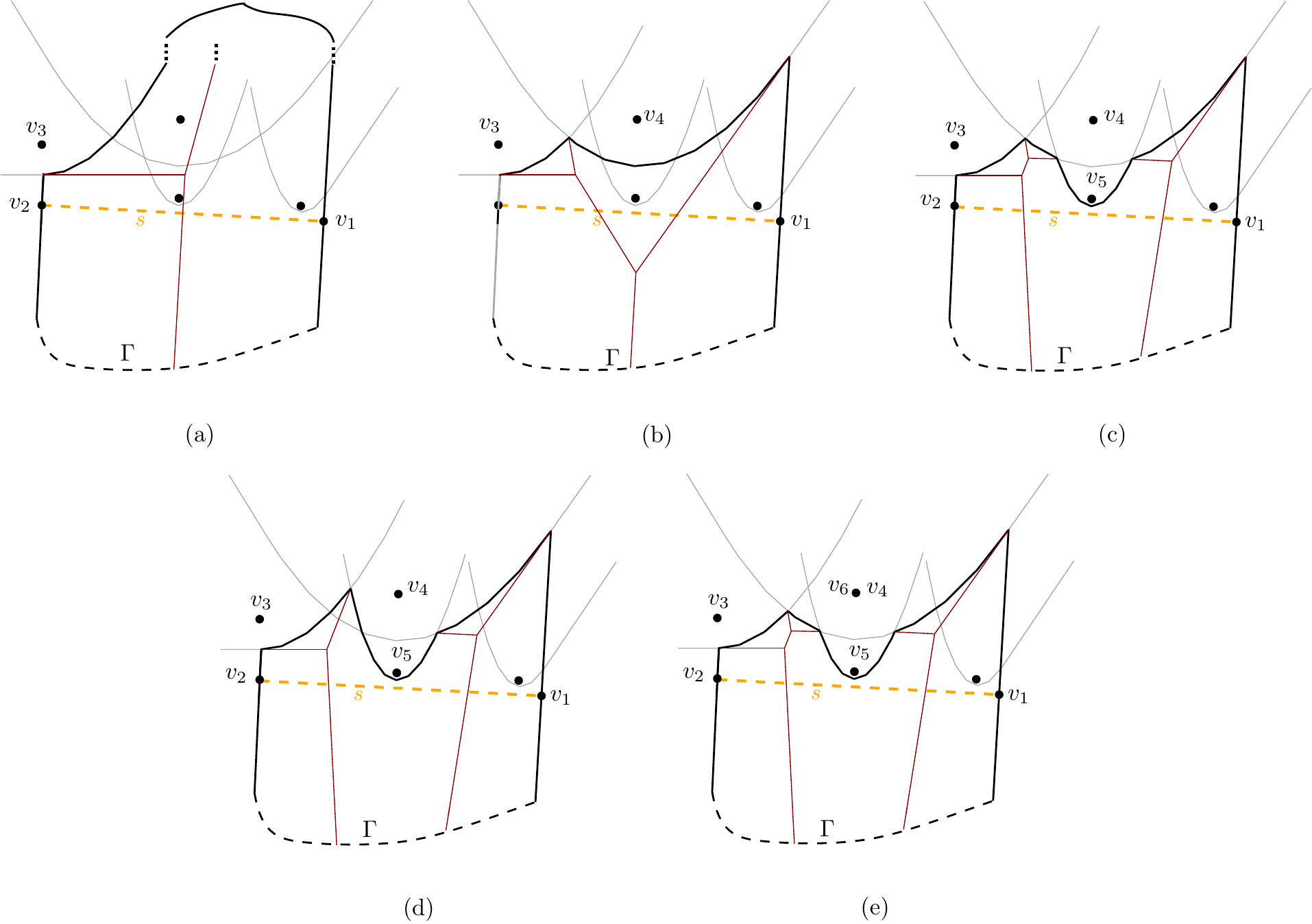}
        \caption{The incremental construction 
          following $o=(v_1,\ldots, v_6)$.
          (a) $\vld(C_3)$; (b) $\vld(C_4)$; (c)  $\vld(C_5')$;
          (d) $\vld(C_5)$; (e) $\vld(C_6)$.}
	\label{fig:cdt-steps}
      \end{figure}

      \section{Concluding remarks}
We have also considered the variant of computing, in linear
expected time, a Voronoi-like 
tree (or forest) within a simply connected domain $D$, of constant 
boundary complexity, given the ordering of some Voronoi faces along the 
boundary of $D$. In an extended paper, we will provide conditions under which the same
essentially technique can be applied. 

In future research, we are also interested in 
considering
deterministic linear-time algorithms to compute
abstract Voronoi-like trees and forests as inspired by \cite{AGSS89}.

\bibliography{voronoi}

\end{document}